\documentclass[a4paper,11pt]{article}
\pdfoutput=1

\usepackage[height=23.6truecm,width=16truecm,top=3.05truecm]{geometry}

\usepackage[skip=6pt,indent]{parskip}

\usepackage[utf8]{inputenc}
\usepackage{graphicx}
\graphicspath{{./fig/}}
\usepackage{amsmath}
\usepackage{amssymb}
\usepackage{hyperref}
\hypersetup{colorlinks,citecolor=TokiwaIro,linkcolor=RuriIro,urlcolor=RuriIro,linktocpage}
\usepackage{cite}
\usepackage{braket}
\usepackage{bm}
\usepackage{bbm}
\usepackage{color}
\usepackage[svgnames,table]{xcolor}
\usepackage{float}
\usepackage{hhline}
\usepackage{multirow}
\usepackage{indentfirst}
\usepackage{latexsym}
\usepackage{mathtools}
\usepackage{cancel}
\usepackage{slashed}
\usepackage{colortbl}
\usepackage[mathscr]{euscript}
\usepackage{enumitem}
\usepackage{subcaption}
\usepackage{comment}
\usepackage{amsthm}

\usepackage{booktabs}
\usepackage{tabularx}
\newcolumntype{Y}{>{\raggedright\arraybackslash}X}

\usepackage{tikz}
\usetikzlibrary{positioning,intersections,calc,arrows.meta}
\usepackage{quantikz}

\numberwithin{equation}{section}

\allowdisplaybreaks

\definecolor{RuriIro}{HTML}{1E50A2}
\definecolor{TokiwaIro}{HTML}{007B43}

\definecolor{dred}{rgb}{0.7,0.15,0.09}
\definecolor{dblue}{rgb}{0,0.12,0.64}
\definecolor{dgreen}{rgb}{0.2,0.51,0.19}
\definecolor{pegn}{rgb}{0.33,0.51,0.14}

\definecolor{kblue}{rgb}{0,0.48,0.73}
\definecolor{kred}{rgb}{0.73,0.25,0}
\definecolor{kgreen}{rgb}{0.48,0.73,0}

\definecolor{rgreen}{HTML}{7BAA17}
\definecolor{rred}{HTML}{AB1732}
\definecolor{rblue}{HTML}{007BBB}

\theoremstyle{plain}
\newtheorem{thm}{Theorem}[section]

\theoremstyle{remark}

\newcommand{\nn}{\nonumber}

\newcommand{\mc}{\mathcal}
\newcommand{\mr}{\mathrm}
\newcommand{\ms}{\mathsf}

\newcommand{\mbb}{\mathbb}

\newcommand{\mf}{\mathfrak}

\newcommand{\del}{\partial}

\newcommand{\dd}{\mathrm{d}}
\newcommand{\ee}{\mathrm{e}}
\newcommand{\iu}{\mathrm{i}}

\newcommand{\TT}{\mathrm{T}}

\newcommand{\tr}{\mathop{\mathrm{tr}}\nolimits}

\newcommand{\ketbra}[2]{\ket{#1}\hspace{-4pt}\bra{#2}}

\newcommand{\QCHD}{\mathrm{QCHD}}

\newcommand{\eff}{\mathrm{eff}}

\newcommand{\geo}{\mathrm{geo}}
\newcommand{\pen}{\mathrm{pen}}

\newcommand{\bx}{\bm{x}}

\newcommand{\hp}{\hat{p}}

\begin{document}

\begin{titlepage}

\begin{flushright}
\end{flushright}

\vspace{1cm}

\begin{center}

{\LARGE \bfseries
Quantum Constrained Hamiltonian Descent
}

\vspace{1cm}

\renewcommand{\thefootnote}{\fnsymbol{footnote}}
{%
\hypersetup{linkcolor=black}
Yoshihiko Abe$^{1,2,3}$\footnote[1]{yabe3@keio.jp}
and
Ryo Nagai$^{3,4}$\footnote[2]{ryo.nagai.jd@hitachi.com}
}%
\vspace{8mm}

{\itshape%
$^1${Graduate School of Science and Technology, Keio University, Yokohama, Kanagawa 223-8522, Japan}\\
$^2${Keio University Sustainable Quantum Artificial Intelligence Center (KSQAIC), Keio University, Tokyo 108-8345, Japan}\\
$^3${Quantum Computing Center, Keio University, 3-14-1 Hiyoshi, Kohoku-ku, Yokohama, Kanagawa, 223-8522, Japan}\\
$^4${Research and Development Group,
Hitachi, Ltd., Kokubunji, Tokyo 185-8601, Japan
}
}%

\vspace{8mm}

\end{center}

\abstract{
We propose Quantum Constrained Hamiltonian Descent (QCHD), a continuous-time quantum framework for optimization under smooth equality constraints. 
The unconstrained counterpart, Quantum Hamiltonian Descent (QHD), minimizes a cost function by simulating the Schr\"odinger dynamics of a particle in the potential given by the cost function.
Applying Dirac's formalism to the classical dynamics with constraints, we construct an intrinsic quantum Hamiltonian with a Laplace--Beltrami operator on the constraint manifold in a kinetic term.
Its evolution preserves the constraints exactly without penalty terms and is expected to solve quantum-dynamically the constrained optimization problem.
We analyze its relation to Quantum Hamiltonian Descent supplemented by a strong penalty term. 
For a squared-distance penalty, the effective tangential dynamics acquires an extrinsic-curvature potential and, in nontrivial normal-rotation sectors, a coupling to the normal-bundle connection. 
These embedding-dependent terms remain in the strong-penalty limit, generally yielding quantum dynamics different from QCHD. 
Numerical comparisons for circular, elliptical, and two simultaneous constraints show faster optimization and improved concentration near global minima for the initial states and finite penalty strengths considered. 
}%

\end{titlepage}

\renewcommand{\thefootnote}{\arabic{footnote}}
\setcounter{footnote}{0}
\setcounter{page}{1}

\tableofcontents

\section{Introduction}
\label{sec:introduction}

Optimization problems arise in a wide range of fields, including engineering, machine learning, and finance.
A basic method for continuous optimization, the minimization of a real-valued cost function $V(\bx)$ of continuous parameters $\bx$,
is gradient descent, which repeatedly updates the parameters along the gradient of the cost function.
When the cost function has many valleys, however, gradient descent can be trapped in one of them (a local minimum) and never reaches the global minimum.
This trapping is a major bottleneck of continuous optimization.

A promising way around this bottleneck is an optimizer that exploits quantum effects.
A representative example is Quantum Hamiltonian Descent (QHD) proposed by Leng \textit{et al.}~\cite{2023-LengEtAl}.
QHD treats the parameters $\bx$ as the position of a particle and the cost function $V(\bx)$ as its potential energy, and searches for the minimum by simulating the quantum dynamics of the particle.
Quantum tunneling can facilitate transitions between local basins and may improve global exploration, although it does not guarantee convergence to a global minimum for a general nonconvex cost function.
Another advantage of QHD is that it is a coherent optimizer, which requires no intermediate measurements.
QHD has therefore attracted considerable attention in recent years~\cite{2023-LengEtAla,2024-KushnirEtAl,2025-LengEtAl,2025-LengShi,2025-ChakrabartiEtAl,2025-PengEtAl,2026-CatliEtAl}.
Moreover, the quantum dynamics of QHD is a time-dependent Hamiltonian simulation and can in principle be implemented on a quantum computer~\cite{2026-ChenEtAla,2026-ZhaoEtAl}, which makes QHD a promising application of quantum computing.

In practice, however, one often needs a constrained search rather than a plain minimization of the cost function:
the parameters $\bx$ have to satisfy equality constraints $f^A(\bx)=0$,
and one looks for the minimum of $V(\bx)$ among the $\bx$ that satisfy them.
Typical constraints are normalization conditions, such as a unit norm of the parameters.
In portfolio optimization in finance, for example, the investment weights are normalized to sum to one, and the risk of the portfolio is minimized under this constraint.

The question in this paper is whether QHD can solve constrained optimization problems.
A natural approach is to enforce the constraints using penalty terms.\footnote{QHD has been combined with the augmented Lagrangian method, which solves a sequence of unconstrained subproblems with QHD while updating the multipliers~\cite{2025-LiEtAla,2026-WuEtAl}.}
The simplest version of the penalty approach adds a term proportional to the constraint violation, such as $\kappa\sum_A(f^A)^2$, to $V$, thereby converting the problem approximately into an unconstrained one, and applies the ordinary QHD.
We call this approach QHD+Penalty and use it as the reference for comparison in this paper.
The penalty, however, is not unique.
Both the strength $\kappa$ and the functions $f^A$ representing a given constraint can be chosen in many ways, and they have to be tuned for each problem.
One would therefore like to have a coherent algorithm that requires no such tuning.

\begin{figure}[tbp]
  \centering
  \includegraphics[width=0.92\textwidth]{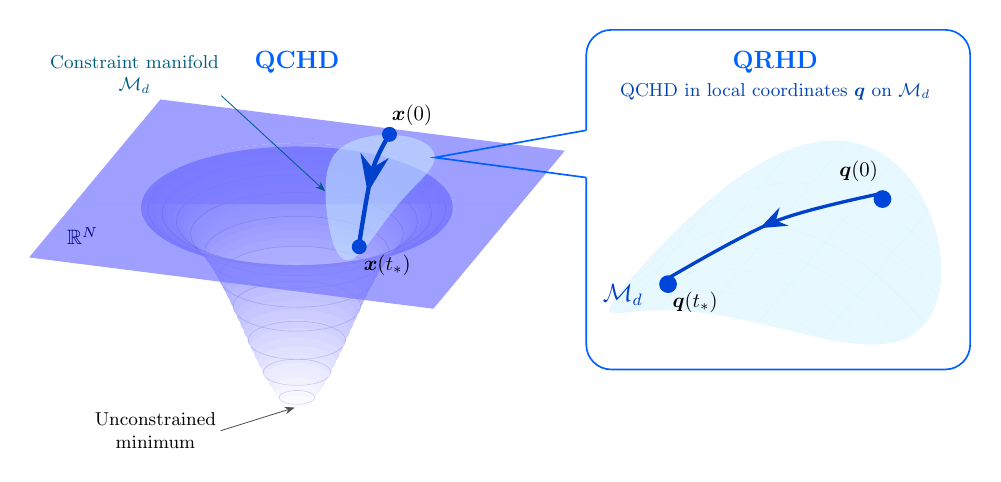}
  \caption{Schematic picture of QCHD.
    Left (QCHD): the wave function evolves toward the minimum of the potential $V$ within the constraint manifold $\mc{M}_d$ (light blue).
    Right (QRHD): the same motion described in local coordinates $\bm{q}$ on $\mc{M}_d$ corresponds to QRHD~\cite{2026-AbeNagai}.}
  \label{fig:concept}
\end{figure}

In this paper, we extend QHD to optimization under smooth equality constraints and propose Quantum Constrained Hamiltonian Descent (QCHD), a coherent quantum framework that requires no penalty term. 
The idea is that starting from a classical formulation with constraints, we apply Dirac's prescription~\cite{1950-Dirac-CJM,1958-Dirac-PRSLA,1964-Dirac} and adopt an intrinsic quantum Hamiltonian operator with a Laplace--Beltrami operator in a kinetic term determined by the induced metric on the constraint manifold. 
The resulting dynamics preserves the constraints exactly throughout the evolution. 
In local coordinates, this Hamiltonian coincides with that of Quantum Riemannian Hamiltonian Descent (QRHD)~\cite{2026-AbeNagai}, previously proposed by the authors for quantum optimization on Riemannian manifolds. 
Through constraint functions in the ambient space, QCHD is thus related to the intrinsic dynamics of QRHD on the constraint manifold directly which is the geometric framework of optimization on manifolds~\cite{1972-Luenberger-MS,1982-Gabay-JOTA,1998-EdelmanEtAl,2016-ZhangSra,2020-AlimisisEtAl,2022-DuruisseauxLeok,2023-FrancaEtAl}. 
Fig.~\ref{fig:concept} schematically illustrates the QCHD framework and its relation to QRHD.

We further analyze the relation between QCHD and QHD supplemented by a penalty term (QHD+Penalty). 
For a squared-distance penalty, we derive the leading effective tangential Hamiltonian in the strong penalty regime and identify an extrinsic-curvature potential and, in nontrivial normal-rotation sectors, a coupling to the normal-bundle connection. 
These embedding-dependent contributions remain in the strong-penalty limit, showing that intrinsic quantization (QCHD) and quantized system with a large penalty contribution (QHD+Penalty) generally lead to different quantum dynamics. 
More general penalty profiles can additionally produce position-dependent normal energy levels. 
In addition, numerical examples show faster optimization and improved concentration near constrained global minima for the initial states and finite penalty strengths.

This paper is organized as follows.
Sec.~\ref{sec:constrained} reviews the classical mechanics of constrained systems and its quantization, and derives the Hamiltonian operator of quantum mechanics on the constraint manifold.
In Sec.~\ref{sec:qchd}, we define QCHD based on the quantum mechanics developed in Sec.~\ref{sec:constrained}.
We discuss the relation between QCHD and QHD+Penalty, and also derive the effective Hamiltonian near the constraint manifold from the latter in Sec.~\ref{sec:penalty}.
Sec.~\ref{sec:numerics} compares QCHD with QHD+Penalty numerically in some examples.
Sec.~\ref{sec:conclusions} is devoted to our conclusions.

\paragraph{Notation.}

Throughout the paper we work in units with $\hbar=1$ and, unless stated otherwise, use the Einstein summation convention, in which repeated indices are summed over and the symbol $\sum$ is dropped:
\begin{align}
    \sum_{i} A_i B^i \to A_i B^i\,,
    \qquad
    \sum_{i,j} g_{ij} A^i B^j \to g_{ij} A^i B^j\,.
\end{align}
Boldface symbols denote the collection of all components of a vector or of coordinates, such as $\bm{x}=\{x^i\}_{i=1}^N$, $\bm{q}=\{q^\alpha\}_{\alpha = 1}^{d=N-M}$, and $\bm{\xi}=\{\xi^a\}_{a=1}^M$.
Our notation for coordinates, indices, and metrics are summarized in Tab.~\ref{tab:coordinates-tab} of App.~\ref{app:geometry} and the schematic picture is found in Fig.~\ref{fig:indices} of Sec.~\ref{sec:constrained}.

\paragraph{Related works.}

Related ideas have appeared recently.
Gribling \textit{et al.}~\cite{2025-GriblingEtAl} proposed a quantum interior-point method for convex optimization,
a quantum annealing algorithm that follows the ground state of a Schr\"odinger operator along the central path.
The Schr\"odinger operator is one in which the Laplacian is replaced by the Laplace--Beltrami operator of the Hessian metric of a self-concordant barrier,
and they proved lower bounds on the spectral gaps of these operators, on which the construction relies.
Their algorithm therefore differs from QHD, QRHD, and QCHD, which simulate the real-time dynamics generated by a time-dependent Hamiltonian.

Herman \textit{et al.}~\cite{2026-HermanEtAl} studied the quantum simulation of the Schr\"odinger equation
whose kinetic term is the Laplace--Beltrami operator of a Riemannian manifold.
They gave explicit algorithms with rigorous error and resource bounds for homogeneous manifolds such as spheres and simplices (a global spectral method),
and a coherent-chart method that is in principle applicable to any compact manifold under conditions on its coordinate charts.
They also proved convergence guarantees for geodesically convex cost functions on geodesically convex domains of these manifolds,
and noted that equality constraints can be imposed by replacing the Laplacian with the Laplace--Beltrami operator of the feasible manifold.
The present work differs from these works in its starting point and in the question asked.
They take the manifold, with its metric or coordinate charts, as given and ask how to simulate the Schr\"odinger equation on it efficiently.
We start from the constraint functions $f^A(\bx)=0$ in the ambient space $\mathbb{R}^N$ and construct an intrinsic quantum dynamics from explicit constraint functions in the ambient space.
The quantization of the constrained classical dynamics yields the Laplace--Beltrami operator in ambient coordinates, without introducing coordinate charts on the manifold.

\section{Classical and quantum mechanics of constrained systems}
\label{sec:constrained}

This section reviews the classical and quantum mechanics of a constrained system, which is a basis of QCHD.
Sec.~\ref{sec:classical} formulates the canonical formulation of the classical mechanics with constraints using Lagrange multipliers.
In Sec.~\ref{sec:quantization}, we quantize the constrained system and define the Hamiltonian operator on the submanifold characterized by the constraints.

\subsection{Classical mechanics of constrained systems}
\label{sec:classical}

We consider a point particle of mass $m$ in the $N$-dimensional Euclidean space $\mbb{R}^N$ moving in a potential $V(\bm{x})$, where $\bm{x} = \{x^i\}_{i=1}^N \in \mbb{R}^N$ is the position of the particle.
In the application to optimization in Sec.~\ref{sec:qchd}, $\bm{x}$ is regarded as the parameters to be optimized and $V(\bm{x})$ as the cost function.
Suppose that the motion of this particle is subject to $M (< N)$ independent constraints
\begin{align}
    f^A(\bm{x}) = 0\,,
    \qquad
    A= 1,\ldots, M\,.
    \label{eq:constraints}
\end{align}
For simplicity, we assume that the constraints are holonomic that is independent of the particle velocity $\dot{\bm{x}}$.\footnote{
The case with velocity-dependent constraints are discussed in Ref.~\cite{1996-NojiriEtAl}.}
The constraints \eqref{eq:constraints} define a $d =(N-M)$-dimensional submanifold of $\mbb{R}^N$ as 
\begin{align}
    \mc{M}_d = \{ \bm{x} \in \mbb{R}^N | f^A(\bm{x}) =0 \}\,,
\end{align}
and we call this $\mc{M}_d$ the constraint manifold.
We assume that $f^A(\bm{x})$ are smooth and that $\mathop{\mathrm{rank}}(\del_i f^A) =M$ in a neighborhood of the constraint manifold.
According to Dirac's prescription~\cite{1950-Dirac-CJM,1958-Dirac-PRSLA,1964-Dirac}, we consider the classical mechanics of this constrained system.

Let us begin with the Lagrangian formalism of this system.
With the subsequent quantization process in mind, we consider the Lagrangian following Refs.~\cite{1990-HommaEtAl-ZPC-PaF,1990-HommaEtAl-PRD,1996-NojiriEtAl}
\begin{align}
    L = \frac{m}{2} \delta_{ij} \dot{x}^i \dot{x}^j - V(\bm{x}) + \lambda_A \dot{f}^A(\bm{x})\,,
    \label{eq:classical-L}
\end{align}
where $\lambda_A$ is a Lagrange multiplier including the constraint conditions into the Lagrangian.\footnote{
Since the constraints are functions of $\bm{x}$, the Lagrangian \eqref{eq:classical-L} can also be written as
$L = \frac{m}{2} \delta_{ij} \dot{x}^i \dot{x}^j - V(\bm{x}) + (\lambda_A \del_i f^A(\bm{x})) \dot{x}^i$,
in which the last term has the same form as the coupling to a gauge potential.
The role of this term is, however, different from that of a gauge interaction.
Rather than acting as a magnetic field, $\lambda_A \del_i f^A$ restricts the particle dynamics to the tangent directions of the constraint manifold.
}
The constraint $\dot{f}^A(\bm{x}) = 0$ is imposed instead of \eqref{eq:constraints}, which turns out to simplify the canonical structure of the system~\cite{1990-HommaEtAl-ZPC-PaF,1990-HommaEtAl-PRD,1996-NojiriEtAl}.
We note that the dynamics described by the Lagrangian \eqref{eq:classical-L} is equivalent to that with constraints \eqref{eq:constraints} when the initial state satisfies $f^A=0$.
This point reappears in Sec.~\ref{sec:qchd} as a condition on the initial state.

\paragraph{Canonical formalism.}

In order to move onto the canonical formalism, let us introduce the canonical conjugate momentum
\begin{align}
    p_i \coloneqq \frac{\del L}{\del \dot{x}^i} = m \delta_{ij} \dot{x}^j + \lambda_A \del_i f^A\,,
    \label{eq:classical-p}
\end{align}
where $\del_i = \del / \del x^i$.
Since the Lagrangian \eqref{eq:classical-L} does not contain $\dot{\lambda}_A$, the conjugate momentum of the Lagrange multiplier identically vanishes 
\begin{align}
    \pi^A_\lambda \coloneqq \frac{\del L}{\del \dot{\lambda}_A} = 0\,,
    \label{eq:primary-constraint}
\end{align}
and the velocities $\dot{\lambda}_A$ cannot be expressed in terms of the conjugate momenta $\pi^A_\lambda$.
This kind of system is called a singular system or constrained system in Dirac's terminology~\cite{1950-Dirac-CJM,1958-Dirac-PRSLA,1964-Dirac}.
Once the constraint like \eqref{eq:primary-constraint} appear, they require additional constraint conditions so that the constraints are preserved in the time evolution.
We write these constraint condition in the phase space as $\phi_{\ms{A}}$, $\ms{A} = 1,\ldots ,\ms{M} < 2M$, and introduce Dirac's weak equality by $\approx$ that is an equality that holds only when the constraints are satisfied $\phi_{\ms{A}} \approx 0$.

Now we consider the canonical structure of the constrained system.
When we have constraints, the Poisson bracket $\{\bullet,\bullet\}_P$ does not preserve the constraints $\phi_{\ms{A}} \approx 0$ and does not give the canonical structure compatible with the constraints.
Namely, $\{A, \phi_\ms{A}\}_P\neq 0$ for a general function $A$ of the canonical variables.
To overcome this difficulty,
Dirac proposed the so-called Dirac bracket $\{\bullet,\bullet\}_D$~\cite{1950-Dirac-CJM,1958-Dirac-PRSLA,1964-Dirac} by 
\begin{align}
    \{A, B\}_D \coloneqq \{A, B\}_P - \{A, \phi_{\ms{A}} \}_P (C^{-1})^{\ms{AB}} \{\phi_\ms{B}, B\}_P,
    \qquad 
    C_{\ms{AB}} \coloneqq \{\phi_\ms{A}, \phi_{\ms{B}}\}_P\,.
    \label{eq:dirac-bracket-def}
\end{align}
The matrix $C_{\ms{AB}}$ is assumed to be invertible here.\footnote{
The constraints satisfying this invertible condition are called second-class.
}
For this Dirac bracket, for any $A$, it is found that 
\begin{align}
    \{A, \phi_\ms{A}\}_D = \{A,\phi_\ms{A}\}_P - \{A,\phi_{\ms{C}}\}_P (C^{-1})^{\ms{CB}}\{\phi_\ms{B},\phi_{\ms{A}}\}_P 
    =\{A,\phi_\ms{A}\}_P - \{A, \phi_\ms{A}\}_P = 0\,,
\end{align}
and the constraints $\phi_\ms{A}=0$ can be imposed consistently either before or after evaluating the brackets.
In other words, the Dirac bracket defines the Poisson structure on the constraint manifold compatible with the second-class constraints.

For the system~\eqref{eq:classical-L}, we find the two types of the constraints appear
\begin{align}
    \phi^A_1 = \pi^A_\lambda \approx 0\,,
    \qquad 
    \phi^A_2 = \del^i f^A (p_i - \lambda_B \del_i f^B) \approx 0\,,
    \label{eq:Dirac-constraints}
\end{align}
and the Dirac brackets of the dynamical variables $(x^i,p_j)$ on the phase space satisfy
\begin{align}
    \{x^i,p_j\}_D = \delta^i{}_j\,,
    \qquad
    \{x^i,x^j\}_D = \{p_i,p_j\}_D =0\,.
    \label{eq:Dirac-brakect-xp}
\end{align}
For detailed calculations, see App.~\ref{app:hamiltonian}.
Eq.~\eqref{eq:Dirac-brakect-xp} implies that the Dirac brackets of $(x^i, p_j)$ are simple and canonical, which is convenient for the quantization discussed in the next subsection.
It is noted that this simple canonical structure is the reason that we impose the constraints $\dot{f}^A(\bm{x})=0$.\footnote{
If we consider the Lagrangian 
$L' = \frac{m}{2} \delta_{ij} \dot{x}^i \dot{x}^j - V(\bm{x}) + \lambda'_A f^A$ instead of \eqref{eq:classical-L}, the Dirac brackets of $(x^i,p_j)$ are not \eqref{eq:Dirac-brakect-xp} and $\{p_i,p_j\}_D \neq 0$~\cite{1990-HommaEtAl-PRD}.
}

\paragraph{Hamiltonian.}

We then discuss the Hamiltonian of the constrained system.
After some calculation shown in App.~\ref{app:hamiltonian}, we find that the Hamiltonian is written as 
\begin{align}
    H \approx \frac{1}{2m} M^{ij} p_i p_j + V(\bm{x})\big|_{\mc{M}_d}\,,
    \label{eq:H-classical}
\end{align}
where $V(\bm{x})\bigl|_{\mc{M}_d}$ is the potential restricted to the constraint manifold.
$M^{ij}$ is the projector onto the tangent space of $\mc{M}_d$, which is the parameter space compatible with the constraints.
Concretely, the projector is defined as follows. 
We first introduce the normal vector of $\mc{M}_d$
\begin{align}
    n^A{}_i \coloneqq \del_i f^A\,,
    \label{eq:normal}
\end{align}
and
the $M \times M$ Gram matrix as
\begin{align}
    \mr{Gram}^{AB} = n^A{}_i n^B{}_j \delta^{ij} = n^A{}_i n^{Bi}\,.
\end{align}
Following the general relativity and gravitational theory convention, we write the inverse of this Gram matrix 
\begin{align}
    G_{AB} = (\mr{Gram}^{-1})_{AB}\,,
    \qquad 
    G^{AB} = \mr{Gram}^{AB}\,,
    \label{eq:Gram-matrix}
\end{align}
like a metric tensor.
Due to the independence of the constraints, $\det G^{AB} \neq 0$ and we have the inverse of the Gram matrix $G_{AB}$.
Using these normal vectors and the inverse Gram matrix, the projector is given by 
\begin{align}
    M^i{}_j \coloneqq \delta^i_j - n^{Ai} G_{AB} n^B{}_j\,.
    \label{eq:projector}
\end{align}
$M^{ij}$ is given by $M^{ij}=\delta^{jk}M^i{}_k$.
From the definition \eqref{eq:projector}, the projector satisfies 
\begin{align}
    n^A{}_j M^j{}_i = 0\,,
    \qquad 
    M^i{}_k M^k{}_j = M^i{}_j\,.
    \label{eq:projection-property}
\end{align}
The first equation states that the tangent space is orthogonal to the normals, and the second is idempotency of the projector.

We note that the Hamiltonian~\eqref{eq:H-classical} is written as
\begin{align}
    H \approx \frac{P^2}{2m}  + V(\bm{x}) \big|_{\mc{M}_d}\,,
    \label{eq:H-classical-2}
\end{align}
where $P^2$ denotes the squared tangential momentum, $P_i \coloneqq M_i{}^j p_j$, as
\begin{align}
    P^2 = \delta^{ij} P_i P_j = M^{ij} p_i p_j\,,
\end{align}
and the second equality follows from the idempotency \eqref{eq:projection-property}.
$P_i$ is the projection of the momentum $p_i$ onto the directions compatible with the constraints, and the kinetic energy of the constrained particle is determined by this tangential component alone.
We also note that $P_i$ is not a canonical variable. Its Poisson and Dirac brackets with $x^i$ are
\begin{align}
    \{x^i, P_j\}_P = \{x^i,P_j\}_D = M^i{}_j\,,
    \label{eq:Dirac-bracket-xP}
\end{align}
and $\{P_i,P_j\}_D \neq 0$ [App.~\ref{app:hamiltonian}, Eq.~\eqref{eq:PP-dirac-app}].
The quantum counterparts of these brackets are discussed in the remark following Eq.~\eqref{commutation-relation-xp} in Sec.~\ref{sec:quantization}.

\subsection{Quantization of constrained system}
\label{sec:quantization}

We now quantize the system introduced in the previous subsection.
For an unconstrained system, quantization is formally given by replacing the Poisson bracket with the commutator.
For the constrained system, the Dirac bracket plays this role: we replace the Dirac bracket with the commutator.

\paragraph{Canonical commutation relations.}

Thanks to the terms of $\lambda_A \dot{f}^A$ in the Lagrangian~\eqref{eq:classical-L}, the Dirac brackets of $(x^i,p_j)$ are simple canonical form as in Eq.~\eqref{eq:Dirac-brakect-xp}.
The commutation relations of the operators $(\hat{x}^i, \hat{p}_j)$ then become\footnote{
If the Dirac brackets contained products of the canonical variables, such as ${\bm{x}}\cdot{\bm{p}}$, the operator-ordering ambiguity of quantization would appear. 
Since the Dirac brackets~\eqref{eq:Dirac-brakect-xp} are of the canonical form, their quantization~\eqref{commutation-relation-xp} involves no such ambiguity.
The kinetic energy~\eqref{eq:H-classical}, however, contains products of $M^{ij}(\bm{x})$ and $p_i$, and its quantization does require a choice of ordering, which we discuss in the paragraph ``Ambiguity of quantization'' at the end of this section.
}
\begin{align}
    [\hat{x}^i, \hat{p}_j] = \iu \delta^i_j,
    \qquad 
    [\hat{x}^i, \hat{x}^j] = [\hat{p}_i, \hat{p}_j] = 0\,.
    \label{commutation-relation-xp}
\end{align}
Note that $[f^A(\hat{\bx}),\hp_j]=\iu n^A{}_j\neq0$, which means that the canonical momentum $\hp_j$ generates translations off the constraint surface.
On the other hand, the tangential momentum $\hat{P}_i \coloneqq M_i{}^j(\hat{\bx})\hp_j$ introduced below satisfies $[f^A(\hat{\bx}),\hat{P}_i]=0$ by Eq.~\eqref{eq:projection-property}.
Its commutator with $\hat{x}^i$, however, is not canonical, as expected from the Dirac bracket~\eqref{eq:Dirac-bracket-xP}, and its explicit form is derived below.\footnote{
Eq.~\eqref{commutation-relation-xp} is understood as a relation among operators acting on smooth functions in a neighborhood of $\mc{M}_d$ in $\mbb{R}^N$.
The operators $\hat{p}_i = -\iu \del_i$ are not, by themselves, operators on the physical Hilbert space and generally differentiate off the manifold. 
In contrast, the tangential operators $\hat{P}_i = M_i{}^j\hat{p}_j$
and the Hamiltonian constructed below act intrinsically on $\mc{M}_d$, independently of how a wave function is extended away from it.}

\paragraph{Position eigenstates.}

Next, let us consider the states and the Hilbert space of the quantum constrained system to write the Schr\"odinger equation in the position basis.
We first introduce the position eigenstate $\ket{\bm{x}}$ satisfying 
\begin{align}
    \hat{x}^i \ket{\bm{x}} = x^i \ket{\bm{x}}\,.
\end{align}
We use the generalized position eigenstates $\ket{\bm{x}}$, labeled by $\bm{x} \in\mc{M}_d$.
The completeness relation then takes the form
\begin{align}
    \hat{I} = \int \dd^N \bm{x} \prod_A \delta(f^A) \frac{1}{\sqrt{\det G}} \ketbra{\bm{x}}{\bm{x}}\,.
    \label{eq:completness-relation-x}
\end{align}
Here, we write $\det G_{AB} = \det G$.
We call the Hilbert space spanned by the generalized position eigenstates, which satisfy the completeness relation~\eqref{eq:completness-relation-x}, the physical Hilbert space.
The integral measure $\dd^N \bm{x} \prod_A \delta(f^A) /\sqrt{\det G}$ is the induced volume element of the constrained manifold $\mc{M}_d$ in the ambient space $\mbb{R}^N$.
The factor $1/\sqrt{\det G}$ is required from the covariance under the transformation of the delta function $\delta (f^A(\bm{x}))$.
Accordingly, the normalization of the position eigenstate takes the following form
\begin{align}
    \braket{\bm{x}|\bm{x}'} = \delta_{\mc{M}_d}(\bm{x}, \bm{x}')\,,
\end{align}
where $\delta_{\mc{M}_d}$ is the delta function on $\mc{M}_d$ with respect to the induced volume element.
This delta function satisfies
\begin{align}
    \int \dd^N\bm{x} \prod_A \delta(f^A) \frac{1}{\sqrt{\det G}} \delta_{\mc{M}_d}(\bm{x}, \bm{x}') \varphi(\bm{x}) = \varphi(\bm{x}')\,.
\end{align}

\begin{figure}[t]
    \centering
    \includegraphics[width=0.55\textwidth]{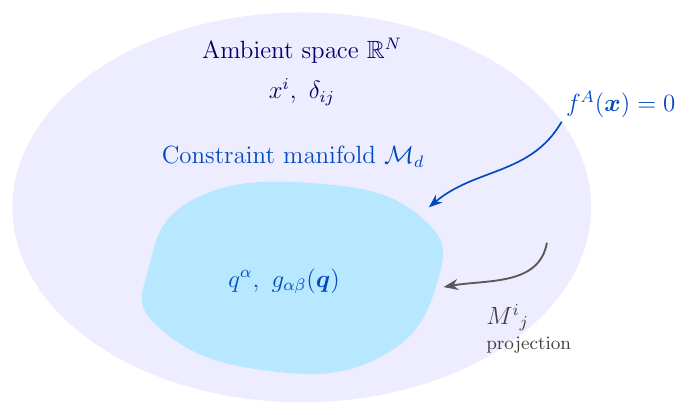}
    \caption{
    Schematic picture for our geometric setting.
    The ambient space $\mathbb{R}^N$ has Cartesian coordinates $\{x^i\}_{i=1}^N$ and the flat metric $\delta_{ij}$.
    The constraint manifold $\mc{M}_d$ is the zero set of the constraint functions $\{f^A(\bx)\}_{A=1}^M$.
    This subspace is parametrized by local coordinates $\{q^\alpha\}_{\alpha=1}^{d=N-M}$ and carries the induced metric $g_{\alpha\beta}$.
    The projector $M^i{}_j$ in Eq.~\eqref{eq:projector} maps vectors of $\mathbb{R}^N$ onto the tangent space of $\mc{M}_d$.
    }
    \label{fig:indices}
\end{figure}

Before moving to the wavefunction, we give some comments on the integral measure and the delta function discussed above.
If we introduce local coordinates $\bm{q} = \{q^\alpha\}_{\alpha=1}^d$ on $\mc{M}_d$ as in Fig.~\ref{fig:indices} and write the embedding of $\mc{M}_d$ as
\begin{align}
    x^i = X^i(\bm{q})\,,
    \label{eq:emb-Xq}
\end{align}
the integral measure in the ambient space is rewritten as 
\begin{align}
    \int \dd^N\bm{x} \prod_A\delta(f^A)\frac{1}{\sqrt{\det G}} = \int_{\mc{M}_d} \dd^d \bm{q} \sqrt{\det g}\,,
\end{align}
where $g_{\alpha\beta} = \delta_{ij}\,\del_\alpha X^i \del_\beta X^j$ is the metric induced on $\mc{M}_d$ and $\det g = \det g_{\alpha\beta}$ is its determinant.
The right-hand side is the covariant volume element of the curved space $\mc{M}_d$.
In the same coordinates, the delta function is written as 
\begin{align}
    \delta_{\mc{M}_d}(\bm{x}, \bm{x}') = g(\bm{q})^{-1/4} \delta^d(\bm{q}-\bm{q}') g(\bm{q}')^{-1/4}\,,
    \label{eq:deltafunction-rep}
\end{align}
which agrees with the delta function used in Ref.~\cite{2026-AbeNagai}.
The detailed calculations are found in App.~\ref{app:geometry}.

\paragraph{Wave function.}

The wave function in the position basis is 
\begin{align}
    \psi(\bm{x},t) = \braket{\bm{x}|\psi(t)}\,.
\end{align}
Using the completeness relation~\eqref{eq:completness-relation-x}, the inner product of states or wave functions is expressed as 
\begin{align}
    \braket{\psi_1|\psi_2} = \int \dd^N\bm{x} \prod_A \delta(f^A) \frac{1}{\sqrt{\det G}} \psi_1^\ast(\bm{x}) \psi_2(\bm{x})\,.
    \label{eq:inner-product-ambient}
\end{align}
We call the state satisfying Eq.~\eqref{eq:inner-product-ambient} the physical state.
If we use the local coordinate, this inner product reduces to 
\begin{align}
    \braket{\psi_1|\psi_2} = \int_{\mc{M}_d} \dd^d \bm{q} \sqrt{\det g} \psi^\ast_1(\bm{q}) \psi_2(\bm{q})\,,
    \label{eq:inner-product}
\end{align}
which is the same form introduced in Ref.~\cite{2026-AbeNagai}.\footnote{
The inner products \eqref{eq:inner-product-ambient} and \eqref{eq:inner-product} lead that a physical state $\ket{\psi}$ is therefore represented by a wave function on $\mc{M}_d$, where the position operators act as $(\hat{x}^i\psi)(\bm{q})=X^i(\bm{q})\psi(\bm{q})$.
Consequently, every physical state satisfies $f^A(\hat{\bm{x}}) \ket{\psi} = 0$.
This condition follows from the definition of the physical Hilbert space, rather than being imposed as an additional Dirac constraint. 
In particular, the second-class constraints are those in Eq.~\eqref{eq:Dirac-constraints}, whereas $f^A=0$ specifies the initial constraint sector preserved by the dynamics. 
}

\paragraph{Hamiltonian operator.}

From the commutation relation \eqref{commutation-relation-xp}, a simple choice of the conjugate momentum operator is\footnote{%
The commutation relations (\ref{commutation-relation-xp}) are satisfied by
$\hat{p}_i= -\iu \del_i + \mc{F}_i(\bm{x})$
for any $\mc{F}_i(\bm{x})$ with
$\del_i \mc{F}_j -\del_j \mc{F}_i = 0$.
Different choices of $\mc{F}_i$ correspond to different prescriptions of quantization.
If $\mc{F}_i$ is normal to $\mc{M}_d$, $M^{ij}\mc{F}_j = 0$, the tangential momentum operator \eqref{eq:hatP_i} is unchanged, so that Eqs.~\eqref{eq:commutation-relation-xP} and \eqref{eq:hatP^2-Delta_M} and the discussion below hold as they stand.
}
\begin{align}
    \hat{p}_i = -\iu \del_i\,.
    \label{eq:p-rep}
\end{align}
The tangential momentum operator then becomes
\begin{align}
    \hat{P}_i = M^j{}_i(\hat{\bm{x}}) \hat{p}_j = -\iu M^j{}_i(\bm{x}) \del_j \,.
    \label{eq:hatP_i}
\end{align}
This derivative operator satisfies
\begin{align}
    [\hat{x}^i,\hat{P}_j] = \iu M^i{}_j(\hat{\bm{x}})\,,
    \label{eq:commutation-relation-xP}
\end{align}
which is the quantum counterpart of the Dirac bracket~\eqref{eq:Dirac-bracket-xP}.
Its square is evaluated as
\begin{align}
    \hat{P}^2 \coloneqq \delta^{ij} \hat{P}_i \hat{P}_j = -\Delta_{\mc{M}_d}\,,
    \label{eq:hatP^2-Delta_M}
\end{align}%
where 
\begin{align}
    \Delta_{\mc{M}_d} \coloneqq \sqrt{\det G} \del_i \left(  \frac{M^{ij}}{\sqrt{\det G}} \del_j \right)\,.
    \label{eq:LB-x}
\end{align}
As shown below, $\Delta_{\mc{M}_d}$ is the Laplace--Beltrami operator on $\mc{M}_d$ written in the ambient coordinates $\bm{x}$.
We thus find that the Hamiltonian operator obtained by quantizing \eqref{eq:H-classical} is
\begin{align}
    \hat{H} = \frac{\hat{P}^2}{2m}  + V(\hat{\bm{x}})\big|_{\mc{M}_d} = - \frac{\Delta_{\mc{M}_d}}{2m}  + V(\bm{x})\big|_{\mc{M}_d}\,,
    \label{eq:H-quantum}
\end{align}%
which is the main result of this section.

It should be noted that 
the quantum Hamiltonian \eqref{eq:H-quantum} describes quantum mechanics on the manifold $\mathcal{M}_d$.
This is seen by rewriting the Hamiltonian operator in the local coordinates $\bm{q}$ on $\mc{M}_d$ introduced with the embedding \eqref{eq:emb-Xq}.
In these coordinates, the differential operator \eqref{eq:LB-x} takes the following form:
\begin{align}
    \Delta_{\mc{M}_d} = \frac{1}{\sqrt{\det g}} \del_\alpha \left( \sqrt{\det g} g^{\alpha \beta} \del_\beta \right)\,,
    \label{eq:LB-local}
\end{align}
where $g^{\alpha\beta}$ is the inverse of the induced metric $g_{\alpha\beta}$.
This is the well-known form of the Laplace--Beltrami operator on a curved space with the metric $g_{\alpha \beta}$.
The Hamiltonian operator is then written in the local coordinates as
\begin{align}
    \hat{H} = - \frac{1}{2m} \Delta_{\mc{M}_d} + V(\bm{X}(\bm{q}))\,,
    \qquad 
    \Delta_{\mc{M}_d} = \frac{1}{\sqrt{\det g}} \del_\alpha \left( \sqrt{\det g} g^{\alpha \beta} \del_\beta \right)\,,
    \label{eq:H-quantum-QRHD}
\end{align}
with the embedding \eqref{eq:emb-Xq}.
The derivation of the above results is summarized in App.~\ref{app:geometry}.

\paragraph{Momentum operator.}

Now we give some comments on the momentum operator.\footnote{%
The operator $\hat{P}_i=-\iu M^j{}_i\del_j$ contains the ambient derivatives $\del_j$, whereas the wave function $\psi$ is defined only on $\mc{M}_d$.
We note that, if $\psi$ is extended off $\mc{M}_d$ in an arbitrary way, the values of $\hat{P}_i\psi$ and $\hat{P}^2\psi$ on $\mc{M}_d$ do not depend on the extension, because $M^j{}_i$ removes the normal derivative.
}
We note that $\hat{P}_i$ itself is not Hermitian with respect to the inner product \eqref{eq:inner-product}.
The partial integration of $\braket{\psi_2|\hat{P}_i\psi_1}$ gives
\begin{align}
    \hat{P}^\dagger_i = \hat{P}_i - \iu C_i\,,
    \qquad 
    C_i \coloneqq \Delta_{\mc{M}_d} X_i\,,
\end{align}
with the embedding \eqref{eq:emb-Xq} [App.~\ref{app:geoprp-constraint-adapted}, Eq.~\eqref{eq:Pdagger-app}].
Because $C_i$ is a normal vector $M^{ij} C_j =0$, the derivative operator \eqref{eq:hatP^2-Delta_M} is also written as 
\begin{align}
    \delta^{ij} \hat{P}^\dagger_i \hat{P}_j = \delta^{ij} \hat{P}_i \hat{P}_j = -\Delta_{\mc{M}_d}\,,
\end{align}
and it is found that the kinetic term of \eqref{eq:H-quantum} is Hermitian and nonnegative.

\paragraph{Ambiguity of quantization.}

Before closing this section, we give some comments on the ambiguity of the quantization.
The quantization of the constrained system is not unique~\cite{1981-DaCosta-PRA,1992-IkegamiEtAl-PTP,1997-KleinertShabanov-PLA,2006-Golovnev-IJGMMP,2009-Golovnev-RoMP,2017-SilvaEtAl-AP}. 
Let us give a concrete example.  
One can define the quantum Hamiltonian as \cite{1990-HommaEtAl-PRD}
\begin{align}
    \hat{H} = \frac{\hat{\Pi}^2}{2m} +V({\bm{x}})\,,
    \label{eq:H-quantum-2}
\end{align}%
instead of Eq.~\eqref{eq:H-quantum}.
Here $\hat{\Pi}^2$ is defined as $\hat{\Pi}^2=\delta^{ij}\hat{\Pi}_i\hat{\Pi}_j$
where
\begin{align}
    \hat{\Pi}_i = \frac{1}{2} (\hat{P}_i + \hat{P}^\dagger_i) = \hat{P}_i - \frac{\iu}{2} C_i\,.
\end{align}%
We note that the quantum system described by Eq.~\eqref{eq:H-quantum-2}
also reduces to the constrained classical system \eqref{eq:classical-L} in the classical limit.
The two quantum Hamiltonians are therefore indistinguishable at the classical level, but they differ at the quantum level.
Concretely, we find
\begin{align}
    \hat{\Pi}^2 = \hat{P}^2 + \frac{1}{4}  K^A G_{AB}  K^B = - \Delta_{\mc{M}_d} + \frac{1}{4} K^A G_{AB} K^B\,,
    \label{eq:Pi2-main}
\end{align}
where $K^A = \tr K^A = g^{\alpha\beta} K^A{}_{\alpha\beta}$ is the trace of the second fundamental form or extrinsic curvature introduced in App.~\ref{app:geometry}.
If Eq.~\eqref{eq:H-quantum-2} is used instead of Eq.~\eqref{eq:H-quantum}, the second term of Eq.~\eqref{eq:Pi2-main} thus appears as a quantum correction $\frac{1}{8m} K^A G_{AB} K^B$ to the potential.
A similar quantum-level difference appears in the comparison between QCHD and QHD+Penalty, which we discuss in Sec.~\ref{sec:penalty}.

\section{Quantum Constrained Hamiltonian Descent}
\label{sec:qchd}

We now apply the formulation of quantum mechanics for constrained systems discussed in Sec.~\ref{sec:constrained} to the construction of a quantum algorithm for constrained optimization.
The concept is the same as that of QHD~\cite{2023-LengEtAl}.
We regard $\bm{x}$ as the parameters and $V(\bm{x})$ as the cost function, and find the minimum point $\bm{x}_\ast$ of $V$ by simulating the quantum ``particle'' motion.
As in QHD, we introduce a monotonically increasing factor $a(t)$, which produces the friction in the classical equations of motion (EOMs) to drive the wave function toward the potential minimum.
The convergence of the expected cost to its minimum has been proved, under suitable conditions on the schedule and on $V$, for QHD with convex $V$~\cite{2023-LengEtAl}.
For the Laplace--Beltrami kinetic term, it has been proved for geodesically convex $V$ on geodesically convex domains of spheres and simplices~\cite{2026-HermanEtAl}.

According to the procedure in Refs.~\cite{2023-LengEtAl,2026-AbeNagai}, we consider the following time-dependent Hamiltonian operator based on \eqref{eq:H-quantum}
\begin{align}
    \hat{H}_\QCHD(t) =  \frac{1}{a(t)}\biggl[\frac{-\Delta_{\mc{M}_d}}{2m}  \biggr] + a(t) V(\bm{x})\big|_{\mc{M}_d}\,,
    \label{eq:H-QCHD}
\end{align}
and the time evolution is governed by the Schr\"odinger equation
\begin{align}
    \iu \frac{\del}{\del t} \psi(\bm{x},t) = \hat{H}_\QCHD(t) \psi(\bm{x},t)\,.
    \label{eq:schrodinger-QCHD}
\end{align}
Here, $\Delta_{\mc{M}_d}$ is the differential operator determined by the constraints $f^A(\bm{x})$ in \eqref{eq:LB-x}, namely the Laplace--Beltrami operator on the constraint manifold.
We call the quantum optimization algorithm based on this time-dependent Hamiltonian simulation Quantum Constrained Hamiltonian Descent (QCHD).

In QCHD, the time-dependent factor $a(t)$ plays the same role as in Q(R)HD: it provides a friction term in the EOM and dissipates the energy of the system.
In the classical theory, the Lagrangian \eqref{eq:classical-L} contains $a(t)$ as\footnote{
Although the original QHD paper~\cite{2023-LengEtAl} allows an additional time-dependent factor in the potential term, $V(\bm{x}) \to \eta(t) V(\bm{x})$, we set $\eta(t)=1$ for simplicity.
} 
\begin{align}
    L = a(t) \left[ \frac{m}{2} \delta_{ij} \dot{x}^i \dot{x}^j - V(\bm{x}) + \lambda_A \dot{f}^A \right]\,,
\end{align}
and the classical EOM is read as 
\begin{align}
    m M^i{}_j \left( \ddot{x}^j + \frac{\dot{a}}{a} \dot{x}^j \right) = - M^{ij} \del_j V\,.
\end{align}
The projector $M^i{}_j$ comes from the constraints.
The particle thus descends along the gradient of $V$ in the tangential directions of the constraint manifold $\mc{M}_d$.
The friction decelerates it, and it settles at the minimum of $V|_{\mc{M}_d}$.
The QCHD Hamiltonian operator \eqref{eq:H-QCHD} is the quantization of this classical system by the prescription of the previous section.

\paragraph{Relation among QCHD, QHD, and QRHD.}

The essential difference between QCHD and QHD lies in the differential operator in the kinetic term.
QHD uses the ordinary Laplacian $\Delta=\delta^{ij}\partial_i\partial_j$, so that the quantum dynamics takes place in the whole unconstrained space $\mathbb{R}^N$.
QCHD instead uses the Laplace--Beltrami operator $\Delta_{\mc{M}_d}$, so that the quantum dynamics is restricted to the constraint manifold $\mc{M}_d$ from the outset.
As a result, the constraints are satisfied exactly throughout the optimization in QCHD, and no additional external parameters such as the penalty coefficient are required for enforcing the constraints.

We note, however, that the quantization of this paper is based on the velocity constraints $\dot{f}^A=0$, as explained in Sec.~\ref{sec:classical}.
It therefore presupposes that the initial state is a state of the constrained system, i.e., a wave function on $\mc{M}_d$ as defined in Sec.~\ref{sec:quantization}, rather than a wave packet in $\mathbb{R}^N$.
The time evolution \eqref{eq:schrodinger-QCHD} is defined on this space, so that the state remains on $\mc{M}_d$ by construction.

Rewriting QCHD in local coordinates on $\mc{M}_d$, we find from \eqref{eq:LB-local} that it takes the form of QRHD~\cite{2026-AbeNagai}.
That is, QCHD is a concrete realization of QRHD expressed in ambient coordinates.
The two approaches describe the same theory in different coordinate systems.
Fig.~\ref{fig:concept} in Introduction shows a schematic picture of this relation between QCHD and QRHD.

\section{Relation to QHD with a penalty term}
\label{sec:penalty}

An alternative to QCHD is to add a penalty term to the ordinary QHD and let the penalty confine the dynamics near the constraint manifold $\mc{M}_d$.
We call this approach QHD+Penalty.
In this section we ask how QHD+Penalty is related to QCHD.
For this purpose we derive, for a sufficiently strong penalty, the effective Hamiltonian that governs the tangential motion of QHD+Penalty on $\mc{M}_d$, and compare it with the QCHD Hamiltonian of Sec.~\ref{sec:qchd}.
We then examine how the result depends on the form of the penalty, which is not unique.
The geometric calculations are summarized in App.~\ref{app:geometry} and the treatment of the normal oscillation in App.~\ref{app:effective}.

\subsection*{Effective Hamiltonian for QHD+Penalty with strong penalty}

\paragraph{Setup for QHD+Penalty.}

Let us consider the QHD Hamiltonian and quadratic penalty term $\mathcal{P}({\bm{x}})$ as 
\begin{align}
    \hat{H}_{\rm pen}(t)
    &=
    \frac{1}{a(t)}\,\biggl[\frac{-\Delta}{2m}\biggr]
    +
    a(t)\,\Bigl[V(\bx)+\kappa\,\mathcal{P}(\bx)\Bigr]
    \,,
    \label{eq:QHD-penalty}
    \\
    \mathcal{P}({\bm{x}})
    &=
    \xi^2 =\delta_{ab}\xi^a \xi^b
    \label{eq:penalty-xi2}
\end{align}%
where $\Delta = \delta^{ij} \del_i \del_j$ is the Laplacian of $\mbb{R}^N$, $\kappa >0$ is the strength of the penalty, and $\xi^a$ is the normal displacement from the constraint manifold.
In the large-$\kappa$ regime, namely the $\kappa \to \infty$ limit, the relevant states are localized in a narrow neighborhood of the constraint manifold $\mc{M}_d$. 
It is therefore natural to resolve the ambient coordinates into tangential coordinates on $\mc{M}_d$ and normal displacements away from it. 
We accordingly work in a tubular neighborhood of $\mc{M}_d$, where the ambient coordinates can be written as
\begin{align}
    x^i = X^i(\bm{q}) + \xi^a \nu^i{}_a(\bm{q})\,.
\end{align}
Here, $q^\alpha$ is the local coordinate on $\mc{M}_d$, $X^i(\bm{q})$ is the embedding map, and $\nu^i{}_a(\bm{q})$ is an orthonormal frame spanning the normal space at each point of the constraint manifold satisfying 
\begin{align}
    \nu^i{}_a = \frac{\del x^i}{\del \xi^a}\,,
    \qquad
    \delta_{ij} \nu^i{}_a \nu^j{}_b = \delta_{ab}\,.
    \label{eq:nu-orthonormal}
\end{align}%
We also introduce the tangent basis vectors of $\mc{M}_d$ by 
\begin{align}
    e^i{}_\alpha = \frac{\del X^i}{\del q^\alpha}\,,
    \qquad
    \delta_{ij} e^i{}_\alpha e^j{}_\beta = M_{ij} e^i{}_\alpha e^j{}_\beta = g_{\alpha\beta}\,.
\end{align}
These frame fields satisfy
\begin{align}
     \delta^{ij} = g^{\alpha\beta} e^i{}_\alpha e^j{}_\beta + \delta^{ab} \nu^i{}_a \nu^j{}_b\,,
     \qquad 
     \delta_{ij} \nu^i{}_a e^j{}_\beta = 0\,.
\end{align}
The orthonormal frame can be constructed from the normal vector \eqref{eq:normal} restricted to $\mc{M}_d$, for example
\begin{align}
    \nu^i{}_a = b_{Aa} n^{Ai}\big|_{\mc{M}_d}\,,
    \qquad
    b^A{}_a = \frac{\del f^A}{\del \xi^a}\bigg|_{\xi=0}\,.
\end{align}
This $b^A{}_a$ satisfies
\begin{align}
    \delta^{ab} b^A{}_a b^B{}_b = G^{AB}\,.
    \label{eq:bb-G}
\end{align}
The details of the geometric formulation are found in App.~\ref{app:geometry}.

The penalty $\mathcal{P}=\xi^2$ in Eq.~(\ref{eq:penalty-xi2}) is one of many possible choices.
For example, $\mathcal{P}=\delta_{AB}f^A f^B$ is used in Sec.~\ref{sec:numerics}.
We adopt $\xi^2$ because it makes the normal part of the problem as simple as possible:
$\xi^a$ are flat coordinates of the normal space [Eq.~\eqref{eq:nu-orthonormal}], so that the penalty is an isotropic harmonic potential with the same stiffness at every point of $\mc{M}_d$.
The dependence on this choice is discussed at the end of this section.

\paragraph{Measure transformation and leading Hamiltonian.}

We now rewrite $\hat{H}_\pen$ in the tubular coordinates $(\bm{q},\bm{\xi})$ and separate it into a tangential part, a normal part, and their coupling.
The details of the tubular coordinates we use are summarized in App.~\ref{app:geometry}.
The first point to note is the volume element, which in these coordinates is
\begin{align}
    \dd^N\bm{x} = \dd^d \bm{q} \dd^M \bm{\xi} \sqrt{\det g} \mc{J}\,,
    \qquad 
    \mc{J} = \det (\delta_\alpha^\beta + \xi^a  K_{a\alpha}{}^\beta)\,,
\end{align}
where the tube is chosen so that $\mc{J} > 0$.\footnote{
If we introduce the so-called shape operator by $\ms{S}_a{}^\alpha{}_\beta \coloneqq - g^{\alpha \gamma}K_{a\gamma \beta} = - K_a{}^\alpha{}_\beta$, the Jacobian is also written as $\mc{J} = \det (\delta^\alpha_\beta - \xi^a \ms{S}_a{}^\alpha{}_\beta)$.
}
Here, $K_{a\alpha}{}^\beta$ is the extrinsic curvature given by
\begin{align}
     K_a{}^\alpha{}_{\beta} = g^{\alpha \gamma}  K_{a\gamma \beta}\,,
    \qquad
     K_{a\alpha\beta} = e_{i\beta} \del_\alpha \nu^i{}_a\,.
\end{align}
The factor $\mc{J}$ is a ratio of tubular and surface volume densities.
Because of this $\bm{\xi}$-dependent factor, the inner product of $\psi$ does not factorize into tangential and normal parts.
After the wave function redefinition $\tilde{\psi} \coloneqq \mc{J}^{1/2} \psi$, however, the inner product of $\tilde{\psi}$ factorizes into the part with the induced measure $\dd^d \bm{q} \sqrt{\det g}$ and that with the flat measure $\dd^M \bm{\xi}$, so that the normal direction can be treated as ordinary $M$-dimensional quantum mechanics in flat space.

For this redefinition, the Hamiltonian acting on $\tilde{\psi}$ is $\hat{\tilde{H}}_\pen(t) \coloneqq \mc{J}^{1/2} \hat{H}_\pen(t) \mc{J}^{-1/2}$, which is decomposed as
\begin{align}
    \hat{\tilde{H}}_\pen(t) &= \hat{H}_q(t) + \hat{H}_\xi(t) + \hat{R}(t)\,,
    \label{eq:H-tilde-decomp}
    \\
    \hat{H}_q(t) &= 
    \frac{1}{a(t)}\biggl[\frac{-\Delta^{(\mc{A})}_{\mc{M}_d}}{2m} \biggr]
    + a(t) \biggl[
        V(\bm{X}(\bm{q})) + \frac{V_\geo(\bm{q})}{a(t)^2}
    \biggr]\,,
    \label{eq:H-q}
    \\
    \hat{H}_\xi(t) &= 
    \frac{1}{a(t)}\biggl[\frac{-\Delta_\xi}{2m} \biggr]
    + a(t) \kappa \xi^2\,,
    \qquad 
    \Delta_\xi = \delta^{ab} \del_a \del_b\,,
    \label{eq:H-xi}
    \\
    \hat{R}(t) &= \mc{O}(\xi)\,.
    \label{eq:H-R}
\end{align}
Here, $V_\geo$ is the geometric potential and $\Delta^{(\mc{A})}_{\mc{M}_d}$ is the covariant Laplacian with the normal-bundle connection, which we define in turn.
The geometric potential is given by
\begin{align}
    V_\geo(\bm{q}) = \frac{1}{8m}\bigl(
        \delta^{ab} K_a K_b - 2 \delta^{ab} K_{a \alpha\beta} K_b{}^{\alpha\beta}\,,
    \bigr)
    \label{eq:Vgeo}
\end{align}
where we define the mean curvature components by
\begin{align}
    K_a \coloneqq \tr K_{a\alpha\beta} = g^{\alpha\beta} K_{a \alpha\beta}\,.
\end{align}
The derivative operator
\begin{align}
    \Delta_{\mc{M}_d}^{(\mc{A})} = \frac{1}{\sqrt{\det g}} D_\alpha \bigl( \sqrt{\det g} g^{\alpha \beta} D_\beta \bigr)\,,
    \quad 
    D_\alpha \coloneqq \del_\alpha + \frac{1}{2} \mc{A}_{\alpha ab}(\bm{q}) \ms{L}^{ab}\,,
    \quad
    \ms{L}_{ab} \coloneqq \xi_a \del_b - \xi_b \del_a\,,
\end{align}
is the covariant Laplacian on $\mc{M}_d$ with the normal-bundle connection 
\begin{align}
    \mc{A}_{\alpha ab} = - \nu^i{}_b \del_\alpha \nu_{ia} = - \mc{A}_{\alpha ba}\,,
\end{align}
as a gauge field.
$\ms{L}_{ab}$ is the orbital angular momentum of $\xi^a$ and generates the $\mr{SO}(M)$ rotations in the normal space.\footnote{
From the definition of $\ms{L}_{ab}$, the commutation relation is evaluated as 
$[\ms{L}_{ab}, \ms{L}_{cd}] = - (\delta_{ac} \ms{L}_{bd} - \delta_{bc} \ms{L}_{ad} - \delta_{ad} \ms{L}_{bc} + \delta_{bd} \ms{L}_{ac})$,
which is nothing but the well-known commutation relation of the angular momentum operator.
}
If $M=1$, this connection automatically vanishes and $\Delta^{(\mc{A})}_{\mc{M}_d} = \Delta_{\mc{M}_d}$.

The decomposition \eqref{eq:H-tilde-decomp}--\eqref{eq:H-R} is known from the quantum mechanics of a particle confined to a manifold by a time-independent potential~\cite{1988-Tolar-LNP,1993-FujiiOgawa-PTP,1993-MaranerDestri-MPLA,2003-SchusterJaffe-AP}.
The geometric potential \eqref{eq:Vgeo} was derived in Ref.~\cite{1988-Tolar-LNP} and reduces to the result of Ref.~\cite{1981-DaCosta-PRA} for a surface ($M=1$) and a curve ($M=2$) in $\mbb{R}^3$.
The normal-bundle connection as a gauge field was obtained in Refs.~\cite{1993-FujiiOgawa-PTP,1993-MaranerDestri-MPLA}.
Systematic expansions in powers of $\xi$ are given in Refs.~\cite{1995-Maraner-JPAMG,2017-SilvaEtAl-AP}.
Here the time-dependent factor $a(t)$ is carried through the derivation.

\paragraph{Normal evolution and reduced effective Hamiltonian.}

To reduce the dynamics to $\mc{M}_d$, we next solve the normal part $\hat{H}_\xi$ in Eq.~\eqref{eq:H-xi},
whose eigenstates serve as the basis in which the tangential effective Hamiltonian is defined.
$\hat{H}_\xi$ is an $M$-dimensional isotropic harmonic oscillator with mass $ma(t)$ and spring constant $2\kappa a(t)$,
so that its frequency $\Omega_\perp$ and eigenvalues $E_{n_\perp}$ do not depend on $a(t)$:
\begin{align}
    \Omega_\perp = 
    \sqrt{\frac{2\kappa a(t)}{ma(t)}}
    = 
    \sqrt{\frac{2\kappa}{m}}\,,
    \qquad 
    E_{n_\perp} = \Omega_\perp \left( n_\perp + \frac{M}{2} \right) \,.
\end{align}
Here $n_\perp=0,1,2,\ldots$ is the total excitation number of the $M$ normal directions 
and $M\Omega_\perp/2$ is the zero-point energy.
The eigenfunctions, on the other hand, depend on time through $a(t)$.
For example, the position variance per normal direction in the instantaneous ground state is
\begin{align}
    \sigma^2_\perp(t) = \frac{1}{M}\braket{\xi^2}_0 = \frac{1}{2 m a(t) \Omega_\perp}\,,
\end{align}
which shrinks as $a(t)$ grows.
Consequently, the oscillator level $n_\perp$ is not exactly conserved.
This is seen from the Hamiltonian in the instantaneous oscillator basis, derived in App.~\ref{app:effective},
\begin{align}
    \hat{H}_{\perp,\mr{inst.}} &= \Omega_\perp \left(
        \sum_a \hat{\ms{b}}^\dagger_a \hat{\ms{b}}_a + \frac{M}{2}
    \right) + \frac{\dot{a}}{2a} \hat{\ms{D}}_\perp\,,
    \label{eq:inst-osc-ham}
    \\
    \hat{\ms{D}}_\perp &=\frac{1}{2} \sum_a (\xi^a p_a + p_a \xi^a) = \frac{\iu}{2} \sum_a \left( (\hat{\ms{b}}^\dagger_a)^2 - \hat{\ms{b}}_a^2 \right)\,.
    \label{eq:dilatation-op}
\end{align}
Here, $p_a = -\iu \del_a$, and $\hat{\ms{b}}_a$ is the annihilation operator of the instantaneous oscillator basis.
In addition to the ordinary time-independent harmonic-oscillator Hamiltonian, the second term of Eq.~\eqref{eq:inst-osc-ham}, $\frac{\dot{a}}{2a}\hat{\ms{D}}_\perp$, arises from the time dependence of $a(t)$ through the unitary transformation to the instantaneous oscillator basis.
We call it the dilatation term, and it mixes levels differing by two, as seen from Eq.~\eqref{eq:dilatation-op}.
The instantaneous levels are therefore conserved only in the adiabatic approximation, in which the dilatation term is neglected because $a(t)$ varies slowly compared with the normal oscillation, $|\dot{a}|/(a\Omega_\perp)\ll1$.

Although the level is not conserved, the Hamiltonians for $q$ and $\xi$ commute with each other,
\begin{align}
    [\hat{H}_q(t), \hat{H}_\xi(s)] = 0\,.
\end{align}
This leads to the factorized unitary evolution 
\begin{align}
    \hat{U}(t) \simeq \hat{U}_q (t) \hat{U}_\xi(t)\,,
\end{align}
if the mixing term between $q$ and $\xi$ is small.
In addition, $\hat{H}_\xi$ commutes with the angular momentum operator in the normal space,
\begin{align}
    [\hat{H}_\xi, \ms{L}_{ab}] =0\,,
\end{align}
which implies that the time evolution generated by $\hat{H}_\xi$ preserves the $\mr{SO}(M)$ representation in the normal space even when the dilatation term changes the level $n_\perp$.
For a fixed instantaneous normal level and $|\dot{a}|/(a\Omega_\perp) \ll 1$, the adiabatic reduction gives, after removing the spatially uniform phase $\ee^{-\iu E_{n_\perp} t}$,
\begin{align}
    \hat{H}_\eff^{(n_\perp)}(t) = - \frac{1}{2ma} \Delta_{\mc{M}_d}^{(\mc{A},n_\perp)} + a V(\bm{X}(\bm{q})) + \frac{V_\geo(\bm{q})}{a} + \mc{O}(\kappa^{-1/2})\,,
    \label{eq:H_eff^n_perp}
\end{align}
where $\ms{L}_{ab}$ acts as a matrix on the eigenspace of level $n_\perp$.
Here, $\Delta_{\mc{M}_d}^{(\mc{A} ,n_\perp)}$ denotes the covariant Laplace–Beltrami operator acting on the degenerate eigenspace of the $n_\perp$-th normal oscillator level. 
The dependence on $n_\perp$ enters through the matrix representation of the normal rotation generators $\ms{L}_{ab}$ in this eigenspace.
For a singlet state for the rotation, the derivative operator reduces to the ordinary Laplace--Beltrami operator due to $\ms{L}_{ab}=0$:
\begin{align}
    \Delta^{(\mc{A},n_\perp)}_{\mc{M}_d} \to \Delta_{\mc{M}_d}\,.
\end{align}

The reduction to Eq.~\eqref{eq:H_eff^n_perp} rests on the following three assumptions.
\begin{enumerate}[label=(\roman*)]
    
    \item The penalty is the squared normal distance, Eq.~\eqref{eq:penalty-xi2}.
    The normal oscillator \eqref{eq:H-xi} then has the same frequency $\Omega_\perp$ at every point of $\mc{M}_d$, and its energy $E_{n_\perp}$ is independent of the position on $\mc{M}_d$ and can be removed by a phase.
    
    \item The normal motion is confined within the tube. For the levels $n_\perp$,
    \begin{align}
        \Braket{\xi^2}_{n_\perp}^{1/2} = \Bigl(\frac{2n_\perp+M}{2ma(t)\Omega_\perp}\Bigr)^{1/2}
        \label{eq:xi2-level}
    \end{align}
    is small compared with the radii of curvature of $\mc{M}_d$ and with the scale on which $V$ varies in the normal directions.
    The remainder $\hat R$ in Eq.~\eqref{eq:H-tilde-decomp} is then a small perturbation.
    Its terms linear in $\xi^a$ have no matrix elements within a level and contribute only in second order, $\mc{O}(\braket{\xi^2}/\Omega_\perp)=\mc{O}(\kappa^{-1})$, and its quadratic terms give the $\mc{O}(\kappa^{-1/2})$ in Eq.~\eqref{eq:H_eff^n_perp}.
    These estimates refer to matrix elements between states of a fixed level whose tangential derivatives are bounded independently of $\kappa$, not to the operator norm.

    \item The factor $a(t)$ varies slowly compared with the normal oscillation, $|\dot a|/(a\Omega_\perp)\ll1$, so that the dilatation term in Eq.~\eqref{eq:inst-osc-ham} does not change the level $n_\perp$.
    For $a(t)=\ee^{2\gamma t}$ this condition reads $2\gamma/\Omega_\perp=\sqrt{2m\gamma^2/\kappa}\ll1$.

\end{enumerate}
Under (ii) and (iii) the level $n_\perp$ is conserved, and $[\hat H_q,\hat H_\xi]=0$ factorizes the time evolution, so that Eq.~\eqref{eq:H_eff^n_perp} holds for each level separately.
A superposition of levels evolves as a mixture of these tangential dynamics, which differ from one another only through the representation of $\ms{L}_{ab}$ and through the $\mc{O}(\kappa^{-1/2})$ term.

\paragraph{Comparison with QCHD.}

Under the assumptions (i)--(iii), the difference from the QCHD Hamiltonian~\eqref{eq:H-QCHD} is
\begin{align}
    \hat{H}^{(n_\perp)}_\eff(t) - \hat{H}_\QCHD(t) = \frac{1}{a(t)}\biggl[ V_\geo - \frac{\Delta^{(\mc{A},n_\perp)}_{\mc{M}_d} - \Delta_{\mc{M}_d}}{2 m} \biggr] + \mc{O}(\kappa^{-1/2})\,.
    \label{eq:difference}
\end{align}
The terms in the bracket have no explicit dependence on $\kappa$.
They are non-decoupling effects that survive in the large-$\kappa$ limit, in contrast to the $\mc{O}(\kappa^{-1/2})$ correction.
The degrees of freedom in the normal direction are frozen by the penalty, but they still have effects on the dynamics on the constraint manifold through $V_\geo$ and the normal-bundle connection.

This difference illustrates the quantization ambiguity discussed in Sec.~\ref{sec:quantization}.
Quantizing the dynamics directly on the constraint manifold (QCHD) and adding a strong penalty after quantization (QHD+Penalty) generally lead to different quantum theories.
While QCHD depends only on the intrinsic geometry of the constraint manifold, QHD+Penalty retains information about its embedding through the geometric potential and the normal-bundle connection. 
When $V_{\geo}$ is constant on the manifold and the connection term is absent, the difference \eqref{eq:difference} reduces to an overall phase, and the leading tangential dynamics agrees with QCHD.
Otherwise, the geometric potential modifies the effective potential and can affect the transient dynamics, although its relative contribution to the cost function is suppressed by $a(t)^{-2}$.

\paragraph{Dependence on the form of the penalty.}

We finally examine how the effective Hamiltonian depends on the form of the penalty.
Assumption (i) fixes the penalty to the squared normal distance $\xi^2$.
In practice, however, the penalty is often written with the constraint functions, $\kappa\sum_A(f^A)^2$.
Moreover, the functions $f^A$ that define a given $\mc{M}_d$ are not unique, since $f^A\to \ms{h}(\bx)f^A$ with $\ms{h}>0$ leaves $\mc{M}_d$ unchanged.
Classically, this choice does not affect the constrained motion in the limit $\kappa\to\infty$, in which the normal motion is frozen.
Quantum mechanically, however, the choice enters the effective Hamiltonian through the level energies of the normal oscillator, as we now show.
This is the ambiguity of the penalty approach mentioned in Sec.~\ref{sec:introduction}, which QCHD does not have.

Let us see this concretely.
Since the constraint is expanded as
\begin{align}
    f^A(\bm{x}) = f^A(\bm{X}(\bm{q}) + \xi^b \bm{\nu}_b(\bm{q})) = b^A{}_b(\bm{q}) \xi^b + \mc{O}(\xi^2)\,,
\end{align}
due to $f^A(\bm{X}(\bm{q})) = 0$,\footnote{
From this expansion of $f^A(\bm{x})$, it is found that the penalty term in \eqref{eq:QHD-penalty} is regarded as the leading order contribution of the penalty $G_{AB} f^A f^B$, since $G_{AB} f^A f^B = G_{AB} b^A{}_a b^B{}_b \xi^a \xi^b + \mc{O}(\xi^3) = \delta_{ab} \xi^a \xi^b + \mc{O}(\xi^3)$.
}
the penalty near $\mc{M}_d$ is a quadratic form in $\xi^a$ with position-dependent coefficients,
\begin{align}
    \kappa\sum_A (f^A)^2 \,=\, \kappa\,\delta_{AB}\,b^A{}_a(\bm{q})\,b^B{}_b(\bm{q})\,\xi^a\xi^b + \mc{O}(\xi^3)\,.
    \label{eq:penalty-f2}
\end{align}
The normal Hamiltonian \eqref{eq:H-xi} then becomes an anisotropic oscillator whose squared frequencies $\Omega_a^2(\bm{q})$ are the eigenvalues of the $M\times M$ matrix $(2\kappa/m)\,\delta_{AB}\,b^A{}_a b^B{}_b$
[equivalently, $2\kappa/m$ times the eigenvalues of the Gram matrix $G^{AB}(\bm{q})$].
Its level energies
\begin{align}
    E_{\bm{n}}(\bm{q}) \,=\, \sum_a \Bigl(n_a+\frac{1}{2}\Bigr)\,\Omega_a(\bm{q})\,,
    \label{eq:E-level-f2}
\end{align}
depend on the position and can no longer be removed by a phase, and they remain in the effective Hamiltonian as an additional potential.
For a single constraint, 
Eq.~(\ref{eq:H_eff^n_perp}) is replaced by
\begin{align}
    \hat{H}_\eff^{(n_\perp)}(t) \,=\, - \frac{1}{2ma} \Delta_{\mc{M}_d} + a V(\bm{X}(\bm{q})) + \frac{V_\geo(\bm{q})}{a}
    + \Bigl(n_\perp+\frac{1}{2}\Bigr)\,\Omega^{(f)}_\perp(\bm{q}) + \mc{O}(\kappa^{-1/2})\,,
    \label{eq:H_eff-f2}
\end{align}
where
\begin{align}
    \Omega^{(f)}_\perp(\bm{q})
    = |\nabla f(\bm{X}(\bm{q}))|\sqrt{\frac{2\kappa}{m}}
    = |\nabla f(\bm{X}(\bm{q}))| \Omega_\perp\,.
\end{align}
We here ignore derivative terms that originate from the position dependence of the normal wave function.\footnote{\label{fn:normalization}
Besides the additional potential $(n_\perp+\tfrac12)\Omega^{(f)}_\perp(\bm{q})$ in Eq.~\eqref{eq:H_eff-f2}, there are two further corrections of order $\kappa^0$, both proportional to $1/a$ like $V_\geo/a$.
One is the Born--Huang potential $(16ma)^{-1}g^{\alpha\beta}\partial_\alpha\log|\nabla f|\,\partial_\beta\log|\nabla f|$~\cite{2010-WachsmuthTeufel-PRA}, which arises from the position dependence of the normal width $\propto|\nabla f|^{-1/2}$ and is the derivative term omitted in Eq.~\eqref{eq:H_eff-f2}.
The other is the shift of the level energies by the anharmonic terms of the normal potential, i.e., the $\mc{O}(\xi^3)$ terms in Eq.~\eqref{eq:penalty-f2}.
For a penalty $\kappa F^2$ with $F=\xi+c(\bm{q})\xi^2+d(\bm{q})\xi^3+\mc{O}(\xi^4)$ near $\mc{M}_d$, this shift is $-c^2/(ma)+3d/(4ma)$ for the lowest level.
Normalizing the constraint function as $F=f/|\nabla f|$ makes the frequency uniform on $\mc{M}_d$ and removes the position dependence of the level energies and the Born--Huang potential, but the anharmonic shift remains and depends on the choice of $f$ through $c$ and $d$.
}
The additional potential, the term $(n_\perp+\tfrac12)\Omega^{(f)}_\perp(\bm{q})$ in Eq.~\eqref{eq:H_eff-f2}, has two properties.
It does not depend on $a(t)$, so that its weight relative to the cost $aV$ decreases as $1/a$.
Its variation along $\mc{M}_d$ grows as $\sqrt{\kappa}$, because $\Omega_\perp^{(f)}(\bm{q})=|\nabla f(\bm{X}(\bm{q}))|\sqrt{2\kappa/m}$.
For a position-dependent $|\nabla f|$, the large-$\kappa$ limit of QHD+Penalty therefore does not approach QCHD.
The variation also depends on the choice of $f$ through $|\nabla f|$.
The comparison \eqref{eq:difference} therefore refers to the distance penalty of assumption (i).
For a penalty $\kappa\sum_A(f^A)^2$ with a position-dependent $|\nabla f|$, the quantum corrections derived above are added to the difference, as illustrated by the elliptic example of Sec.~\ref{sec:num-ellipse}.

\section{Numerical comparison of QCHD and QHD+Penalty}
\label{sec:numerics}

In this section, we apply QCHD and QHD+Penalty to four constrained optimization problems and compare their performance numerically.

\subsection{Setup}

We chose the four examples to examine the following points.
\begin{enumerate}[label=(\arabic*)]
    \item \textbf{Spherical constraint with a quadratic potential} (Sec.~\ref{sec:num-sphere}).
    This is the simplest example: $V_\geo$ is constant, the normal-bundle connection vanishes ($M=1$), and $|\nabla f|$ is constant, so that the large-$\kappa$ effective theory of Sec.~\ref{sec:penalty} coincides with QCHD up to a phase.
    We quantify the difference between the two methods at finite $\kappa$.

    \item \textbf{Spherical constraint with the Ackley potential} (Sec.~\ref{sec:num-ackley}).
    The geometry is the same as in (1), but the potential is nonconvex with many local minima on the constraint manifold.
    We test whether the global search capability survives under the constraint.

    \item \textbf{Elliptic constraint with a quadratic potential} (Sec.~\ref{sec:num-ellipse}).
    Here $V_\geo$ and $|\nabla f|$ depend on the position, so that the large-$\kappa$ limit of the penalty $\kappa f^2$ does not approach QCHD as shown in Sec.~\ref{sec:penalty}.
    We examine how this difference shows up at finite $\kappa$.

    \item \textbf{Two constraints in $\mathbb{R}^3$ with a nonconvex potential} (Sec.~\ref{sec:num-twocons}).
    This example has two constraints ($M=2$) and a one-dimensional constraint manifold in a three-dimensional ambient space.
    Its geometry is again such that the large-$\kappa$ effective theory coincides with QCHD up to a constant.
    We examine the effect of two normal directions.
\end{enumerate}

\paragraph{Common setup.}
In this section, we write the Cartesian coordinates as $\bx=(x,y)^{\mathrm{T}}$ for $N=2$ and $\bx=(x,y,z)^{\mathrm{T}}$ for $N=3$.
We numerically solve the Schr\"odinger equation with the Hamiltonian \eqref{eq:H-QCHD} for QCHD and with
\begin{align}
    \hat{H}_\pen(t) = \frac{1}{a(t)}\biggl[
        \frac{-\Delta}{2m}
    \biggr] + a(t) \left[ V(\bm{x}) + \kappa \mc{P}(\bm{x}) \right]\,,
    \qquad 
    \mc{P}(\bm{x}) = f(\bm{x})^2 = \delta_{AB} f^A(\bm{x}) f^B(\bm{x})\,,
    \label{eq:H-pen-num}
\end{align}%
for QHD+Penalty.
We set $m=1$ and $a(t)=\ee^{2\gamma t}$.
The value of $\gamma$ is specified for each example, and the simulations are run up to $t_{\rm end}=20$ ($t_{\mr{end}} = 40$ in Sec.~\ref{sec:num-twocons}).
For the comparison between QCHD and QHD+Penalty, both methods start from the same initial state,
a tube state that is uniform along $\mc{M}_d$ (with respect to the arc length)
and Gaussian in the normal directions with width $\sigma_r=0.05$,
\begin{align}
    \psi_0(\bm{x}) \propto \exp \left( - \frac{d(\bm{x})^2}{2 \sigma_r^2} \right)\,,
    \label{eq:initial-tube}
\end{align}
where $d(\bx)^2$ is the squared normal distance from $\bx$ to $\mc{M}_d$, so that $d(\bx)^2=\xi^2$ in the notation of Sec.~\ref{sec:penalty}.
The time evolution is obtained by solving the time-dependent Schr\"odinger equation. 
See App.~\ref{app:numerics} for the details of the calculation.

We remark that the numerical examples are set in a situation different from that of Sec.~\ref{sec:penalty} in two respects.
First, the penalty is $\mathcal{P}=f^2$ instead of Eq.~\eqref{eq:penalty-xi2}.
As discussed in Sec.~\ref{sec:penalty}, the difference gives an additional potential in QHD+Penalty when $|\nabla f|$ depends on the position, which is the case in the elliptic example (Sec.~\ref{sec:num-ellipse}).
Second, the normal motion of QHD+Penalty does not satisfy assumption (ii) of Sec.~\ref{sec:penalty}, i.e., $\braket{\xi^2}^{1/2}$ is not small compared with the radii of curvature of $\mc{M}_d$:
the narrow initial tube \eqref{eq:initial-tube} releases a normal kinetic energy $1/(4m\sigma_r^2)=100\gg\Omega_\perp$,
and the density spreads beyond the curvature radius of $\mc{M}_d$ during the transient.
The examples below therefore compare the two methods in the finite-$\kappa$ regime, where the assumptions of Sec.~\ref{sec:penalty} are not satisfied, rather than in the large-$\kappa$ regime.

\paragraph{Indicators.}

We measure the progress of the optimization by the cost evaluated after the projection $\mathsf{P}$ onto $\mc{M}_d$,
\begin{align}
    E_{\mr{proj}}(t) \coloneqq \Braket{V(\ms{P}\bm{x})}_t - V^\ast\,,
    \label{eq:def-Eproj}
\end{align}
where $V^*$ is the minimum of $V|_{\mc{M}_d}$ and $\mathsf{P}\bx$ is the (radial) projection of $\bx$ onto $\mc{M}_d$.
$E_{\rm proj}$ is the tangential optimization error and contains no contribution from the constraint violation.\footnote{%
We evaluate the cost after the projection, Eq.~\eqref{eq:def-Eproj}, rather than using $\langle V\rangle_t-V^*$, because $V$ takes values below $V^*$ off $\mc{M}_d$ whenever the constraint is active, i.e., whenever the unconstrained minimizer of $V$ does not lie on $\mc{M}_d$ (in Secs.~\ref{sec:num-sphere}--\ref{sec:num-ellipse} it lies inside the circle).
The unprojected cost would therefore reward the constraint violation.
}
We define the convergence time $t_\varepsilon$ as the first time at which $E_{\rm proj}(t)$ falls below a threshold $\varepsilon$.
The constraint violation is measured by $\braket{d^2}$ at a reference time $t_{\rm ref}$, the last time up to which all displayed quantities (the density, $\braket{d^2}$, and $E_{\rm proj}$) have been verified to be converged with respect to the grid (App.~\ref{app:numerics}, which also checks $t_\varepsilon$ beyond $t_{\rm ref}$).
We take $t_{\rm ref}=10$ in Secs.~\ref{sec:num-sphere} and \ref{sec:num-ellipse}, $8$ in Sec.~\ref{sec:num-ackley}, and $40$ in Sec.~\ref{sec:num-twocons}.
In the nonconvex examples we also use the success probability $P_{\rm succ}$, the probability of finding the particle in the basin of the global minimum.
The convergence of $t_\varepsilon$, $\braket{d^2}(t_{\rm ref})$, and $P_{\rm succ}$ with respect to the grid, the time step, and the computational domain is examined in App.~\ref{app:numerics}.

For each example we show, in a single figure, snapshots of the probability density $|\psi|^2$ for $t\le t_{\rm ref}$ and a Pareto plot
with the constraint violation on the horizontal axis and $t_\varepsilon$ (or the failure probability) on the vertical axis.
Since QCHD satisfies the constraints exactly, it appears in the Pareto plot as a horizontal line indicating $t_\varepsilon$ only.

\subsection{Spherical constraint with a quadratic potential}
\label{sec:num-sphere}

We take $N=2$ and the circle
\begin{align}
    \mc{M}_1:\quad x^2 + y^2 = R^2,
    \qquad 
    R=0.5\,,
\end{align}
as the constraint manifold.
It is represented by the constraint function $f(\bx)=x^2+y^2-R^2$.
The potential is
\begin{align}
    V(\bm{x}) = \frac{m}{2} \bm{x}^\TT A_1 \bm{x}\,,
    \qquad 
    A_1 = \begin{pmatrix}
        1 & R'\\
        R' & 1
    \end{pmatrix}\,,
    \qquad 
    R' = -0.9\,.
  \label{eq:V-quad}
\end{align}
The minimum of $V$ is at the origin, which violates the constraint.
On the circle the minimum value is $V^*=\frac{m}{2}R^2(1+R')=0.0125$, attained at the two points $\theta^*=\pi/4,\,5\pi/4$.
The penalty is $\kappa f^2=\kappa(x^2+y^2-R^2)^2$ with $\kappa=1,2,5,10$, and we set $\gamma=0.3$.
In this example $V_\geo=-1/(8mR^2)$ is constant, the normal-bundle connection vanishes ($M=1$),
and $|\nabla f|=2R$ is constant, so that $\kappa f^2$ agrees with the distance penalty $\kappa\xi^2$ to leading order in $\xi$ (with $4R^2=1$).
The large-$\kappa$ effective theory of Sec.~\ref{sec:penalty} therefore coincides with QCHD up to a phase, and the differences found below are finite-$\kappa$ effects.

Fig.~\ref{fig:num-sphere-snap} shows the evolution of the probability density.
QCHD stays on the circle and concentrates on the two minima by $t=8$.
QHD+Penalty first spreads inside and outside the circle (the released normal motion), is recaptured the faster the larger $\kappa$,
and gathers around the minima by $t=10$ for $\kappa\ge5$.
In the Pareto plot ($\varepsilon=3\times10^{-2}$), $t_\varepsilon$ is 4.45 for QCHD and 11.51, 10.80, 9.54, and 8.45 for QHD+Penalty with $\kappa=1,2,5,10$:
it decreases monotonically with $\kappa$, but even at $\kappa=10$ QHD+Penalty takes 1.9 times longer than QCHD.
The constraint violation at $t_{\rm ref}=10$ decreases from $2.5\times10^{-2}$ ($\kappa=1$) to $4.0\times10^{-3}$ ($\kappa=10$).
In the range $\kappa\le10$ examined here, QCHD thus converges faster than QHD+Penalty for every $\kappa$ while satisfying the constraint exactly.

\begin{figure}[tbp]
  \centering
  \includegraphics[width=\textwidth]{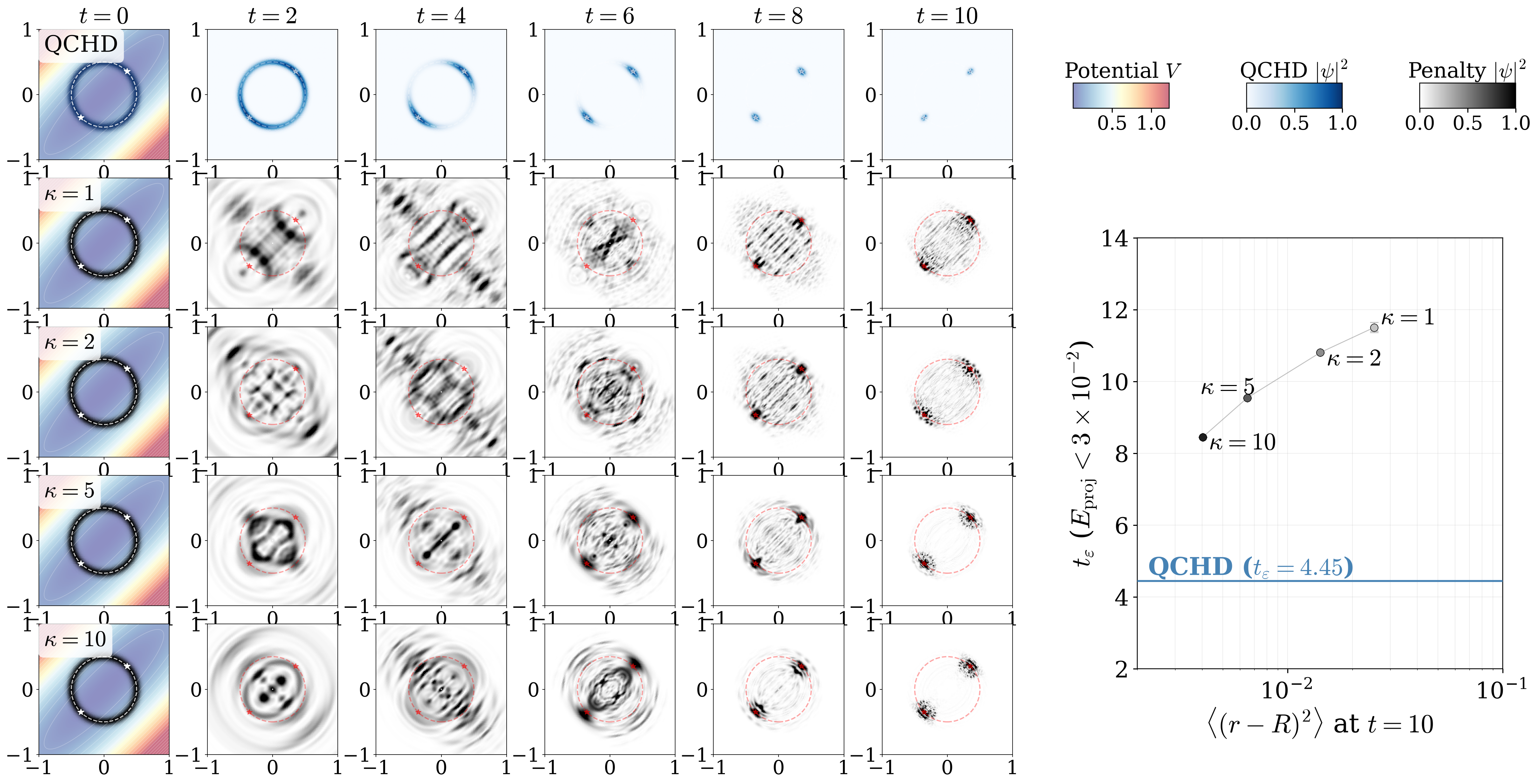}
  \caption{Spherical constraint with a quadratic potential (Sec.~\ref{sec:num-sphere}).
    Left: time evolution of the probability density $|\psi|^2$.
    The top row is QCHD and the second to fifth rows are QHD+Penalty with $\kappa=1,2,5,10$; the columns are $t=0,2,4,6,8,10$.
    The $t=0$ panels show the heat map of the potential $V$ in the background and the initial density as the opacity of the dark overlay;
    the white dashed line is the constraint circle and the stars are the minima of $V|_{\mc{M}_d}$.
    In the $t>0$ panels the shading is $|\psi|^2$ normalized to the maximum of each panel, and the dashed line (white in the QCHD row, red in the QHD+Penalty rows) is the constraint circle.
    The color bars are shown at the upper right.
    Right: Pareto plot. The horizontal axis is the constraint violation $\langle(r-R)^2\rangle$ at $t_{\rm ref}=10$,
    and the vertical axis is the time $t_\varepsilon$ at which $E_{\rm proj}<\varepsilon=3\times10^{-2}$ is reached.
    Since QCHD satisfies the constraint exactly, only its $t_\varepsilon$ is shown as a horizontal line.
    The changes of the points under grid refinement are smaller than the marker size (App.~\ref{app:numerics}).}
  \label{fig:num-sphere-snap}
\end{figure}

\subsection{Spherical constraint with the Ackley potential}
\label{sec:num-ackley}

The constraint manifold is the same circle as in the previous subsection ($R=0.5$), and the potential is replaced by the nonconvex Ackley function
\begin{align}
    V(\bx)&=\frac{1}{L}\,V_{\rm Ackley}\bigl(L\,A_1^{1/2}\bx\bigr)\,,
    \qquad L=8\,,
    \label{eq:V-ackley}
    \\
    V_{\rm Ackley}(\bm{X})
     &=
    -20\exp\Bigl(-0.2\sqrt{\tfrac{1}{2}|\bm{X}|^2}\Bigr)
    -\exp\Bigl(\tfrac{1}{2}\textstyle\sum_{i}\cos(2\pi X^i)\Bigr)
    +20+\ee\,,
\end{align}
where $X^i$ ($i=1,2$) are the components of $\bm{X}=LA_1^{1/2}\bx$, $|\bm{X}|^2=\sum_i (X^i)^2$, and $A_1$ is the matrix of Eq.~\eqref{eq:V-quad} ($R'=-0.9$), which gives the same shear as in the QRHD paper~\cite{2026-AbeNagai}.
The global minimum of $V$ is at the origin, inside the circle, and there are 16 local minima on the circle.
Among them the global minima on the circle are $\theta^*=\pi/4,\,5\pi/4$ ($V^*=0.474$),
and the difference from the neighboring local minima ($\theta^*\pm14^\circ$, $V=0.670$) is small.
The basins of the global minima (the intervals $36^\circ$--$54^\circ$ and $216^\circ$--$234^\circ$ bounded by the neighboring maxima)
occupy 10\% of the circle, so that the uniform initial state has $P_{\rm succ}=0.10$.
The penalty is $\kappa f^2$ with the same $f$ as in Sec.~\ref{sec:num-sphere}, $\kappa=1,2,5,10$, and $\gamma=0.3$.
The main indicator in this example is the failure probability $1-P_{\rm succ}$ evaluated at $t_{\rm ref}=8$, by which time QCHD has converged.
The geometry is the same as in Sec.~\ref{sec:num-sphere}, so that again the large-$\kappa$ effective theory coincides with QCHD up to a phase.

Fig.~\ref{fig:num-ackley-snap} shows the evolution of the probability density.
QCHD concentrates at the global minima by $t=8$ without leaking into local minima ($P_{\rm succ}=0.998$).
QHD+Penalty fails in two different ways:
for $\kappa=1,2$ the global minimum of $V+\kappa f^2$ lies near the origin and more than 30\% of the density leaves the circle toward it,
whereas for $\kappa=5,10$ the density is recaptured near the circle but splits into the local minima during the delayed recapture.
The failure probability at $t=8$ is $2\times10^{-3}$ for QCHD and $0.82$, $0.84$, $0.74$, and $0.56$ for QHD+Penalty with $\kappa=1,2,5,10$:
it decreases with $\kappa$ for $\kappa\ge2$ but remains above $0.5$, more than two orders of magnitude larger than for QCHD.
The failures of QHD+Penalty thus originate from the landscape $V+\kappa f^2$ in the ambient space, which differs from $V|_{\mc{M}_d}$ at finite $\kappa$.
Its global minimum lies off the circle for small $\kappa$, and its local minima on the circle trap the density during the recapture.
QCHD, which moves only on the circle, has neither problem.

\begin{figure}[tbp]
  \centering
  \includegraphics[width=\textwidth]{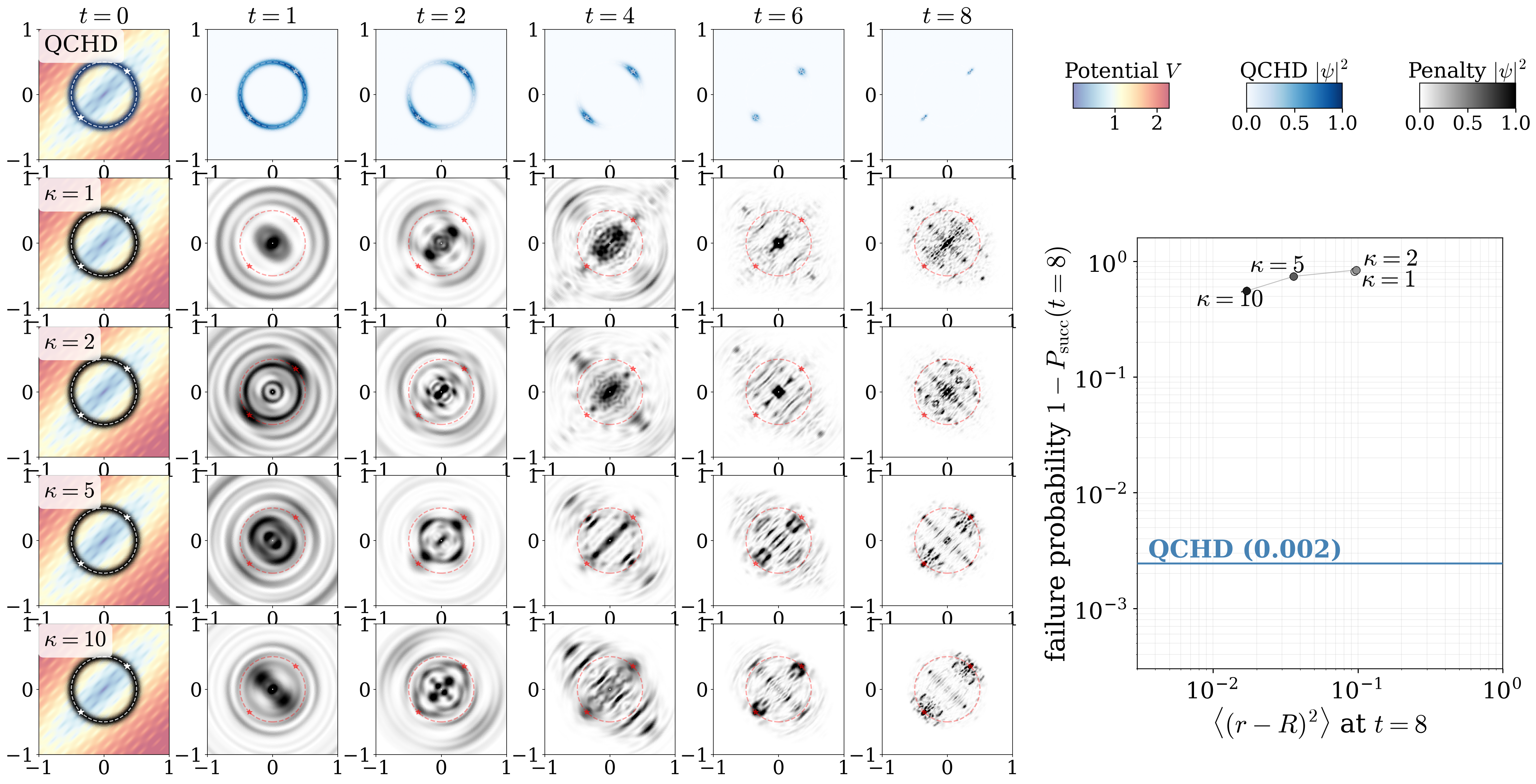}
  \caption{Spherical constraint with the Ackley potential (Sec.~\ref{sec:num-ackley}).
    The layout is the same as in Fig.~\ref{fig:num-sphere-snap}, with the columns $t=0,1,2,4,6,8$.
    In the background of the $t=0$ panels, the blue band along the diagonal is the low-$V$ valley (the global minimum is at the origin).
    In the Pareto plot on the right, the horizontal axis is the constraint violation $\langle(r-R)^2\rangle$ at $t=8$
    and the vertical axis is the failure probability $1-P_{\rm succ}(t=8)$.
    Only the failure probability of QCHD is shown as a horizontal line.}
  \label{fig:num-ackley-snap}
\end{figure}

\subsection{Elliptic constraint with a quadratic potential}
\label{sec:num-ellipse}

We take the ellipse
\begin{align}
    \mc{M}_1: \quad \Bigl(\frac{x}{R_a} \Bigr)^2 + \Bigl( \frac{y}{R_b} \Bigr)^2 =1\,,
    \qquad 
    R_a = 0.6\,,
    \qquad 
    R_b=0.4\,,
\end{align}
as the constraint manifold and the same quadratic potential as in Eq.~(\ref{eq:V-quad}).
The ellipse is represented by the constraint function 
\begin{align}
    f(\bx)=\frac{R_b}{R_a}x^2+\frac{R_a}{R_b}y^2-R_aR_b\,,    
\end{align} 
which reduces to $x^2+y^2-R^2$ for $R_a=R_b=R$.
The minimum on the ellipse is $V^*=0.0110$, attained at the two points with polar angles $\theta^*\simeq46^\circ,\,226^\circ$.
We set $\gamma=0.3$ and $\kappa=1,2,5,10$, and use the penalty $\kappa f^2$.
In this example the normal-bundle connection vanishes ($M=1$), but both $V_\geo$ and $|\nabla f|$ depend on the position.
With the parametrization $\bx=(R_a\cos\theta,R_b\sin\theta)$, the curvature is $k(\theta)=R_aR_b/(R_a^2\sin^2\theta+R_b^2\cos^2\theta)^{3/2}$,
so that $V_\geo=-k^2/(8m)$ varies from $-1.76$ at the ends of the major axis ($\theta=0,\pi$) to $-0.15$ at the ends of the minor axis, and $|\nabla f|=2(R_b^2\cos^2\theta+R_a^2\sin^2\theta)^{1/2}$ varies from $0.8$ to $1.2$ between the same points.
By Sec.~\ref{sec:penalty}, the zero-point energy $\frac12|\nabla f|\sqrt{2\kappa/m}$ of the normal oscillation is therefore position dependent and grows as $\sqrt{\kappa}$, so that the large-$\kappa$ limit of QHD+Penalty does not approach QCHD.
A distance penalty $\kappa d^2$ would remove this term but not $V_\geo$, and normalizing $f$ by $|\nabla f|$ leaves the corrections of footnote~\ref{fn:normalization}.
QCHD involves no such choice, since its Hamiltonian depends only on the shape of the ellipse and on $V|_{\mc{M}_d}$, not on the function $f$ that represents the ellipse.

Fig.~\ref{fig:num-ellipse-snap} shows the evolution of the probability density.
QCHD concentrates on the two minima by $t=8$, whereas QHD+Penalty first bursts out of the ellipse and is recaptured near the minima by $t=10$ for $\kappa\ge5$.
The additional potential in Eq.~\eqref{eq:H_eff-f2} is, for $n_\perp=0$, the zero-point energy $\frac12\Omega^{(f)}_\perp(\bm{q})$, whose variation along the ellipse is $\frac12(|\nabla f|_{\max}-|\nabla f|_{\min})\sqrt{2\kappa/m}=0.2\sqrt{2\kappa}\le0.9$ for $\kappa\le10$.
This variation is smaller than the variation $a(t)\,\Delta V$ of the cost term along the ellipse ($\Delta V\simeq0.23$) for $t\gtrsim t_c$, where $t_c$ is defined by $a(t_c)\,\Delta V=0.2\sqrt{2\kappa}$ and is at most $2.3$ for $\kappa\le10$, so that the additional potential affects the dynamics only at early times.
The variation of $V_\geo/a$ along the ellipse, $1.6/a$, is smaller than $a\,\Delta V$ for $t\gtrsim1.6$.
In the Pareto plot ($\varepsilon=3\times10^{-2}$), $t_\varepsilon$ is 4.37 for QCHD and 10.95, 10.17, 9.62, and 9.09 for QHD+Penalty ($\kappa=1,2,5,10$), i.e., 2.1--2.5 times longer.
As in Sec.~\ref{sec:num-sphere}, $t_\varepsilon$ decreases with $\kappa$ because the recapture becomes faster.
Since the zero-point variation grows as $\sqrt{\kappa}$, $t_c$ grows as $\ln\kappa$ and reaches the convergence time of QCHD ($4.4$) at $\kappa\approx10^2$.
Beyond this value, the additional potential is expected to delay the convergence, so that $t_\varepsilon$ should increase with $\kappa$ instead of decreasing.
Confirming the reversal numerically, and reaching the large-$\kappa$ regime of Sec.~\ref{sec:penalty}, in which the released normal energy $1/(4m\sigma_r^2)=100$ is below $\Omega_\perp^{(f)}$ ($\kappa\gtrsim10^3$), is beyond the present computation.
This example thus shows that the result of QHD+Penalty depends on the form of the penalty, as discussed in Sec.~\ref{sec:penalty}, while QCHD has no such dependence.

\begin{figure}[tbp]
  \centering
  \includegraphics[width=\textwidth]{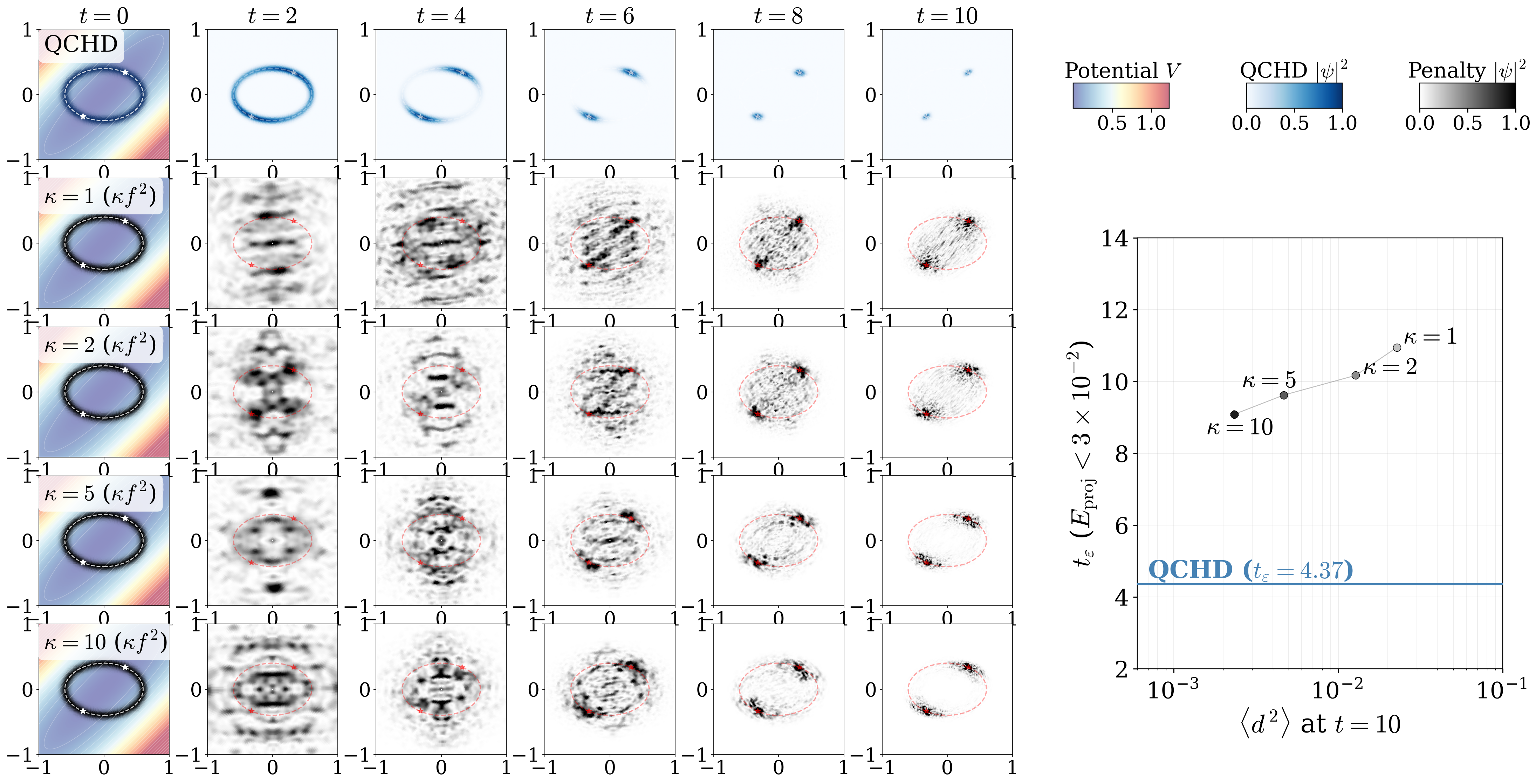}
  \caption{Elliptic constraint with a quadratic potential (Sec.~\ref{sec:num-ellipse}).
    Left: time evolution of $|\psi|^2$. The top row is QCHD and the second to fifth rows are QHD+Penalty with $\kappa=1,2,5,10$; the columns are $t=0,2,4,6,8,10$.
    The background of the $t=0$ panels is the potential $V$, the dashed line is the ellipse, and the stars are the minima of $V|_{\mc{M}_d}$
    (the conventions are the same as in Fig.~\ref{fig:num-sphere-snap}).
    Right: Pareto plot ($\varepsilon=3\times10^{-2}$).
    The horizontal axis is the constraint violation $\langle d^2\rangle$ at $t_{\rm ref}=10$. QCHD is the horizontal line.}
  \label{fig:num-ellipse-snap}
\end{figure}

\subsection{Two constraints with a nonconvex potential}
\label{sec:num-twocons}

Finally, we consider an example with $N=3$ and $M=2$.
We take the circle
\begin{align}
    \mc{M}_1: \quad x + y + z = 1\,,
    \qquad 
    x^2 + y^2 + z^2 =1\,,
\end{align}
the intersection of a plane and a sphere, with center $(1,1,1)/3$ and radius $\sqrt{2/3}$, as the constraint manifold.
It is represented by the two constraint functions
\begin{align}
    f^1(\bm{x}) = x + y + z -1 \,,
    \qquad 
    f^2(\bm{x}) = x^2 + y^2 + z^2 -1\,.
\end{align}
The potential is the nonconvex quartic function
\begin{align}
    V(\bm{x}) = \bm{x}^\TT A_1 \bm{x} + (\bm{x}^\TT A_2 \bm{x})^2\,,
    \qquad 
    A_1 = \begin{pmatrix}
        - 0.5 & -1 & 0.5 \\
        -1 & -0.5 & 0 \\
        0.5 & 0 & 1
    \end{pmatrix}\,,
    \qquad 
    A_2 = \begin{pmatrix}
        0 & 0 &  1 \\
        0 & 1 & 0 \\
        1 & 0 & -0.5
    \end{pmatrix}\,,
    \label{eq:V-twocons}
\end{align}
The circle has three local minima: the global minimum $V^*=-1.442$ at the angle $\varphi^*=89^\circ$ along the circle,
and two others, $V=0.27$ at $195^\circ$ and $V=1.16$ at $285^\circ$.
The basin of the global minimum occupies 53\% of the circle, so that the success probability is a weak indicator in this example (it is already 0.53 for the initial state). 
We therefore compare $t_\varepsilon$.
The penalty is $\kappa[(f^1)^2+(f^2)^2]$ with $\kappa=1,2,5,10$, and we set $\gamma=0.05$, a slower schedule than in the previous examples, which is needed for convergence in this landscape.
In this example the constraint manifold is a planar curve, whose torsion is zero, so that the normal-bundle connection vanishes.
The geometric potential $V_\geo=-k^2/(8m)$ with $k=\sqrt{3/2}$ is constant, and the Gram matrix ($G^{11}=3$, $G^{22}=4$, $G^{12}=2$) is constant along the circle, so that the zero-point energy is independent of the position on the circle.
The large-$\kappa$ effective theory of Sec.~\ref{sec:penalty} therefore coincides with QCHD up to a constant, and the differences found below are finite-$\kappa$ effects.

\begin{figure}[tbp]
  \centering
  \includegraphics[width=\textwidth]{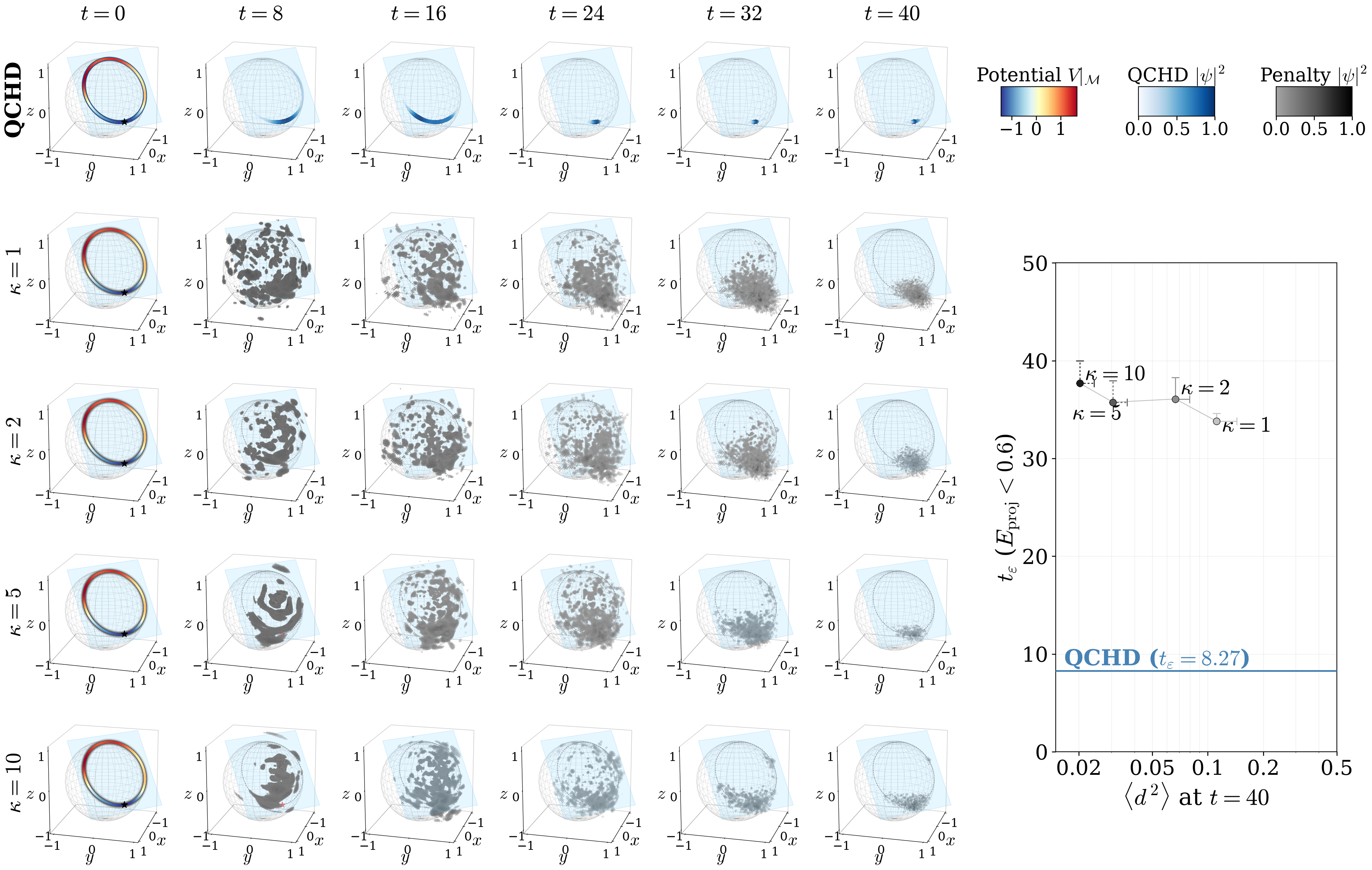}
  \caption{Two constraints in $\mathbb{R}^3$ with a nonconvex potential (Sec.~\ref{sec:num-twocons}).
    Left: time evolution of $|\psi|^2$. The columns are $t=0,8,16,24,32,40$; the top row is QCHD and the second to fifth rows are QHD+Penalty with $\kappa=1,2,5,10$.
    In the $t=0$ panels the potential $V|_{\mc{M}_d}$ on the circle is shown as the color of the circle on top of the (translucent) initial tube
    (color bar at the upper right; the star is the minimum).
    The light blue plane is $x+y+z=1$ and the gray sphere is $x^2+y^2+z^2=1$.
    In the QHD+Penalty rows, voxels whose density exceeds 5\% of the maximum are shown in gray scale
    (color bar at the upper right, normalized to the maximum of each panel).
    Right: Pareto plot ($\varepsilon=0.6$). The horizontal axis is the squared-distance to the circle $\langle d^2\rangle$ at $t_{\rm ref}=40$. QCHD is the horizontal line.
    The error bars are estimates of the systematic error due to the finite computational domain (App.~\ref{app:numerics}):
    for $\kappa=1,2$ they are the changes from the original to the enlarged domain (solid), and for $\kappa=5,10$ the relative change of $\kappa=2$ is assumed (dashed).
    The converged values are expected to lie in the direction of the bars.}
  \label{fig:num-twocons-snap}
\end{figure}

Fig.~\ref{fig:num-twocons-snap} shows the evolution of the probability density in ambient space $\mathbb{R}^3$.
QCHD concentrates on an arc at $t=8$ and on a single point at $t=24$.
In QHD+Penalty the motion released in the two normal directions spreads the density over almost the whole computational domain at $t=8$, and the recapture near the circle is delayed until $t\simeq24$--$32$ even for $\kappa=10$; for $\kappa=1,2$ a cloud remains at $t=40$ around the minimum of $V+\kappa\sum_A(f^A)^2$, which lies off the circle (at normal distance 0.27 for $\kappa=1$).
In the Pareto plot ($\varepsilon=0.6$; the initial value is $E_{\rm proj}(0)=1.76$),
$t_\varepsilon$ is 8.3 for QCHD and 33.8, 36.1, 35.8, and 37.7 for QHD+Penalty with $\kappa=1,2,5,10$, i.e., more than 4 times longer.
Increasing $\kappa$ reduces the constraint violation at $t_{\rm ref}=40$ from 0.11 to 0.020, but $t_\varepsilon$ stays between 34 and 38.
For large $\kappa$ the density converges while confined near the circle, whereas for small $\kappa$ it converges while retaining a component off the circle.
In this example the off-circle component happens to lower $E_{\rm proj}$, which does not count the normal displacement, so that the threshold is reached at similar times for all $\kappa$.
The large-$\kappa$ regime of Sec.~\ref{sec:penalty}, in which the released normal energy $1/(4m\sigma_r^2)=100$ is below the normal frequencies $\Omega_a\simeq1.7\sqrt{\kappa}$ and $3.3\sqrt{\kappa}$, requires $\kappa\gtrsim10^3$ and is beyond the present computation.

\paragraph{Summary of the numerical results.}

Tab.~\ref{tab:num-summary} summarizes the results of the four examples.
In all examples QCHD converges faster than QHD+Penalty while satisfying the constraints exactly.
The performance of QHD+Penalty depends on $\kappa$, and the dependence varies from example to example:
large $\kappa$ is favorable on the sphere and for the Ackley potential, where the failure probability nevertheless remains above $0.5$ for $\kappa\le10$,
and increasing $\kappa$ does not speed up the convergence with two constraints.
QCHD requires no such tuning parameter.

\begin{table}[tbp]
  \centering
  {\small
  \begin{tabular}{llccc}
    \hline
    Example & Indicator & QCHD & QHD+Penalty & Ratio \\
    \hline
    (1) sphere, quadratic & $t_\varepsilon$ ($\varepsilon=3\times10^{-2}$) & 4.5 & 8.5--11.5 & 1.9--2.6 \\
    (2) sphere, Ackley & failure probability ($t=8$) & $2\times10^{-3}$ & 0.56--0.84 & $>200$ \\
    (3) ellipse, quadratic & $t_\varepsilon$ ($\varepsilon=3\times10^{-2}$) & 4.4 & 9.1--11.0 & 2.1--2.5 \\
    (4) two constraints, quartic & $t_\varepsilon$ ($\varepsilon=0.6$) & 8.3 & 33.8--37.7 & 4.1--4.6 \\
    \hline
  \end{tabular}}
  \caption{Comparison of QCHD and QHD+Penalty in the four numerical examples.
    $t_\varepsilon$ is the time to reach $E_{\rm proj}<\varepsilon$, and the failure probability is $1-P_{\rm succ}$ at $t=8$.
    The values for QHD+Penalty are the ranges over $\kappa=1,2,5,10$.
    Ratio is the value for QHD+Penalty divided by that for QCHD.}
  \label{tab:num-summary}
\end{table}

\section{Conclusions}
\label{sec:conclusions}

In this paper, we have proposed Quantum Constrained Hamiltonian Descent (QCHD), a coherent quantum approach to continuous optimization under equality constraints. 
Starting from a classical formulation in which Lagrange multipliers enforce the velocity constraints $\dot{f}^A=0$, we applied Dirac’s method and constructed a quantum Hamiltonian whose kinetic term is the Laplace–Beltrami operator on the constraint manifold $\mc{M}_d$. 
Introducing the time-dependent factor $a(t)$ of QHD then defines the QCHD dynamics. For an initial state on $\mc{M}_d$, the evolution preserves the constraints exactly without introducing penalty terms. 
In local coordinates, the resulting Hamiltonian takes the form of QRHD, establishing QCHD as a realization of Riemannian quantum optimization constructed directly from equality constraints.

We have also clarified the relation between QCHD and QHD+Penalty.
For a strong penalty, the wave function of QHD+Penalty is squeezed into a thin neighborhood of $\mc{M}_d$, and its motion separates into an oscillation perpendicular to $\mc{M}_d$ and a motion along $\mc{M}_d$.
We derived the effective Hamiltonian for the motion along $\mc{M}_d$ in the strong penalty limit ($\kappa\to \infty$), assuming that the penalty is the squared-distance from $\mc{M}_d$, that the perpendicular oscillation stays within the thin neighborhood, and that $a(t)$ varies slowly compared with this oscillation.
The effective Hamiltonian differs from that of QCHD by the geometric potential $V_\geo$, which is determined by the extrinsic curvature of $\mc{M}_d$, and, for two or more constraints, by a coupling to the normal-bundle connection, which describes how the perpendicular directions rotate along $\mc{M}_d$.
Both terms survive the limit $\kappa\to\infty$.
Thus QCHD, which quantizes the dynamics on $\mc{M}_d$, and QHD+Penalty, which keeps the quantum dynamics in $\mathbb{R}^N$ near $\mc{M}_d$ by a penalty, lead to different quantum dynamics in general, even though they describe the same classical constrained motion in this limit.
The two agree up to an overall phase when $V_\geo$ is constant and the connection term is absent, as for a sphere.
For the penalty $\kappa\sum_A(f^A)^2$ used in practice, the zero-point energy of the perpendicular oscillation adds a position-dependent potential whenever $|\nabla f|$ varies along $\mc{M}_d$.
This potential grows as $\sqrt{\kappa}$ and is not removed by normalizing the constraint functions by their gradients, so that the result of QHD+Penalty depends on the functions $f^A$ chosen to represent $\mc{M}_d$.
The QCHD Hamiltonian, which is determined by $\mc{M}_d$ and $V|_{\mc{M}_d}$ alone, has no such dependence.

Our numerical simulations compared QCHD with QHD+Penalty in four examples involving spherical and elliptic constraints, two simultaneous constraints, and both quadratic and nonconvex cost functions. For the initial states and penalty strengths considered, QCHD satisfied the constraints exactly and reached the convergence threshold 1.9--4.6 times faster than QHD+Penalty in the three examples compared by $t=t_\varepsilon$. 
In the spherical Ackley example, its failure probability was smaller by more than two orders of magnitude. The dependence of QHD+Penalty on the penalty strength varied across the examples, and increasing the strength did not always improve its performance. 
These results imply that the effectiveness of enforcing equality constraints through the kinetic operator in the finite-penalty regime explored here.

Several questions remain for the development of QCHD as an optimization algorithm on a quantum computer. 
Efficient preparation of an initial state on $\mc{M}_d$ is an essential ingredient for QCHD since the dynamics keeps the state on $\mc{M}_d$ once such a state has been supplied. 
In order to show the actual advantages of QCHD, we need to estimate the resource and cost of quantum circuit realizations including the implementation of the Laplace--Beltrami operator associated with general constraint functions. 
Further numerical studies should examine a wider range of geometries and systematically vary the number of constraints, the initial state, and the evolution schedule. 
Comparisons with classical Riemannian optimization methods and other constrained quantum approaches will help identify the problems for which QCHD is most effective. 
Extending the framework to inequality constraints, potentially through barrier-induced metrics, is another natural direction. 

\section*{Acknowledgments}
\noindent
 The authors thank Naoki Yamamoto and Katsuaki Ishigaki for helpful comments and discussions.
This work is supported by MEXT Quantum Leap Flagship Program Grant No. JPMXS0120319794.
Y.A. acknowledges support by JST Grant No. JPMJPF2221.
In preparing this work, the authors used the generative AI tools Claude Code (Anthropic) with the Claude Fable 5.1 model and Codex (OpenAI) with the GPT-6 Astra model to check the equations, to implement and run the numerical calculations, and to proofread the manuscript.
The authors take full responsibility for the content of the manuscript.

\appendix

\section[Hamiltonian formalism of constrained systems and the Dirac bracket]{Hamiltonian formalism of constrained systems\\ and the Dirac bracket}
\label{app:hamiltonian}

In this appendix, we apply Dirac's procedure~\cite{1950-Dirac-CJM,1958-Dirac-PRSLA,1964-Dirac} to the Lagrangian \eqref{eq:classical-L}
\begin{align}
    L = \frac{m}{2} \delta_{ij} \dot{x}^i \dot{x}^j - V(\bm{x}) + \lambda_A \dot{f}^A(\bm{x})\,,
\end{align}
and derive the Dirac brackets \eqref{eq:Dirac-brakect-xp} and the Hamiltonian \eqref{eq:H-classical}.

\paragraph{Primary constraint.}

Let us begin with the Lagrangian \eqref{eq:classical-L}.
The canonical momentum of $x^i$ is given by \eqref{eq:classical-p}, but the canonical momentum of the multiplier $\lambda_A$ vanishes automatically because of the lack of $\dot{\lambda}_A$
\begin{align}
    \pi^A_\lambda \coloneqq \frac{\del L}{\del \dot{\lambda_A}} = 0\,.
\end{align}
The primary constraint for \eqref{eq:classical-L} then becomes 
\begin{align}
    \phi^A_1 \coloneqq \pi^A_\lambda \approx 0\,.
    \label{eq:primary-constraint-app}
\end{align}
The Hamiltonian of a system with a primary constraint is ambiguous in the terms involving $\dot{\lambda}_A$, so we write the Hamiltonian by introducing new multipliers $\mu_A$ as 
\begin{align}
    \tilde{H} = H' + \mu_A \phi^A_1 = H' + \mu_A \pi^A_\lambda\,,
    \quad 
    H' =  p_i \dot{x}^i - L\,,
\end{align}
where $\mu_A$ is also regarded as $\dot{\lambda}_A$.
Using Eqs.~\eqref{eq:classical-L} and \eqref{eq:classical-p}, this Hamiltonian is rewritten as 
\begin{align}
    \tilde{H} = \frac{1}{2m} \delta^{ij}(p_i - \lambda_A n^A{}_i)(p_j - \lambda_B n^B{}_j) + V(\bm{x}) + \mu_A \pi^A_\lambda\,.
    \label{eq:tilde-H-derivation}
\end{align}
We note that this Hamiltonian does not depend on $\dot{x}^i$ in spite of the singular system.

\paragraph{Secondary constraint.}

Although we get the Hamiltonian, we have issues that need to be considered.
The consistency of the primary constraint \eqref{eq:primary-constraint-app} with the time evolution requires the conditions of 
\begin{align}
    \dot{\phi}^A_1 = 
    \{\phi^A_1, H'\}_P = 
    - \frac{\del H'}{\del \lambda_A} = \frac{1}{m} n^{Ai}(p_i - \lambda_B n^B{}_i) \approx 0\,.
    \label{eq:secondary-constraint}
\end{align}%
This new constraint is called a secondary constraint:
\begin{align}
    \phi^A_2 \coloneqq n^{Ai}(p_i - \lambda_B n^B{}_i) = n^{Ai}p_i - G^{AB} \lambda_B \approx 0\,.
\end{align}
This constraint means that $p_i - \lambda_A n^A{}_i$ has no normal component with respect to the constraint manifold.
If we impose the constraints, we then find that $p_i -\lambda_A n^A{}_i$ is rewritten as the momentum in the tangential direction:
\begin{align}
    p_i - \lambda_A n^A{}_i \approx M^j{}_i (p_j - \lambda_A n^A{}_j) = M^j{}_i p_j = P_i\,.
\end{align}
Using this equation, the Hamiltonian of this constraint system becomes from \eqref{eq:tilde-H-derivation}
\begin{align}
    H \approx \frac{1}{2m} \delta^{ij} P_i P_j + V(\bm{x})\big|_{\mc{M}_d} 
    =\frac{1}{2m} M^{ij} p_i p_j + V(\bm{x}) \big|_{\mc{M}_d}\,, 
\end{align}
which is \eqref{eq:H-classical} we show in Sec.~\ref{sec:classical}.

The consistency of $\phi^A_2$ with the time evolution does not provide any additional constraints and gives the equation to determine the multiplier $\mu_A$:
\begin{align}
    \dot{\phi}^A_2 = \{\phi^A_2,\tilde{H}\}_P = \{\phi^A_2, H'\} + \{ \phi^A_2,\phi^B_1\} \mu_B \approx 0\,,
\end{align}
and the multiplier is expressed as 
\begin{align}
    \mu_A \approx G_{AB} \left[ \frac{1}{m} P^i P^j \del_i \del_j f^B - n^{Bi} \del_i V \right]\,.
\end{align}
We then find that the constraints in the Dirac's sense are $\phi^A_1$ and $\phi^A_2$.

\paragraph{Dirac bracket.}

Since all constraints have now been identified, let us consider the Dirac brackets \eqref{eq:dirac-bracket-def}.
The correction coming from the constraints are the matrix $C_{\ms{AB}} = \{ \phi_{\ms{A}}, \phi_{\ms{B}}\}_P$ with $\phi_{\ms{A}} = (\phi^A_1, \phi^A_2)^{\mr{t}}$.
This matrix elements are evaluated as 
\begin{align}
    C = \begin{pmatrix}
        \{\phi^A_1, \phi^B_1\}_P & \{\phi^A_1, \phi^B_2\}_P
        \\
        \{\phi^A_2, \phi^B_1\}_P & \{\phi^A_2, \phi^B_2\}_P
    \end{pmatrix}
    = \begin{pmatrix}
        0 & G^{AB}
        \\
        - G^{AB} & S^{AB}
    \end{pmatrix}\,,
\end{align}
where $G^{AB}$ is the Gram matrix.
In this calculation, we use 
\begin{align}
    \{\phi^A_1, \phi^B_2\}_P = \{\pi^A_\lambda, - G^{BC} \lambda_C\}_P = G^{AB}\,.
\end{align}
The Poisson bracket between $\phi^A_2$'s becomes
\begin{align}
    S^{AB} \coloneqq \{ \phi^A_2 , \phi^B_2\}_P = (n^{Bj} \del_j n^{Ai} - n^{Aj} \del_j n^{Bi}) (p_i- \lambda_C n^C{}_i)\,.
\end{align}
This non-vanishing result comes from the non-commutativity among the normal vectors defined by the different constraint conditions.

Since $\det C = (\det G^{AB})^2 \neq 0$, all of constraints $\phi^A_{1,2}$ are second-class constraints and the Dirac bracket \eqref{eq:dirac-bracket-def} is well-defined.
The inverse matrix of $C$ is 
\begin{align}
    C^{-1} = \begin{pmatrix}
        G^{-1} S G^{-1} & - G^{-1} \\
        G^{-1} & 0
    \end{pmatrix}\,.
\end{align}
We note that the lower-right block vanishes $(C^{-1})_{22} = 0$, and the non vanishing Poisson bracket among the canonical variables and the constraints are $\{x^i,\phi^A_2\}_P$ and $\{p_i, \phi^A_2\}_P$.
Because the canonical variables $(x^i,p_i)$ commute with the conjugate momentum of the multiplier $\pi^A_\lambda = \phi^A_1$, the non-trivial corrections in \eqref{eq:dirac-bracket-def} could come only from terms with $(C^{-1})_{22}$, which vanishes identically.
We thus find that the Dirac brackets of the canonical variables are the same as the Poisson brackets,
\begin{align}
    \{x^i,p_j\}_D = \{x^i,p_j\}_P = \delta^i{}_j\,,
    \qquad 
    \{x^i,x^j\}_D = \{p_i, p_j\}_D = 0\,.
\end{align}
These simplified Dirac brackets result from the multiplier term $\lambda_A \dot{f}^A$ in the Lagrangian \eqref{eq:classical-L}.
If we take the multiplier term as $\lambda_A' f^A$, the set of constraints changes and the Dirac brackets become more complicated \cite{1990-HommaEtAl-ZPC-PaF}.

It would be instructive to consider the Dirac brackets with the tangential momentum $P_i$ instead of the conjugate momentum $p_i$.
Since these are related by the projector as $P_i = M^j{}_i p_j$, the Dirac bracket with $\bm{x}$ becomes 
\begin{align}
    \{x^i, P_j\}_D = M^i{}_j\,.
\end{align}
In addition, this $P_i$ contains the $\bm{x}$ dependence through the projector and the Dirac bracket between $\bm{P}$'s is no longer trivial:
\begin{align}
    \{P_i, P_j\}_D = G_{AB} (n^A{}_j M^k{}_i - n^A{}_i M^k{}_j) (\del_k \del_l f^B) P^l\,.
    \label{eq:PP-dirac-app}
\end{align}

\section{Geometric formulation for constrained quantum mechanics}
\label{app:geometry}

In this appendix, we summarize our geometric notation and show some geometric properties used in Secs.~\ref{sec:constrained} and \ref{sec:penalty}.
Our notation for the coordinates and metrics is shown in Tab.~\ref{tab:coordinates-tab}, and the geometric setting is illustrated in Fig.~\ref{fig:indices}.

\begin{table}[t]
    \centering
    \begin{tabular}{ccccc}
        \toprule
         Space & Dimension & Coordinate & Metric & Sec./App. \\
         \midrule 
         \multirow{3}{*}{$\mbb{R}^N$} & \multirow{3}{*}{$N$} & $x^i$ & $\delta_{ij}$ & Main\\
         & & $y^I=(q^\alpha,\eta^A)$ & $\mc{G}_{IJ}$ & App.~\ref{app:geoprp-constraint-adapted}\\
         & & $z^\mu = (q^\alpha, \xi^a)$ & $h_{\mu\nu}$ & App.~\ref{app:geoprp-tubular}\\
         \hline
         $\mc{M}_d$ & $d=N-M$ & $q^\alpha$ & $g_{\alpha\beta}$ & Main \\
         \hline
         \multirow{2}{*}{$N_q \mc{M}_d$} & \multirow{2}{*}{$M$} & $\eta^A = f^A$ & $G_{AB}$ & App.~\ref{app:geoprp-constraint-adapted}\\
         & & $\xi^a$ & $\delta_{ab}$ & Sec.~\ref{sec:penalty}, App.~\ref{app:geoprp-tubular} \\
         \bottomrule
    \end{tabular}
    \caption{A table of coordinates and their notation we use in this paper.
    $N_{\bm{q}} \mc{M}_d$ denotes the normal space to $\mc{M}_d$ at $\bm{q}$.}
    \label{tab:coordinates-tab}
\end{table}

\subsection{Preliminary}

In the $N$-dimensional Euclidean space $\mbb{R}^N$ with coordinates $x^i$, we consider $M$ independent constraints
\begin{align}
    f^A(\bm{x}) = 0\,,
\end{align}
and submanifold characterized by
\begin{align}
    \mc{M}_d = \{\bm{x} \in \mbb{R}^N | f^A(\bm{x}) =0 \}\,.
\end{align}
We call this the constraint manifold.
The normal vector and the Gram matrix are written as 
\begin{align}
    n^A{}_i = \del_i f^A\,,
    \qquad 
    G^{AB} = \mr{Gram}^{AB} = \delta^{ij} n^A{}_i n^B{}_j\,.
\end{align}
Using these normal vectors and the Gram matrix, the projector onto the constrained manifold is given by 
\begin{align}
    M^i{}_j = \delta^i_j - n^{Ai} G_{AB} n^B{}_j\,.
\end{align}

We write the local coordinate on the constraint manifold $\mc{M}_d$ as $q^\alpha$ and the induced metric as $g_{\alpha\beta}$.
The coordinate embedding is written by 
\begin{align}
    x^i = X^i(\bm{q})\,,
\end{align}
and the frame field is defined by 
\begin{align}
    e^i{}_\alpha(\bm{q}) = \del_\alpha X^i(\bm{q})\,,
    \qquad
    \delta_{ij} e^i{}_\alpha e^j{}_\beta = M_{ij} e^i{}_\alpha e^j{}_\beta = g_{\alpha\beta}\,.
\end{align}

\subsubsection*{Constraint-adapted coordinates}

In the neighborhood of the constraint manifold, we take the constraints as coordinates $\eta^A = f^A$ and consider the $N$-dimensional coordinates $y^I = (q^\alpha, \eta^A)$, which we call constraint-adapted coordinates.
In this coordinate, the constraint surface is just given by $\eta^A = 0$.
The relation between $x^i$ and $y^I$ is assumed to be given by\footnote{
We note that this $\mc{X}^i(\bm{q},\bm{\eta})$ is different from the embedding denoted by $X^i(\bm{q})$.
}
\begin{align}
    x^i = \mc{X}^i(\bm{q}, \bm{\eta})\,,
\end{align}
and the tangent vector is 
\begin{align}
    E^i{}_\alpha(\bm{q},\bm{\eta}) = \del_\alpha \mc{X}^i(\bm{q},\bm{\eta})\,.
\end{align}
These are related to the local coordinate fields by 
\begin{align}
    X^i(\bm{q}) = \mc{X}^i(\bm{q}, \bm{\eta}=\bm{0})\,,
    \qquad 
    e^i{}_\alpha(\bm{q}) = E^i{}_\alpha(\bm{q},\bm{\eta}=\bm{0})\,.
\end{align}
The local coordinate metric is also given by
\begin{align}
    \gamma_{\alpha\beta}(\bm{q},\bm{\eta}) = \delta_{ij} E^i{}_\alpha E^j{}_\beta\,,
    \qquad 
    g_{\alpha\beta}(\bm{q}) = \gamma_{\alpha\beta}(\bm{q},\bm{\eta}=\bm{0})\,.
    \label{eq:relation-gamma-g}
\end{align}

The $N$-dimensional metric tensors are introduced by
\begin{align}
    \dd s^2 = \delta_{ij} \dd x^i \dd x^j = \mc{G}_{IJ} \dd y^I \dd y^J\,,
    \qquad 
    \mc{G}_{IJ} =\del_I \mc{X}^i \del_J \mc{X}_i\,.
    \label{eq:M-dim-metric-mathcalGIJ}
\end{align}
It is noted that from the definition~\eqref{eq:M-dim-metric-mathcalGIJ}, this metric satisfies
\begin{align}
    \mc{G}_{\alpha\beta} =\del_\alpha \mc{X}^i \del_\beta \mc{X}_i = \gamma_{\alpha\beta}\,,
    \qquad 
    \mc{G}^{AB} = \delta^{ij} \frac{\del \eta^A}{\del x^i} \frac{\del \eta^B}{\del x^j} = G^{AB}\,.
\end{align}

Here, we also defined
\begin{align}
    F^i{}_A(\bm{q},\bm{\eta}) = \del_A \mc{X}^i\,,
\end{align}
which satisfies 
\begin{align}
    n^A{}_i F^i{}_B = \delta^A_B\,.
    \label{eq:nF-delta}
\end{align}
The relation between $F^i{}_A$ and $n_A{}^i = \delta^{ij} G_{AB} n^B{}_j$ is given by the follows.

\begin{thm}
$F^i{}_A$ and $n_A{}^i$ are related by 
\begin{align}
    F^i{}_A = \del_A \mc{X}^i = n_A{}^i + E^i{}_\alpha \gamma^{\alpha\beta} \mc{G}_{\beta A}\,.
    \label{eq:relation-FiA-niA}
\end{align}
\end{thm}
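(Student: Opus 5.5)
Decompose the vector $F^i{}_A=\del_A\mc{X}^i$ in $\mbb{R}^N$ along the splitting into the span of the tangent vectors $E^i{}_\alpha$ of the level set $\eta=\text{const}$ and its orthogonal complement, which is spanned by the gradients $n^A{}_i=\del_i f^A$. At a point $(\bm{q},\bm{\eta})$ the vectors $E^i{}_\alpha$ are tangent to the level set $\{f^B=\eta^B\}$, because $f^B(\mc{X}(\bm{q},\bm{\eta}))=\eta^B$ identically. Differentiating this identity in $q^\alpha$ gives $n^B{}_i E^i{}_\alpha=0$. Differentiating it in $\eta^A$ gives Eq.~\eqref{eq:nF-delta}, $n^B{}_iF^i{}_A=\delta^B_A$. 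Since the $E^i{}_\alpha$ ($d$ of them, linearly independent because $y^I$ are coordinates) and the $n^{Ai}$ ($M$ of them, independent by the rank assumption) are mutually orthogonal, together they span $\mbb{R}^N$. We may therefore write
\begin{align}
    F^i{}_A = c_{AB}\, n^{Bi} + E^i{}_\alpha\, d^\alpha{}_A
\end{align}
for some coefficients $c_{AB}$ and $d^\alpha{}_A$ to be determined.

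To find the normal coefficients, contract with $n^C{}_i$. Using $n^C{}_iE^i{}_\alpha=0$ and Eq.~\eqref{eq:nF-delta}, this gives $\delta^C_A=c_{AB}G^{BC}$, so $c_{AB}=G_{AB}$. The normal part is therefore $G_{AB}n^{Bi}=n_A{}^i$, which is exactly the first term in Eq.~\eqref{eq:relation-FiA-niA}.

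To find the tangential coefficients, contract with $E_{i\beta}=\delta_{ij}E^j{}_\beta$. On the left this gives $\delta_{ij}E^j{}_\beta\del_A\mc{X}^i=\mc{G}_{\beta A}$ by the definition \eqref{eq:M-dim-metric-mathcalGIJ}. On the right it gives $\gamma_{\beta\alpha}d^\alpha{}_A$, since the normal part drops out by orthogonality. Inverting the induced metric on the level set, which is possible because $\gamma_{\alpha\beta}$ is a Gram matrix of independent vectors, yields $d^\alpha{}_A=\gamma^{\alpha\beta}\mc{G}_{\beta A}$. Substituting both sets of coefficients back into the decomposition gives Eq.~\eqref{eq:relation-FiA-niA}.

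There is one point to handle with care. The inverse $\gamma^{\alpha\beta}$ is the inverse of the $d\times d$ block $\gamma_{\alpha\beta}=\mc{G}_{\alpha\beta}$ itself, not the $\alpha\beta$ block of the full inverse $\mc{G}^{IJ}$; these differ when $\mc{G}_{\alpha A}\neq0$. The only remaining subtlety is the orthogonality $n^B{}_iE^i{}_\alpha=0$ away from $\bm{\eta}=0$, which holds because the coordinates are chosen with $\eta^A=f^A$ exactly. It is worth stating this explicitly, since it is what makes the splitting valid on the whole neighborhood rather than only on $\mc{M}_d$.
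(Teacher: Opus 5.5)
Your proof is correct and follows essentially the same route as the paper, which subtracts $n_A{}^i$ first, uses \eqref{eq:nF-delta} to show that $F^i{}_A-n_A{}^i$ is annihilated by $n^B{}_i$ and hence tangent, and then fixes the tangential coefficients by contracting with $E_{i\beta}$ exactly as you do. Your version only adds the explicit spanning argument and the orthogonality $n^B{}_iE^i{}_\alpha=0$ off $\mc{M}_d$, which the paper uses without stating them.
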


\proof

$F^i{}_A - n^i{}_A$ is tangent as
\begin{align}
    n^B{}_i(F^i{}_A - n_A{}^i) = \delta^B_A - \delta^B_A =0\,
\end{align}
where we use \eqref{eq:nF-delta} and we parametrize it as $F^i{}_A - n_A{}^i = E^i{}_\alpha \ms{V}^\alpha{}_A$.
From the calculation of
\begin{align}
    \delta_{ij} E^i{}_\beta (F^j{}_A - n_A{}^j) = E_i{}_\beta E^i{}_\alpha \ms{V}^\alpha{}_A = \gamma_{\beta \alpha} \ms{V}^\alpha{}_A\,,
\end{align}
and 
\begin{align}
    \delta_{ij} E^i{}_\beta (F^j{}_A - n_A{}^j) = \delta_{ij} E^i{}_\beta F^j{}_A = \delta_{ij} \frac{\del \mc{X}^i}{\del q^\beta} \frac{\del \mc{X}^j}{\del \eta^A} = \mc{G}_{\beta A}\,,
\end{align}
we find that
\begin{align}
    \ms{V}^\alpha{}_A = \gamma^{\alpha\beta} \mc{G}_{\beta A}\,,
\end{align}
and 
\begin{align}
    F^i{}_A = n_A{}^i + E^i{}_\alpha \gamma^{\alpha\beta} \mc{G}_{\beta A}\,.
\end{align}
\qed

\begin{thm}
In constraint-adapted coordinates, the ambient metric and its inverse have the following block forms:
\begin{align}
    \mc{G}_{IJ} = \begin{pmatrix}
        \gamma_{\alpha\beta} & \mc{G}_{\alpha B} \\
        \mc{G}_{A\beta} & G_{AB} + \mc{G}_{A\alpha} \gamma^{\alpha\beta} \mc{G}_{\beta B}
    \end{pmatrix}\,,
    \quad 
    \mc{G}^{IJ} = \begin{pmatrix}
        \gamma^{\alpha\beta} + \gamma^{\alpha \gamma} \mc{G}_{\gamma A} G^{AB} \mc{G}_{B \delta} \gamma^{\delta \beta} & - \gamma^{\alpha \gamma} \mc{G}_{\gamma C} G^{CB}
        \\
        - G^{AC} \mc{G}_{C\gamma} \gamma^{\gamma \beta} & G^{AB}
    \end{pmatrix}\,.
    \label{eq:mathcalG_IJ-mathcalG^IJ}
\end{align}
\end{thm}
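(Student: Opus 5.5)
The plan is to compute the covariant block $\mc{G}_{IJ}=\del_I\mc{X}^i\del_J\mc{X}_i$ directly from the frames $E^i{}_\alpha$ and $F^i{}_A$. The decomposition \eqref{eq:relation-FiA-niA} of the previous theorem does most of the work. Then I will obtain the inverse either by verifying the product with the proposed matrix or, more cleanly, from the Jacobian of $x\mapsto y$.

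For the lower index form, the $\alpha\beta$ block is $\gamma_{\alpha\beta}$ by definition \eqref{eq:relation-gamma-g}. The off-diagonal block $\mc{G}_{\alpha B}=E_{i\alpha}F^i{}_B$ simply defines the notation. For the $AB$ block I will insert \eqref{eq:relation-FiA-niA} into $F^i{}_A F_{iB}$:
\begin{align}
F^i{}_AF_{iB}=n_A{}^in_{Bi}+n_A{}^iE_{i\alpha}\gamma^{\alpha\beta}\mc{G}_{\beta B}+(A\leftrightarrow B)\ \text{cross term}+\mc{G}_{A\alpha}\gamma^{\alpha\beta}\mc{G}_{\beta B}\,.
\end{align}
The cross terms vanish because $n^A{}_iE^i{}_\alpha=\del_\alpha f^A=\del_\alpha\eta^A=0$, since $\eta$ and $q$ are independent coordinates. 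The first term is $G_{AC}G^{CD}G_{DB}=G_{AB}$, using $n_A{}^i=\delta^{ij}G_{AC}n^C{}_j$. This gives the stated lower block.

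For the inverse, I will use that $\mc{G}^{IJ}=\delta^{ij}\,\del_i y^I\del_j y^J$. The $AB$ block is immediately $\delta^{ij}n^A{}_in^B{}_j=G^{AB}$, in agreement with the paper's remark. For the remaining blocks I need $\del_i q^\alpha$. Duality $\del_i y^I\,\del_J\mc{X}^i=\delta^I_J$ means $\del_i q^\alpha$ annihilates $F^i{}_A$ and satisfies $\del_iq^\alpha E^i{}_\beta=\delta^\alpha_\beta$. I will make the ansatz
\begin{align}
\del_iq^\alpha=\gamma^{\alpha\beta}E_{i\beta}+c^\alpha{}_B\,n^B{}_i\,.
\end{align}
The condition on $E$ holds because $n^B{}_iE^i{}_\beta=0$. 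The condition on $F$, using \eqref{eq:nF-delta}, gives $\gamma^{\alpha\beta}\mc{G}_{\beta A}+c^\alpha{}_A=0$. Contracting with $n^A{}_j$ then yields the off-diagonal block $-\gamma^{\alpha\gamma}\mc{G}_{\gamma C}G^{CB}$. Squaring gives $\gamma^{\alpha\beta}+\gamma^{\alpha\gamma}\mc{G}_{\gamma A}G^{AB}\mc{G}_{B\delta}\gamma^{\delta\beta}$, again with the cross terms killed by $n\cdot E=0$.

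The only delicate point is the index placement and the convention clash: here $G^{AB}$ is the Gram matrix and $G_{AB}$ is its inverse. I need to be sure that $n_A{}^i$ is built with $G_{AB}$ so that $n_A{}^in^B{}_i=\delta^B_A$. As a final sanity check I will multiply the two block matrices, which is a Schur-complement identity with Schur complement $G_{AB}$, and confirm the result is the identity.
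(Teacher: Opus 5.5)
Your proposal is correct. For the covariant block you do what the paper does in substance. The paper substitutes $\dd x^i = E^i{}_\alpha(\dd q^\alpha + \gamma^{\alpha\beta}\mc{G}_{\beta A}\dd\eta^A) + n_A{}^i\dd\eta^A$ into $\dd s^2$ and completes the square, while you expand $F^i{}_A F_{iB}$ componentwise. Both rest on \eqref{eq:relation-FiA-niA} and on $n^A{}_i E^i{}_\alpha = \del_\alpha \eta^A = 0$. That identity holds throughout the neighborhood, not only on $\mc{M}_d$, and you state it explicitly where the paper drops the cross terms silently.

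The genuine difference is the inverse. The paper factorizes $\mc{G}_{IJ}$ as (lower-triangular)$\times\mathrm{diag}(\gamma_{\alpha\beta},G_{AB})\times$(upper-triangular) and inverts the three factors. You instead compute $\mc{G}^{IJ} = \delta^{ij}\del_i y^I \del_j y^J$ by building the dual coframe: $\del_i\eta^A = n^A{}_i$ and $\del_i q^\alpha = \gamma^{\alpha\beta}\bigl(E_{i\beta} - \mc{G}_{\beta B}\, n^B{}_i\bigr)$. Your ansatz is legitimate because it satisfies both duality conditions and the inverse Jacobian is unique.

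Your route needs no matrix inversion. It shows transparently why the normal--normal block is exactly the Gram matrix $\nabla f^A\cdot\nabla f^B$. It also shows that the $q$-gradients pick up a normal component precisely when the shift $\gamma^{\alpha\beta}\mc{G}_{\beta A}$ is nonzero.

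The paper's factorization buys something you do not get automatically: the determinant identity $\det\mc{G} = \det\gamma\,\det G$ of \eqref{eq:decompose-detmathcalG}. The subsequent measure and Laplace--Beltrami theorems rely on it. In your approach this comes from the Schur-complement check you mention at the end, since the Schur complement of $\gamma_{\alpha\beta}$ in $\mc{G}_{IJ}$ is exactly $G_{AB}$. It would be worth stating that explicitly if the result is to feed into the later theorems.
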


\proof

From \eqref{eq:relation-FiA-niA}, $\dd x^i$ is expressed as
\begin{align}
    \dd x^i = E^i{}_\alpha \dd q^\alpha + F^i{}_A \dd \eta^A
    = E^i{}_\alpha \dd q^\alpha +(n_A{}^i + E^i{}_\alpha \gamma^{\alpha \beta} \mc{G}_{\beta A}) \dd \eta^A\,,
\end{align}
and we find that the metric has the following form by substituting this $\dd x^i$
\begin{align}
    \dd s^2 &= \delta_{ij} E^i{}_\alpha E^j{}_\beta (\dd q^\alpha + \gamma^{\alpha \gamma}\mc{G}_{\gamma A} \dd \eta^A)(\dd q^\beta + \gamma^{\beta \delta} \mc{G}_{\delta B} \dd \eta^B) + \delta_{ij} n_A{}^i n_B{}^j \dd \eta^A \dd \eta^B
    \nn \\
    &= \gamma_{\alpha\beta} (\dd q^\alpha + \gamma^{\alpha \gamma}\mc{G}_{\gamma A} \dd \eta^A)(\dd q^\beta + \gamma^{\beta \delta} \mc{G}_{\delta B} \dd \eta^B) + G_{AB} \dd \eta^A \dd \eta^B\,.
\end{align}
The metric components are then read as
\begin{align}
    \mc{G}_{IJ} = \begin{pmatrix}
        \gamma_{\alpha\beta} & \mc{G}_{\alpha B} \\
        \mc{G}_{A\beta} & G_{AB} + \mc{G}_{A\alpha} \gamma^{\alpha\beta} \mc{G}_{\beta B}
    \end{pmatrix}\,.
\end{align}

This metric can be decomposed as 
\begin{align}
    \mc{G}_{IJ} = \begin{pmatrix}
        \delta^\gamma{}_\alpha & 0\\
        \mc{G}_{A\delta} \gamma^{\delta \gamma} & \delta^C{}_A
    \end{pmatrix} \begin{pmatrix}
        \gamma_{\gamma \epsilon} & 0 \\
        0 & G_{CD}
    \end{pmatrix} \begin{pmatrix}
        \delta^\epsilon{}_\beta & \gamma^{\epsilon\zeta} \mc{G}_{\zeta B} \\
        0 & \delta^D{}_B
    \end{pmatrix}\,,
    \label{eq:mathcalG-decomposition}
\end{align}
we thus obtain the inverse of $\mc{G}_{IJ}$ as 
\begin{align}
    \mc{G}^{IJ} = \begin{pmatrix}
        \gamma^{\alpha\beta} + \gamma^{\alpha \gamma} \mc{G}_{\gamma A} G^{AB} \mc{G}_{B \delta} \gamma^{\delta \beta} & - \gamma^{\alpha \gamma} \mc{G}_{\gamma C} G^{CB}
        \\
        - G^{AC} \mc{G}_{C\gamma} \gamma^{\gamma \beta} & G^{AB}
    \end{pmatrix}\,.
\end{align}

\qed

From the decomposition of \eqref{eq:mathcalG-decomposition}, the determinant is written as 
\begin{align}
    \det \mc{G} = \det \gamma \times \det G\,.
    \label{eq:decompose-detmathcalG}
\end{align}

If we write the off-diagonal component of $\mc{G}_{IJ}$ as a vector like
\begin{align}
    \Theta^\alpha_A \coloneqq \gamma^{\alpha\beta} \mc{G}_{\beta A} = \gamma^{\alpha\beta} \delta_{ij} E^i{}_\beta \del_A \mc{X}^j\,,
\end{align}
the metric becomes
\begin{align}
    \dd s^2 = \gamma_{\alpha \beta}(\dd q^\alpha + \Theta^\alpha_A \dd \eta^A)( \dd q^\beta + \Theta^\beta_B \dd \eta^B) + G_{AB} \dd \eta^A \dd \eta^B\,,
\end{align}
and the components are written as 
\begin{align}
    \mc{G}_{IJ} = \begin{pmatrix}
        \gamma_{\alpha \beta} & \gamma_{\alpha \gamma} \Theta^\gamma_B \\
        \gamma_{\beta \gamma} \Theta^\gamma_A & G_{AB} + \gamma_{\gamma \delta} \Theta^\gamma_A \Theta^\delta_B
    \end{pmatrix}\,,
    \qquad 
    \mc{G}^{IJ} = \begin{pmatrix}
        \gamma^{\alpha\beta} + \Theta^\alpha{}_A G^{AB} \Theta^\beta{}_B & - \Theta^\alpha{}_C G^{CB} \\
        - G^{AC} \Theta^\beta{}_C & G^{AB}
    \end{pmatrix}\,.
    \label{eq:mathcalG_IJ-mathcalG^IJ-2}
\end{align}

\subsubsection*{Tubular coordinates}

\begin{figure}
    \centering
    \includegraphics[width=0.62\textwidth]{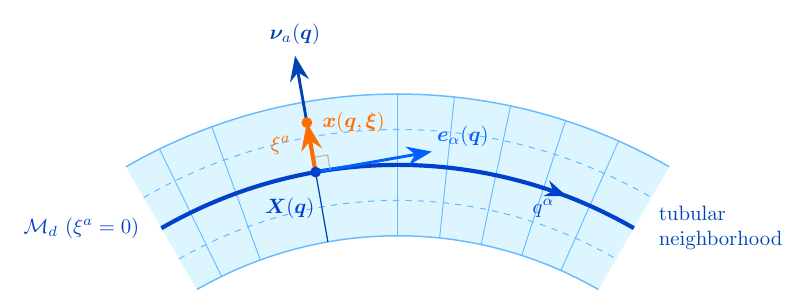}
    \caption{A schematic picture of tubular coordinates for the constraint manifold.}
    \label{fig:tubular}
\end{figure}

For the $M$-dimensional normal directions to the constraint manifold, we take the normal coordinate denoted by $\xi^a$ with the flat metric $\delta_{ab}$.
We write the coordinates together as $z^\mu =(q^\alpha, \xi^a)$ and its metric is assumed to be 
\begin{align}
    \dd s^2 = \delta_{ij} \dd x^i \dd x^j = h_{\mu\nu} \dd z^\mu \dd z^\nu\,,
    \label{eq:metric-tubular-saisyo}
\end{align}
which is called a tubular coordinate.
The schematic picture for the tabular coordinates is shown in Fig.~\ref{fig:tubular}.
The $N$-dimensional ambient coordinate is assumed to be related to $z^\mu$ by 
\begin{align}
    x^i = X^i(\bm{q}) + \xi^a \nu^i{}_a(\bm{q})\,,
    \label{eq:coordinate-decomposition-x-tubular}
\end{align}
where $X^i(\bm{q})$ is the embedding we have used on $\mc{M}_d$.
This kind of coordinate system is discussed in Ref.~\cite{2003-SchusterJaffe-AP}.
The frame fields for the tangent space and the normal space are
\begin{align}
    e^i{}_\alpha = \del_\alpha X^i\,,
    \qquad 
    \nu^i{}_a = \del_a x^i\,,
\end{align}
and satisfy
\begin{align}
    \delta_{ij} e^i{}_\alpha e^j{}_\beta = g_{\alpha\beta}\,,
    \qquad 
    \delta_{ij} \nu^i{}_a \nu^j{}_b = \delta_{ab}\,.
\end{align}
We also introduce the extrinsic curvature associated with $\nu^i{}_a$ by
\begin{align}
     K^a{}_{\alpha\beta} \coloneqq e^i{}_\alpha e^j{}_\beta \nabla_i \nu_j{}^a\,.
    \label{eq:extrinsic-curvature-Ka}
\end{align}
The sign convention for the extrinsic curvature follows Ref.~\cite{2004-Poisson}.

In this coordinate system, the Gauss formula becomes
\begin{align}
    \del_\alpha e^i{}_\beta = \Gamma^\gamma{}_{\alpha\beta} e^i{}_\gamma -  K^a{}_{\alpha\beta} \nu^i{}_a\,,
\end{align}
and the Weingarten equation~\cite{1992-doCarmo,1999-Spivak} is given by
\begin{align}
    \del_\alpha \nu^i{}_a =  K_{a \alpha}{}^\beta e^i{}_\beta - \mc{A}_{\alpha a}{}^b \nu^i{}_b\,.
\end{align}
$\Gamma^\gamma{}_{\alpha\beta}$ is the Levi-Civita connection calculated by the metric $g_{\alpha\beta}$:
\begin{align}
    \Gamma^\gamma_{\alpha\beta} = \frac{1}{2} g^{\gamma \delta} (\del_\alpha g_{\delta \beta} + \del_\beta g_{\alpha \delta} - \del_\delta g_{\alpha\beta})\,.
\end{align}
In addition, we also introduce the normal-bundle connection or the $\mr{SO}(M)$ connection of the normal bundle by 
\begin{align}
    \mc{A}_{\alpha ab} \coloneqq - \nu^i{}_b \del_\alpha \nu_{ia} = \nu^i{}_a \del_\alpha \nu_{ib} = - \mc{A}_{\alpha ba}\,.
    \label{eq:def-connection-A}
\end{align}
The analogy to the gauge field will be directly seen when we consider the derivative operator below.\footnote{
If we consider a local flat frame in the tangent direction with a frame field $g_{\alpha \beta} =\delta_{mn} \hat{e}^m{}_\alpha \hat{e}^n{}_\beta$,
we have a connection like \eqref{eq:def-connection-A} associated with the local $\mr{SO}(d) = \mr{SO}(N-M)$ rotation given by 
$\hat{\omega}_{\alpha mn} = - \hat{e}_{n \beta} \nabla_\alpha \hat{e}_m{}^\beta = - \hat{e}_{n\beta} (\del_\alpha \hat{e}_m{}^\beta + \Gamma^\beta{}_{\alpha \gamma} \hat{e}_m{}^\gamma)$.
This connection is called a spin connection in the high energy physics or gravitational theory context~\cite{2004-Ortin,2012-FreedmanVanProeyen}.
This connection appears when the covariant derivative acts on the object which has a nontrivial representation of the local tangent rotation $\mr{SO}(d)$.
In this paper, we consider the derivative operator acting mainly on a scalar, namely a wave function, so we do not find the direct appearance of the spin connection.
}

\begin{thm}
In the tubular coordinates $z^\mu=(q^\alpha,\xi^a)$,
the ambient metric and its inverse take the following block forms:
\begin{align}
    h_{\mu\nu} = \begin{pmatrix}
        \bar{g}_{\alpha\beta} + \mc{B}^c_{\alpha} \mc{B}_{\beta c} & \mc{B}_{\alpha b} \\
        \mc{B}_{\beta a} & \delta_{ab}
    \end{pmatrix}\,, 
    \qquad 
    h^{\mu\nu} = \begin{pmatrix}
        \bar{g}^{\alpha \beta} & - \bar{g}^{\alpha \gamma} \mc{B}^b_\gamma \\
        -\mc{B}^a_\gamma \bar{g}^{\gamma \beta} & \delta^{ab} + \mc{B}^a_\alpha \bar{g}^{\alpha\beta} \mc{B}^b_\beta
    \end{pmatrix}\,,
    \label{eq:metric-hmunu-decompose}
\end{align}
where we define 
\begin{align}
    \mc{B}^a_\alpha &\coloneqq \mc{A}_\alpha{}^a{}_b \xi^b\,,
    \\
    \bar{g}_{\alpha\beta} &\coloneqq g_{\gamma \delta}(\delta^\gamma_\alpha + \xi^a  K_{a\alpha}{}^\gamma)(\delta^\delta_\beta + \xi^b  K_{b \beta}{}^\delta)\,.
\end{align}
\end{thm}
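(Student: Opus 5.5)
The plan is to mirror the proof of the block form of $\mc{G}_{IJ}$: compute $\dd x^i$ in the tubular coordinates, expand it in the frame $\{e^i{}_\alpha,\nu^i{}_a\}$, square it, and read off $h_{\mu\nu}$. The inverse then follows from a triangular factorization.

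Differentiating Eq.~\eqref{eq:coordinate-decomposition-x-tubular} gives
\begin{align}
    \dd x^i = \bigl(e^i{}_\alpha + \xi^a \del_\alpha \nu^i{}_a\bigr)\dd q^\alpha + \nu^i{}_a \dd \xi^a\,.
\end{align}
Using the Weingarten equation $\del_\alpha \nu^i{}_a = K_{a\alpha}{}^\beta e^i{}_\beta - \mc{A}_{\alpha a}{}^b \nu^i{}_b$, this becomes
\begin{align}
    \dd x^i = e^i{}_\gamma\bigl(\delta^\gamma_\alpha + \xi^a K_{a\alpha}{}^\gamma\bigr)\dd q^\alpha + \nu^i{}_b\bigl(\dd\xi^b - \xi^a\mc{A}_{\alpha a}{}^b \dd q^\alpha\bigr)\,.
\end{align}
The antisymmetry of $\mc{A}_{\alpha ab}$ from Eq.~\eqref{eq:def-connection-A} lets me write $-\xi^a\mc{A}_{\alpha a}{}^b = \mc{A}_\alpha{}^b{}_a\xi^a = \mc{B}^b_\alpha$, so the normal part is $\nu^i{}_b(\dd\xi^b + \mc{B}^b_\alpha \dd q^\alpha)$. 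I would check this sign carefully, since it fixes the sign of the off-diagonal block.

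Next I square this with $\delta_{ij}$. By orthogonality $\delta_{ij}e^i{}_\alpha \nu^j{}_b=0$ the cross terms drop out. The relations $\delta_{ij} e^i{}_\alpha e^j{}_\beta = g_{\alpha\beta}$ and $\delta_{ij}\nu^i{}_a\nu^j{}_b=\delta_{ab}$ then give
\begin{align}
    \dd s^2 = \bar g_{\alpha\beta}\dd q^\alpha \dd q^\beta + \delta_{ab}(\dd\xi^a + \mc{B}^a_\alpha \dd q^\alpha)(\dd \xi^b + \mc{B}^b_\beta \dd q^\beta)\,.
\end{align}
Expanding this and symmetrizing yields exactly the stated $h_{\mu\nu}$.

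For the inverse, I factor $h = L^{\TT}\,\mathrm{diag}(\bar g,\delta)\,L$ with
\begin{align}
    L=\begin{pmatrix}\delta^\alpha_\beta & 0\\ \mc{B}^a_\beta & \delta^a_b\end{pmatrix}\,,
\end{align}
in analogy with Eq.~\eqref{eq:mathcalG-decomposition}. Its inverse $L^{-1}$ has $-\mc{B}$ in the lower-left block. Then $h^{-1}=L^{-1}\,\mathrm{diag}(\bar g^{-1},\delta)\,L^{-\TT}$, which gives the stated $h^{\mu\nu}$. As a check, I would verify $h_{\mu\nu}h^{\nu\lambda}=\delta_\mu^\lambda$ block by block.

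The main obstacle is bookkeeping rather than anything conceptual. The index placement and sign of $\mc{A}$ in the Weingarten equation versus Eq.~\eqref{eq:def-connection-A} have to be kept consistent with the sign convention for $K$ in Eq.~\eqref{eq:extrinsic-curvature-Ka}. If the signs come out opposite, I would recheck the Weingarten relation by contracting $\del_\alpha\nu^i{}_a$ with $e_{i\beta}$ and with $\nu_{ib}$ directly. Invertibility of $\bar g$ holds in the tube because $\mc{J}>0$.
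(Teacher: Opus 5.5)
Your proposal is correct and follows the paper's proof. The paper likewise writes $\dd x^i = (\delta^\beta{}_\alpha + \xi^a K_{a\alpha}{}^\beta)\,e^i{}_\beta\,\dd q^\alpha + \nu^i{}_a(\dd\xi^a + \mc{B}^a_\alpha\,\dd q^\alpha)$, squares it using the orthonormality of the frame to read off $h_{\mu\nu}$, and inverts through the same triangular factorization $L^{\TT}\,\mathrm{diag}(\bar g,\delta)\,L$. Your sign check for $\mc{B}^a_\alpha$, via the antisymmetry in Eq.~\eqref{eq:def-connection-A}, is consistent, and your note that $\bar g$ is invertible in the tube (since $\det\bar g=\det g\,\mc{J}^2$ with $\mc{J}>0$) makes explicit a point the paper leaves implicit.
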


\proof

From \eqref{eq:coordinate-decomposition-x-tubular}, $\dd x^i$ is related to $\dd q^\alpha$ and $\dd \xi^a$ as
\begin{align}
    \dd x^i &= (\delta^\beta{}_\alpha + \xi^a  K_{a\alpha}{}^\beta) e^i{}_\beta \dd q^\alpha + \nu^i{}_a (\dd \xi^a + \mc{A}_\alpha{}^a{}_b \xi^b \dd q^\alpha)
    \\
    &=(\delta^\beta{}_\alpha + \xi^a  K_{a\alpha}{}^\beta) e^i{}_\beta \dd q^\alpha + \nu^i{}_a (\dd \xi^a + \mc{B}^a_{\alpha} \dd q^\alpha)\,,
    \label{eq:dxiexpansion}
\end{align}
where the contracted connection is defined by 
\begin{align}
    \mc{B}^a_\alpha \coloneqq \mc{A}_\alpha{}^a{}_b \xi^b\,.
\end{align}
Substituting \eqref{eq:dxiexpansion} into \eqref{eq:metric-tubular-saisyo}, the metric $h_{\mu\nu}$ is read as 
\begin{align}
    \dd s^2&= \delta_{ij} \dd x^i \dd x^j
    \nn \\
    &= g_{\gamma \delta}(\delta^\gamma_\alpha + \xi^a  K_{a\alpha}{}^\gamma)(\delta^\delta_\beta + \xi^b  K_{b \beta}{}^\delta) \dd q^\alpha \dd q^\beta
     +\delta_{ab} (\dd \xi^a + \mc{B}^a_\alpha \dd q^\alpha) (\dd \xi^b + \mc{B}^b_\beta \dd q^\beta)
    \\
    &= \bar{g}_{\alpha\beta}(\bm{q},\bm{\xi}) \dd q^\alpha \dd q^\beta + \delta_{ab}(\dd \xi^a + \mc{B}^a_\alpha \dd q^\alpha)(\dd \xi^b + \mc{B}^b_\beta \dd q^\beta)\,,
\end{align}
and its matrix form is 
\begin{align}
    h_{\mu\nu} = \begin{pmatrix}
        \bar{g}_{\alpha\beta} + \mc{B}^c_{\alpha} \mc{B}_{\beta c} & \mc{B}_{\alpha b} \\
        \mc{B}_{\beta a} & \delta_{ab}
    \end{pmatrix}\,.
\end{align}
Here, we define $\bar{g}_{\alpha\beta}$ for $h_{\alpha\beta}$ by 
\begin{align}
    \bar{g}_{\alpha\beta} = g_{\gamma \delta}(\delta^\gamma{}_\alpha + \xi^a  K_{a\alpha}{}^\gamma)(\delta^\delta{}_\beta + \xi^b  K_{b \beta}{}^\delta)\,.
\end{align}
This matrix is decomposed as 
\begin{align}
    h_{\mu\nu} = \begin{pmatrix}
        \delta^\gamma{}_\alpha & \mc{B}^c_\alpha 
        \\
        0 & \delta^c{}_a
    \end{pmatrix} \begin{pmatrix}
        \bar{g}_{\gamma \delta} & 0 
        \\
        0 & \delta_{cd}
    \end{pmatrix}\begin{pmatrix}
        \delta^\delta_\beta & 0
        \\
        \mc{B}^d_\beta & \delta^d_b
    \end{pmatrix}\,,
    \label{eq:hmunu-decomposition}
\end{align}
Because the inverse of the upper/lower triangular matrix is given by
\begin{align}
    \begin{pmatrix}
        \delta^\gamma_\alpha & \mc{B}^c_\alpha 
        \\
        0 & \delta^c_a
    \end{pmatrix}^{-1} = \begin{pmatrix}
        \delta^\alpha_\gamma & - \mc{B}^b_\gamma
        \\
        0 & \delta^b_c
    \end{pmatrix}\,,
    \qquad 
    \begin{pmatrix}
        \delta^\delta_\beta & 0
        \\
        \mc{B}^d_\beta & \delta^d_b 
    \end{pmatrix}^{-1} = \begin{pmatrix}
        \delta^\alpha_\delta & 0 
        \\
        - \mc{B}^a_\delta & \delta^a_d
    \end{pmatrix}\,,
\end{align}
we find the inverse metric has the following form:
\begin{align}
    h^{\mu\nu} = \begin{pmatrix}
        \bar{g}^{\alpha \beta} & - \bar{g}^{\alpha \gamma} \mc{B}^b_\gamma \\
        -\mc{B}^a_\gamma \bar{g}^{\gamma \beta} & \delta^{ab} + \mc{B}^a_\alpha \bar{g}^{\alpha\beta} \mc{B}^b_\beta
    \end{pmatrix}\,.
\end{align}

\qed

From the decomposition \eqref{eq:hmunu-decomposition}, the determinant of $h_{\mu\nu}$ is directly calculated and the volume form becomes 
\begin{align}
    \sqrt{\det h} = \sqrt{\det \bar{g}} = \sqrt{\det g} \mc{J}\,,
    \qquad 
    \mc{J} = \det (\delta_\alpha^\beta + \xi^a K_{a\alpha}{}^\beta)\,.
    \label{eq:jacobian-turblar}
\end{align}

\paragraph{Coordinate system normal to $\mc{M}_d$.}

We give some comments on the relation between $\xi^a$ and $\eta^A$.
The label of the constraint $A,B,\ldots$ and the normal coordinate indices $a,b,\ldots$ are related by
\begin{align}
    b^A{}_a \coloneqq \del_a f^A\,.
\end{align}
The extrinsic curvatures defined in \eqref{eq:extrinsic-curvature-KA} and \eqref{eq:extrinsic-curvature-Ka} are related by
\begin{align}
     K^A{}_{\alpha\beta} = b^A{}_a  K^a{}_{\alpha\beta}\,.
\end{align}
If we write the corresponding mean curvature components by
\begin{align}
    K^a = \tr K^a{}_{\alpha\beta} = g^{\alpha\beta} K^a{}_{\alpha\beta}\,,
    \qquad 
    K^A = \tr K^A{}_{\alpha\beta} = g^{\alpha\beta} K^A{}_{\alpha\beta}\,,
\end{align}
the extrinsic curvature term in \eqref{eq:Pi^2-app} is also rewritten as 
\begin{align}
    \frac{1}{4} K^A G_{AB} K^B = \frac{1}{4} K^a \delta_{ab} K^b\,.
\end{align}

\paragraph{Relations among coordinates.}

\begin{figure}
    \centering
    \begin{tikzpicture}
        \node[draw, shape=circle] (A) at (0,0) {$x^i$};
        \node[draw, shape=circle] (B) at (-2.31,0) {$q^\alpha$};
        \node[draw, shape=circle] (C) at (1.15,2) {$\eta^A$};
        \node[draw, shape=circle] (D) at (1.15,-2) {$\xi^a$};
        \draw[-, thick] (A) -- (B);
        \draw[-, thick] (A) -- (C);
        \draw[-, thick] (A) -- (D);
        \draw[-, thick] (C) -- (D);
        \node at (-1.1,0) [above] {$e^i{}_\alpha / E^i{}_\alpha$};
        \node at (0.5,1.2) [left] {$F^i{}_A$};
        \node at (0.5,-1.2) [left] {$\nu^i{}_a$};
        \node at (1.15,0) [right] {$b^A{}_a$};
    \end{tikzpicture}
    \caption{Schematic figure illustrating the coordinates and their frame fields.}
    \label{fig:vielbeins}
\end{figure}

The relations among the coordinates shown in Tab.~\ref{tab:coordinates-tab} and their frame fields are schematically summarized in Fig.~\ref{fig:vielbeins}.

\subsubsection*{Comments on relations to theoretical physics}

Before closing this subsection, we give comments on the relations among our two formalisms and those used in theoretical physics.
In high energy physics and gravitational theory, the two types of the metric decomposition and their components are discussed.

\paragraph{Constraint-adapted coordinates.}

In the constraint-adapted coordinates, the metric is decomposed as \eqref{eq:mathcalG_IJ-mathcalG^IJ} or \eqref{eq:mathcalG_IJ-mathcalG^IJ-2}.
This type of decomposition is similar to the ADM decomposition in gravitational theory \cite{2008-ArnowittEtAl-GRG,1991-CorichiNunez-RMF,2004-Poisson}, which decomposition respect the foliation structure.
$G_{AB}$ is a normal metric or regarded as generalized lapse matrix and $\Theta^\alpha_A$ is the shift vector.
If we write $G_{11} = N^2$ when $M=1$, this $N$ corresponds to the lapse function in the ADM decomposition.

\paragraph{Tubular coordinates.}

In the tubular coordinates, the metric is decomposed as \eqref{eq:metric-hmunu-decompose}.
This kind of metric is used in the context of the dimensional reduction and the form like \eqref{eq:metric-hmunu-decompose} is often discussed in so-called Kaluza--Klein reduction~\cite{1921-Kaluza-SPAW,1926-Klein-ZP,1987-BailinLove-RPP,1986-DuffEtAl-PR,1995-Duff-OKC}.
In that context, the vector field appearing the off-diagonal component of the metric is often called graviphoton, which also behaves as a vector component in the reduced space.

\subsection{Geometric properties in constraint-adapted coordinates}
\label{app:geoprp-constraint-adapted}

\begin{thm}
For independent constraints $f^A(\bm{x})=0$, the constrained ambient measure coincides with the volume measure induced on $\mc{M}_d$:
\begin{align}
    \dd^N \bm{x} \prod_A \delta(f^A(\bm{x})) \frac{1}{\sqrt{\det G}} = \dd^d \bm{q} \sqrt{\det g(\bm{q})}\,.
\end{align}
\end{thm}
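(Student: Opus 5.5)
The plan is to change variables from the Cartesian coordinates $x^i$ to the constraint-adapted coordinates $y^I=(q^\alpha,\eta^A)$ with $\eta^A=f^A$, in which the delta functions become trivial. Under $x^i=\mc{X}^i(\bm{q},\bm{\eta})$ the flat volume element transforms with the Jacobian $|\det(\del x^i/\del y^I)|$. Since $\mc{G}_{IJ}=\del_I\mc{X}^i\del_J\mc{X}_i$ is the pull-back of $\delta_{ij}$, this Jacobian equals $\sqrt{\det\mc{G}}$. Thus
\begin{align}
    \dd^N\bm{x}=\dd^d\bm{q}\,\dd^M\bm{\eta}\,\sqrt{\det\mc{G}(\bm{q},\bm{\eta})}\,.
\end{align}
This holds locally in a neighborhood of $\mc{M}_d$, where the rank condition on $\del_i f^A$ makes $(q^\alpha,\eta^A)$ a valid chart. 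Globally I would patch the charts with a partition of unity, which is routine.

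Next I would use the determinant factorization \eqref{eq:decompose-detmathcalG}, $\det\mc{G}=\det\gamma\times\det G$, which follows from the block decomposition \eqref{eq:mathcalG-decomposition} in the second theorem above. Here $\det G$ means the determinant of $G_{AB}$, the inverse Gram matrix. This gives $\sqrt{\det\mc{G}}=\sqrt{\det\gamma}\sqrt{\det G}$. The factor $1/\sqrt{\det G}$ in the measure then cancels exactly the normal part of the Jacobian, leaving
\begin{align}
    \dd^N\bm{x}\prod_A\delta(f^A)\frac{1}{\sqrt{\det G}}=\dd^d\bm{q}\,\dd^M\bm{\eta}\,\prod_A\delta(\eta^A)\sqrt{\det\gamma(\bm{q},\bm{\eta})}\,.
\end{align}

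The $\eta$-integration against $\prod_A\delta(\eta^A)$ sets $\bm{\eta}=\bm{0}$. By \eqref{eq:relation-gamma-g}, $\gamma_{\alpha\beta}(\bm{q},\bm{0})=g_{\alpha\beta}(\bm{q})$, which yields $\dd^d\bm{q}\sqrt{\det g}$. The statement is understood as an identity of measures integrated against test functions $\varphi(\bm{x})$, so I would phrase each step as $\int(\cdots)\varphi$.

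I expect the main subtle point to be bookkeeping rather than difficulty. The key check is that the $\sqrt{\det G}$ entering the Jacobian refers to $G_{AB}$ and not $G^{AB}$; the lower-right block of $\mc{G}^{IJ}$ in \eqref{eq:mathcalG_IJ-mathcalG^IJ} is $G^{AB}$, confirming that the factorization contains $\det G_{AB}$. A second check is that the cancellation uses $\det G$ evaluated on $\mc{M}_d$. This is consistent because both factors are evaluated at the same point before the delta function sets $\bm{\eta}=\bm{0}$.
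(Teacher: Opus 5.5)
Your proposal is correct and follows the paper's proof essentially step by step. Pass to the constraint-adapted coordinates $(q^\alpha,\eta^A=f^A)$ so that $\dd^N\bm{x}=\dd^d\bm{q}\,\dd^M\bm{\eta}\sqrt{\det\mc{G}}$, factor $\det\mc{G}=\det\gamma\,\det G$ with $\det G=\det G_{AB}$, cancel the $1/\sqrt{\det G}$, and let $\prod_A\delta(\eta^A)$ send $\gamma_{\alpha\beta}\to g_{\alpha\beta}$. Your justification of the Jacobian and your $G_{AB}$-versus-$G^{AB}$ check via the lower-right block of $\mc{G}^{IJ}$ are both correct, and slightly more careful than the paper, which inserts $\det\mc{G}|_{\eta=0}$ directly under the integral.
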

\proof

From \eqref{eq:decompose-detmathcalG}, the determinants satisfy the following relation on the constraint manifold $\mc{M}_d$:
\begin{align}
    \det \mc{G}|_{\eta^A=0} = \det g \times \det G\,,
\end{align}
because of \eqref{eq:relation-gamma-g}.
In addition, the corresponding integral measures satisfy
\begin{align}
    \dd^N\bm{x} =\dd^d\bm{q} \dd^M \bm{\eta} \sqrt{\det \mc{G}}\,,
\end{align}
which directly comes from \eqref{eq:M-dim-metric-mathcalGIJ}.
We then obtain
\begin{align}
    \dd^N \bm{x} \prod_A \delta (f^A) \frac{1}{\sqrt{\det G}} = \dd^d\bm{q} \dd^M \bm{\eta} \sqrt{\det g} \sqrt{\det G} \prod_A \delta(\eta^A) \frac{1}{\sqrt{\det G}}
    = \dd^d\bm{q} \sqrt{\det g}\,.
\end{align}
\qed

\begin{thm}
The integral measure
\begin{align}
    \dd^N\bm{x} \prod_A \delta(f^A) \frac{1}{\sqrt{\det G}}\,,
\end{align}
is invariant under the redefinition of the constraints.
\end{thm}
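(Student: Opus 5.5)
The quickest route uses the preceding theorem. It shows that for any choice of independent constraint functions, the measure $\dd^N\bm{x}\prod_A\delta(f^A)/\sqrt{\det G}$ equals $\dd^d\bm{q}\sqrt{\det g}$. The right-hand side depends only on the embedded submanifold $\mc{M}_d$ and its induced metric $g_{\alpha\beta}=\delta_{ij}\del_\alpha X^i\del_\beta X^j$, not on the functions $f^A$ that cut it out. Now take a redefinition $f'^A(\bm{x})=\Lambda^A{}_B(\bm{x})f^B(\bm{x})$ with $\Lambda$ invertible near $\mc{M}_d$, or more generally any $f'^A$ with the same zero set and full-rank gradients. Then the previous theorem applied to $f'$ gives the same $\dd^d\bm{q}\sqrt{\det g}$, and invariance follows. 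I would state this as the main argument and add a direct check for transparency.

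For the direct check I would compute on $\mc{M}_d$ itself, since both the delta functions and the factor $\sqrt{\det G}$ matter only there.

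1. Since $f^B=0$ on $\mc{M}_d$, the new normals are $n'^A{}_i=\del_i f'^A=\Lambda^A{}_B n^B{}_i$; the terms $(\del_i\Lambda^A{}_B) f^B$ drop out.
2. The Gram matrix transforms as $G'^{AB}=\Lambda^A{}_C\Lambda^B{}_D G^{CD}$, so $\det G'^{AB}=(\det\Lambda)^2\det G^{AB}$.
3. The inverse Gram matrix therefore satisfies $1/\sqrt{\det G'}=|\det\Lambda|^{-1}\cdot 1/\sqrt{\det G}$, because $\det G_{AB}$ is the inverse of $\det G^{AB}$.
4. The multidimensional delta function obeys $\prod_A\delta(f'^A)=|\det\Lambda|^{-1}\prod_A\delta(f^A)$ on the zero set. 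I would justify this by passing to constraint-adapted coordinates $\eta^A=f^A$: the change of variables $\eta\mapsto\eta'=\Lambda\eta$ has Jacobian $\det\Lambda$ at $\eta=0$.

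The two factors of $|\det\Lambda|$ do not cancel; they combine as $|\det\Lambda|^{-1}\cdot|\det\Lambda|^{+1}$ only if the convention is read carefully. This sign of the exponent is the step I expect to be the main obstacle. With $G=\det G_{AB}$ being the determinant of the inverse Gram matrix, $1/\sqrt{\det G_{AB}}=\sqrt{\det G^{AB}}$ scales by $|\det\Lambda|^{+1}$. That cancels the $|\det\Lambda|^{-1}$ from the delta functions, which is consistent with the previous theorem's use of $\sqrt{\det G}$ from $\det\mc{G}=\det\gamma\det G$.

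The delta-function transformation in step 4 requires rank $M$ and invertibility of $\Lambda$ on $\mc{M}_d$. I would state this as the admissibility condition on redefinitions. Under it, the cancellation above gives the invariance directly, matching the coordinate-free argument.
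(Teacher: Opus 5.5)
Your argument is correct, but your primary route differs from the paper's. The paper's proof consists only of the direct computation that you treat as a check. For a redefinition $f'^A=F^A(f)$ with $\mf{J}^A{}_B=\del F^A/\del f^B|_{f=0}$ invertible, it combines $\prod_A\delta(f'^A)=|\det\mf{J}|^{-1}\prod_A\delta(f^A)$ with $G'^{AB}=\mf{J}^A{}_C\mf{J}^B{}_D G^{CD}$, and the two Jacobian factors cancel. You instead invoke the preceding theorem, which identifies the measure with $\dd^d\bm{q}\sqrt{\det g}$ for \emph{any} independent constraint functions. Since the induced volume depends only on the embedded $\mc{M}_d$, invariance follows immediately whenever $f'$ has the same zero set and full-rank gradients.

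Your route gives a conceptual reason for the invariance: the measure is intrinsic. It also covers a wider class of redefinitions, including position-dependent rescalings $f^A\to\ms{h}(\bm{x})f^A$ of the kind mentioned in Sec.~\ref{sec:penalty}, which are not literally of the paper's form $F^A(f)$. The cost is that it relies on the constraint-adapted coordinates and on the factorization $\det\mc{G}=\det\gamma\,\det G$. The paper's computation is self-contained and shows explicitly how the delta-function Jacobian is compensated by the Gram determinant.

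Your direct check is misstated at step 3. With the paper's convention $\det G=\det G_{AB}$, step 2 gives $\det G'_{AB}=\det G_{AB}/(\det\Lambda)^2$. Hence $1/\sqrt{\det G'}=|\det\Lambda|/\sqrt{\det G}$, with exponent $+1$, which is exactly what the reason you cite in step 3 yields. Your next paragraph reaches this, but the sentence saying the factors ``do not cancel'' contradicts it. There is no convention ambiguity here, since the paper fixes $\det G\coloneqq\det G_{AB}$. Once step 3 is corrected, it cancels step 4 cleanly, and the check reproduces the paper's proof.
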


\proof

Let us consider the redefinition of the constraints which give the same constraint manifold as
\begin{align}
    f'^A(\bm{x}) = F^A(f(\bm{x}))\,,
\end{align}
and this redefinition is assumed to satisfy
\begin{align}
    F^A(\bm{0}) = 0\,,
    \qquad 
    \mf{J}^A{}_B = \left. \frac{\del F^A}{\del f^B} \right|_{f=0}\,,
    \qquad 
    \det \mf{J} \neq 0\,.
\end{align}
Under this redefinition, the delta function term changes as 
\begin{align}
    \prod_A \delta(f'^A) = \frac{1}{|\det \mf{J}|} \prod_A \delta (f^A)\,,
\end{align}
and $G^{AB}$ also transforms as 
\begin{align}
    G'^{AB} = \mf{J}^A{}_C \mf{J}^B{}_D G^{CD}\,,
    \qquad 
    G'^{AB} = \delta^{ij} \del_i f'^A \del_j f'^B\,,
\end{align}
then we obtain
\begin{align}
    \det G' = \frac{\det G}{(\det \mf{J})^2}\,.
\end{align}
Using these equations, we find that 
\begin{align}
    \prod_A \delta(f'^A) \frac{1}{\sqrt{\det G'}} = \prod_A \delta(f^A) \frac{1}{\sqrt{\det G}}\,.
\end{align}
\qed

\begin{thm}
The Laplace--Beltrami operator on $\mathcal{M}_d$ admits the following equivalent expressions in ambient and intrinsic coordinates, with the ambient expression evaluated on the constraint manifold:
\begin{align}
    \Delta_{\mc{M}_d} = \sqrt{\det G} \del_i \left.\left(\frac{M^{ij}}{\sqrt{\det G}} \del_j \right)\right|_{\mc{M}_d}
    = \frac{1}{\sqrt{\det g}} \del_\alpha \left( \sqrt{\det g} g^{\alpha \beta} \del_\beta \right)\,.
    \label{eq:Laplacian-rewriting}
\end{align}
\end{thm}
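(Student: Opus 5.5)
The plan is to work in the constraint-adapted coordinates $y^I=(q^\alpha,\eta^A)$ and to use the block forms \eqref{eq:mathcalG_IJ-mathcalG^IJ-2} and the determinant relation \eqref{eq:decompose-detmathcalG}. The first step is to rewrite the projector in these coordinates. Since $M^{ij}=\delta^{ij}-n^{Ai}G_{AB}n^{Bj}$, its components in $y^I$ are $M^{IJ}=\mc{G}^{IJ}-\mc{G}^{IA}G_{AB}\mc{G}^{BJ}$, using $\del \eta^A/\del x^i=n^A{}_i$ and $\mc{G}^{IA}=\delta^{ij}\del_i y^I n^A{}_j$. Inserting the inverse metric of \eqref{eq:mathcalG_IJ-mathcalG^IJ-2} then gives
\begin{align}
    M^{\alpha\beta}=\gamma^{\alpha\beta}\,,\qquad M^{\alpha A}=0\,,\qquad M^{AB}=0\,.
\end{align}
The $\Theta G\Theta$ piece and the off-diagonal blocks cancel exactly against the correction term. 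So $M$ is purely tangential in $y$-coordinates, and its nonzero block is the inverse leaf metric $\gamma^{\alpha\beta}(\bm{q},\bm{\eta})$.

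The second step is to convert the ambient divergence into $y$-coordinates. Because $x^i$ are Cartesian and $\sqrt{\det\delta}=1$, for any vector field $W^i$ we have $\del_i W^i=\frac{1}{\sqrt{\det\mc{G}}}\del_I(\sqrt{\det\mc{G}}\,W^I)$. Take $W^i=\frac{M^{ij}}{\sqrt{\det G}}\del_j\psi$. Its $y$-components are $W^\alpha=\gamma^{\alpha\beta}\del_\beta\psi/\sqrt{\det G}$ and $W^A=0$. The $\del_\eta\psi$ terms drop out, which is the extension-independence noted earlier. Together with $\sqrt{\det\mc{G}}=\sqrt{\det\gamma}\sqrt{\det G}$ this yields
\begin{align}
    \sqrt{\det G}\,\del_i\Bigl(\frac{M^{ij}}{\sqrt{\det G}}\del_j\psi\Bigr)
    =\frac{1}{\sqrt{\det\gamma}}\del_\alpha\bigl(\sqrt{\det\gamma}\,\gamma^{\alpha\beta}\del_\beta\psi\bigr)\,,
\end{align}
valid in the whole neighborhood. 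The factors of $\sqrt{\det G}$ cancel identically, and this cancellation is exactly why the weight $1/\sqrt{\det G}$ appears in \eqref{eq:LB-x}. Only $q$-derivatives appear on the right, so setting $\eta=0$ commutes with them. By \eqref{eq:relation-gamma-g} we then get $\gamma\to g$, which gives \eqref{eq:Laplacian-rewriting}.

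I expect the main obstacle to be the first step: verifying carefully that $M^{IJ}$ has exactly the stated block form. The bookkeeping of $\mc{G}^{IA}$ versus $G^{AB}$ with the index conventions of \eqref{eq:mathcalG_IJ-mathcalG^IJ-2} is where sign or transpose errors would creep in. A cross-check is to note that $M^{ij}n^A{}_j=0$ directly forces $M^{IA}=0$ and $M^{AJ}=0$. The $\alpha\beta$ block then follows from $M^{\alpha\beta}=\mc{G}^{\alpha\beta}-\mc{G}^{\alpha A}G_{AB}\mc{G}^{B\beta}$. One has $\mc{G}^{\alpha A}=-\Theta^\alpha{}_C G^{CA}$, so the subtracted term is $\Theta^\alpha{}_A G^{AB}\Theta^\beta{}_B$, which cancels against $\mc{G}^{\alpha\beta}$ and leaves $\gamma^{\alpha\beta}$.
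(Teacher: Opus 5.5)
Your proposal is correct and is essentially the paper's own proof: both work in the constraint-adapted coordinates $y^I=(q^\alpha,\eta^A)$, show that the projector is purely tangential, $M^{ij}=\gamma^{\alpha\beta}E^i{}_\alpha E^j{}_\beta$, and use $\det\mathcal{G}=\det\gamma\,\det G$ to cancel the $\sqrt{\det G}$ weight before restricting to $\eta=0$; your block computation $M^{\alpha\beta}=\gamma^{\alpha\beta}$, $M^{\alpha A}=M^{AB}=0$ is the same tangential statement written in $y$-components. The only difference is cosmetic: you invoke the coordinate-invariant divergence formula, whereas the paper derives the equivalent identity $\partial_i E^i{}_\alpha=\partial_\alpha\log\sqrt{\det\mathcal{G}}$ explicitly.
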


\proof

In the constraint-adapted coordinate $y^I$, the frame field $E^i{}_\alpha = \del_\alpha \mc{X}^i$ satisfies the orthogonal condition
\begin{align}
    n^A{}_i E^i{}_\alpha = \del_i f^A E^i{}_\alpha =\del_\alpha \eta^A = 0\,.
\end{align}
The projector is decomposed as
\begin{align}
    M^{ij} = \gamma^{\alpha\beta} E^i{}_\alpha E^j{}_\beta\,.
\end{align}

The derivative operator projected by $M^{ij}$ and $E^i{}_\alpha$ are written in terms of the frame field and the local coordinate as 
\begin{align}
    M^{ij} \del_j = E^i{}_\alpha \gamma^{\alpha\beta} \del_\beta\,,
    \qquad 
    E^i{}_\alpha \del_i = \del_\alpha\big|_\eta\,.
\end{align}
Now we evaluate
\begin{align}
    \sqrt{\det G} \del_i \left( \frac{1}{\sqrt{\det G}} E^i{}_\alpha \right)
    &= \del_i E^i{}_\alpha - \frac{1}{\sqrt{\det G}} \del_\alpha \sqrt{\det G}
    \nn \\
    &= \del_\alpha \log \sqrt{\det \mc{G}} - \del_\alpha \log \sqrt{\det G}
    \nn \\
    &= \del_\alpha \log \sqrt{\det \gamma}\,,
\end{align}
where we use \eqref{eq:decompose-detmathcalG} and
\begin{align}
    \del_i E^i{}_\alpha = \frac{\del y^I}{\del x^i} \del_I \del_\alpha x^i = \frac{\del y^I}{\del x^i} \del_\alpha \frac{\del x^i}{\del y^I}\,.
\end{align}
Using these equations, we find that
\begin{align}
    \sqrt{\det G}\del_i \left( \frac{M^{ij}}{\sqrt{\det G}} \del_j \right)
    &=\sqrt{\det G} \del_i \left(\frac{1}{\sqrt{\det G}} E^i{}_\alpha \gamma^{\alpha\beta} \del_\beta \right)
    \nn \\
    &=\gamma^{\alpha\beta} \del_\alpha \del_\beta + (\del_\alpha \gamma^{\alpha\beta}) \del_\beta + (\del_\alpha \log \sqrt{\det \gamma}) \gamma^{\alpha \beta} \del_\beta
    \nn \\
    &=\frac{1}{\sqrt{\det \gamma}} \del_\alpha\left( \sqrt{\det \gamma} \gamma^{\alpha\beta} \del_\beta \right)\,,
\end{align}
and this equation reduces on the constraint manifold to 
\begin{align}
    \sqrt{\det G} \del_i \left(\frac{M^{ij}}{\sqrt{\det G}} \del_j \right)\Bigg|_{\mc{M}_d}
    = \frac{1}{\sqrt{\det g}} \del_\alpha \left( \sqrt{\det g} g^{\alpha \beta} \del_\beta \right)\,.
\end{align}
\qed

In this proof, we show \eqref{eq:Laplacian-rewriting} by the component calculation.
On the other hand, \eqref{eq:Laplacian-rewriting} is also shown by considering the partial integration.

\proof

\begin{align}
    &\int \dd^N\bm{x} \prod_A \delta(f^A)\frac{1}{\sqrt{\det G}} \psi_1^\ast \left( \sqrt{\det G} \del_i \left( \frac{M^{ij}}{\sqrt{\det G}} \del_j \right) \psi_2 \right)
    \nn \\
    &=- \int \dd^N \bm{x} \prod_A \delta(f^A) \frac{1}{\sqrt{\det G}} (\del_i \psi^\ast_1) M^{ij} \del_j \psi_2
    \nn \\
    &= -\int \dd^d \bm{q} \sqrt{\det g} g^{\alpha \beta} (\del_\alpha \psi_1^\ast) \del_\beta \psi_2
    \\
    &= \int \dd^d\bm{q} \sqrt{\det g} \psi_1^\ast \left(\frac{1}{\sqrt{\det g}} \del_\alpha \sqrt{\det g} g^{\alpha\beta} \del_\beta \psi_2 \right)
    = \int \dd^d\bm{q} \sqrt{\det g} \psi_1^\ast \Delta_{\mc{M}_d} \psi_2\,,
\end{align}
where we use 
\begin{align}
    \left( \del_i \prod_A \delta(f^A) \right) M^{ij} = 0\,,
\end{align}
coming from $M^{ij} \del_i f^A = 0$.
We then obtain \eqref{eq:Laplacian-rewriting}:
\begin{align}
    \Delta_{\mc{M}_d} = \sqrt{\det G} \del_i \left.\left(\frac{M^{ij}}{\sqrt{\det G}} \del_j \right)\right|_{\mc{M}_d}
    = \frac{1}{\sqrt{\det g}} \del_\alpha \left( \sqrt{\det g} g^{\alpha \beta} \del_\beta \right)\,.
\end{align}

\qed

\begin{thm}
The squared tangential momentum operator $\hat{P}^2=\delta^{ij}\hat{P}_i\hat{P}_j$ equals the negative Laplace--Beltrami operator on $\mc{M}_d$:
\begin{align}
    \hat{P}^2 = - \Delta_{\mc{M}_d}
\end{align}
\end{thm}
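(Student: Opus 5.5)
Our plan is to compute $\hat{P}^2$ directly as a differential operator in the ambient coordinates, using $\hat{P}_i=-\iu M^j{}_i\del_j$ from \eqref{eq:hatP_i}, and then to match the result with the ambient form \eqref{eq:LB-x}. The previous theorem identifies that ambient form with the intrinsic Laplace--Beltrami operator \eqref{eq:Laplacian-rewriting} on $\mc{M}_d$. Writing out the product gives
\begin{align}
    \hat{P}^2 = -\delta^{ik} M^j{}_i \del_j\bigl(M^l{}_k \del_l\bigr)
    = -M^{jl}\del_j\del_l - M^{jk}(\del_j M^l{}_k)\del_l\,,
\end{align}
where we use the symmetry and idempotency of the projector, \eqref{eq:projection-property}, to write $\delta^{ik}M^j{}_i M^l{}_k=M^{jl}$. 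For the target operator we expand \eqref{eq:LB-x} in the same way:
\begin{align}
    \Delta_{\mc{M}_d} = M^{ij}\del_i\del_j + (\del_i M^{ij})\del_j - \tfrac12 M^{ij}(\del_i\log\det G)\del_j\,.
\end{align}
The second-order parts agree, so the statement reduces to an identity between the first-order coefficients. Both sides are evaluated on $\mc{M}_d$, where only tangential derivatives of $\psi$ survive.

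The key step is to compare these first-order terms. We write $M^{ij}=\delta^{ij}-n^{Ai}G_{AB}n^{Bj}$, so that $\del_j M^{lk}=-\del_j(n^{Al}G_{AB}n^{Bk})$. In $M^{jk}\del_j M^l{}_k$ the tangential contraction on $k$ kills every term in which $n^{Bk}$ is left undifferentiated. This leaves
\begin{align}
    -M^{jk}\, n^{Al}G_{AB}\,\del_j\del_k f^B\,,
\end{align}
which is a normal vector in the index $l$. On the other hand, $(\del_i M^{ij})\del_j$ contains a normal piece $-(\del_i n^{Ai})G_{AB}n^{Bj}\del_j$ together with further normal pieces and the tangential pieces $-n^{Ai}\del_i(G_{AB}n^{Bj})$. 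The tangential pieces must cancel against the $\log\det G$ term.

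We expect this bookkeeping to be the main obstacle. Normal-derivative terms $n^{Bj}\del_j\psi$ depend on how $\psi$ is extended off $\mc{M}_d$, so they cannot be dropped naively. Ideally they cancel between the two sides, but $\Delta_{\mc{M}_d}$ as written in \eqref{eq:LB-x} contains some, while $\hat{P}^2$ contains the term $M^{jk}(\del_j\del_kf^B)G_{AB}n^{Al}\del_l$. The cleanest route is to project all first-order terms onto the tangent space with $M$. We then use the footnote remark that the physical action is extension independent, which lets us choose the extension with $n^{Bj}\del_j\psi=0$ on $\mc{M}_d$. With that choice only the tangential parts need matching.

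For the tangential parts we need the identity
\begin{align}
    M^l{}_j\,\del_i M^{ij} = -\tfrac12 M^{lj}\,\del_j\log\det G\,.
\end{align}
The term $-(\del_i n^{Ai})G_{AB}n^{Bj}$ is purely normal and drops under the projection. From $-n^{Ai}\del_i(G_{AB}n^{Bj})$, projecting with $M^l{}_j$ leaves $-n^{Ai}G_{AB}(\del_i n^{Bj})M^l{}_j$. The other side is $-\tfrac12 M^{lj}\del_j\log\det G=\tfrac12 M^{lj}G^{AB}\del_jG_{AB}$, by Jacobi's formula. Differentiating $G^{AB}=n^A{}_in^{Bi}$ gives $\del_jG^{AB}=2n^{(A}{}_i\del_j\del^if^{B)}$, and symmetry of the Hessian $\del_i\del_jf^B$ then matches the two expressions.

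As a fallback, if the ambient bookkeeping becomes too messy, we would pass to the constraint-adapted coordinates $y^I=(q^\alpha,\eta^A)$. There $M^{ij}\del_j=E^i{}_\alpha\gamma^{\alpha\beta}\del_\beta$, as in the preceding proof. Computing $\delta_{ij}(E^i{}_\alpha\gamma^{\alpha\beta}\del_\beta)(E^j{}_\gamma\gamma^{\gamma\delta}\del_\delta)$ with the Gauss formula gives $\gamma^{\beta\delta}\del_\beta\del_\delta-\gamma^{\alpha\beta}\Gamma^\delta{}_{\alpha\beta}\del_\delta$. This is exactly $\frac{1}{\sqrt{\det g}}\del_\alpha(\sqrt{\det g}\,g^{\alpha\beta}\del_\beta)$ on $\eta=0$, because the normal component of $\del_\alpha E^j{}_\gamma$ is annihilated by $E_{j}{}^{\beta}$.
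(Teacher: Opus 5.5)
Your proposal is correct. Your fallback is essentially the paper's proof. The paper writes $\hat P_i=-\iu e_i{}^\alpha\del_\alpha$ on $\mc{M}_d$ and expands $-\delta^{ij}e_i{}^\alpha\del_\alpha(e_j{}^\beta\del_\beta)$. It then applies the Gauss formula $\del_\alpha e^i{}_\beta=\Gamma^\gamma{}_{\alpha\beta}e^i{}_\gamma-K^A{}_{\alpha\beta}n^i{}_A$: the normal part is annihilated by contraction with the tangent vector $e_i{}^\alpha$, and the result is directly $-g^{\alpha\beta}\nabla_\alpha\del_\beta$. Your main route is genuinely different. You never introduce a chart and work only with $f^A$, its Hessian and the Gram matrix. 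You match $\hat P^2$ against the ambient expression \eqref{eq:LB-x}, which is how $\Delta_{\mc{M}_d}$ is defined in the main text. The identification with the intrinsic Laplace--Beltrami operator then comes from the preceding theorem \eqref{eq:Laplacian-rewriting}. The paper's route is shorter and self-contained. Yours is chart-free, fits the paper's ambient formulation, and, as shown next, proves a stronger statement.

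Two corrections to the ambient bookkeeping. First, a sign. With the paper's convention $\det G=\det G_{AB}$ (the inverse Gram matrix), the tangential first-order terms cancel only if $M^l{}_j\del_iM^{ij}=+\tfrac12M^{lj}\del_j\log\det G$; with the minus sign you wrote, they would add. Jacobi's formula gives $\del_j\log\det G_{AB}=G^{AB}\del_jG_{AB}=-2\,n_B{}^i\del_i\del_jf^B$, which also has the opposite sign to your step. These two sign slips compensate, so the comparison you actually carry out, $-M^{lj}n_B{}^i\del_i\del_jf^B=\tfrac12M^{lj}G^{AB}\del_jG_{AB}$, is right.

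Second, the special extension is unnecessary, because the normal first-order terms do cancel. Differentiating $M^{ij}n^C{}_j=0$ gives $n^C{}_j\del_iM^{ij}=-M^{ij}\del_i\del_jf^C$, and the $\log\det G$ term is purely tangential. Hence
\[
\sqrt{\det G}\,\del_i\Bigl(\frac{M^{ij}}{\sqrt{\det G}}\,\del_j\Bigr)=M^{ij}\del_i\del_j-n_C{}^l\,M^{ij}(\del_i\del_jf^C)\,\del_l\,,
\]
which is exactly minus your expansion $\hat P^2=-M^{jl}\del_j\del_l+n_B{}^l M^{jk}(\del_j\del_kf^B)\del_l$. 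So $\hat P^2=-\sqrt{\det G}\,\del_i\bigl(M^{ij}(\det G)^{-1/2}\del_j\bigr)$ holds as an operator identity throughout the neighborhood, simultaneously for every level set $f^A=c^A$, with no appeal to extension independence. Your extension argument is still valid, since both sides are extension independent on $\mc{M}_d$ (by the footnote for $\hat P^2$ and the preceding theorem for the ambient operator), but it is not needed.
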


\proof

Because the tangential momentum operator is written as $\hat{P}_i =  - \iu M^j{}_i \del_j = -\iu e_i{}^\alpha \del_\alpha$, the derivative operator $\hat{P}^2$ is expressed as 
\begin{align}
    \hat{P}^2 = - \delta^{ij}M^k{}_i \del_k (M^l{}_j \del_l) = - \delta^{ij} e_i{}^\alpha \del_\alpha(e_j{}^\beta \del_\beta) 
    = - g^{\alpha\beta} \del_\alpha \del_\beta - \delta^{ij}(e_i{}^\alpha \del_\alpha e_j{}^\beta) \del_\beta\,.
\end{align}
From the Gauss formula~\cite{2004-Poisson,2019-DajczerTojeiro},\footnote{
While we use the Gauss formula on $\mc{M}_d$ in the proof, we note that a similar relation holds in the neighborhood coordinates, $\del_\alpha E^i{}_\beta = \del_\alpha \del_\beta \mc{X}^i = \Gamma^\gamma_{\alpha\beta}[\gamma] E^i{}_\gamma -  K^A{}_{\alpha\beta} G_{AB} n^{Bi}$,
where $\Gamma^\gamma_{\alpha\beta}[\gamma]$ is the Levi-Civita connection calculated by the metric $\gamma_{\alpha\beta}$.
}
the derivative of the frame field is given by 
\begin{align}
    \del_\alpha e^i{}_\beta = \del_\alpha \del_\beta X^i = \Gamma^\gamma_{\alpha\beta} e^i{}_\gamma -  K^A{}_{\alpha\beta} n^i{}_A\,,
\end{align}
where we introduce the second fundamental form or extrinsic curvature associated with $n^A{}_i$ by
\begin{align}
     K^A{}_{\alpha\beta} = e^i{}_\alpha e^j{}_\beta \nabla_i n^A{}_j
    =e^i{}_\alpha e^j{}_\beta \del_i n^A{}_j\,.
    \label{eq:extrinsic-curvature-KA}
\end{align}
Using this, we find that
\begin{align}
    \hat{P}^2 = - g^{\alpha\beta} \del_\alpha \del_\beta + g^{\alpha\beta} \Gamma^\gamma_{\alpha\beta} \del_\gamma
    =- g^{\alpha\beta}\nabla_\alpha \del_\beta
    = - \frac{1}{\sqrt{\det g}} \del_\alpha \left( \sqrt{\det g} g^{\alpha\beta} \del_\beta \right)\,.
\end{align}

\qed

\begin{thm}
With respect to the induced inner product on $\mc{M}_d$ in \eqref{eq:inner-product-ambient}, the formal adjoint of the tangential momentum operator is given below:
\begin{align}
    \hat{P}^\dagger_i = \hat{P}_i - \iu C_i\,,
    \label{eq:Pdagger-app}
\end{align}
where
\begin{align}
    C_i \coloneqq \nabla_\alpha e_i{}^\alpha =\Delta_{\mathcal{M}_d}X_i\,.
\end{align}
\end{thm}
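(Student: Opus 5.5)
We compute $\braket{\psi_2|\hat{P}_i\psi_1}$ with the induced inner product and integrate by parts on $\mc{M}_d$. The cleanest route is to work in local coordinates $\bm{q}$ via \eqref{eq:inner-product}. There, as in the proof that $\hat{P}^2=-\Delta_{\mc{M}_d}$, we have $\hat{P}_i=-\iu\, e_i{}^\alpha\del_\alpha$ with $e_i{}^\alpha = g^{\alpha\beta}\delta_{ij}e^j{}_\beta$. This form is independent of how $\psi$ is extended off $\mc{M}_d$, since $M^j{}_i$ kills the normal derivative. We then write
\begin{align}
    \braket{\psi_2|\hat{P}_i\psi_1} = \int \dd^d\bm{q}\sqrt{\det g}\,\psi_2^\ast\,(-\iu\, e_i{}^\alpha\del_\alpha\psi_1)\,.
\end{align}
Integrating by parts (assuming $\mc{M}_d$ closed, or boundary terms vanishing for the wave functions considered) moves $\del_\alpha$ onto $\sqrt{\det g}\, e_i{}^\alpha\psi_2^\ast$. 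Using $\frac{1}{\sqrt{\det g}}\del_\alpha(\sqrt{\det g}\,V^\alpha)=\nabla_\alpha V^\alpha$, where each fixed $i$ makes $e_i{}^\alpha$ a tangent vector field on $\mc{M}_d$, this gives
\begin{align}
    \braket{\psi_2|\hat{P}_i\psi_1} = \int \dd^d\bm{q}\sqrt{\det g}\,\bigl[(-\iu e_i{}^\alpha\del_\alpha\psi_2)^\ast + \iu(\nabla_\alpha e_i{}^\alpha)\psi_2^\ast\bigr]\psi_1\,.
\end{align}
So $\braket{\psi_2|\hat{P}_i\psi_1}=\braket{(\hat{P}_i-\iu C_i)\psi_2|\psi_1}$ with $C_i=\nabla_\alpha e_i{}^\alpha$. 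This identifies the adjoint, with the sign fixed by the conjugation of $-\iu$ and the real coefficient $C_i$.

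The remaining step is the identification $\nabla_\alpha e_i{}^\alpha=\Delta_{\mc{M}_d}X_i$. Since $e_i{}^\alpha=g^{\alpha\beta}\del_\beta X_i$, where each ambient component $X_i(\bm{q})$ is a scalar function on $\mc{M}_d$, we get $\nabla_\alpha e_i{}^\alpha=\nabla_\alpha(g^{\alpha\beta}\del_\beta X_i)$, which is exactly \eqref{eq:LB-local} applied to $X_i$.

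Finally, we verify the claimed normality $M^{ij}C_j=0$ using the Gauss formula already quoted, $\del_\alpha e^i{}_\beta=\Gamma^\gamma_{\alpha\beta}e^i{}_\gamma - K^A{}_{\alpha\beta}n^i{}_A$. This gives
\begin{align}
    \Delta_{\mc{M}_d}X^i=g^{\alpha\beta}(\del_\alpha e^i{}_\beta-\Gamma^\gamma_{\alpha\beta}e^i{}_\gamma)=-K^A n^i{}_A\,,
\end{align}
which is purely normal. This also ties $C_i$ to the mean curvature, as used in \eqref{eq:Pi2-main}.

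The main subtlety, rather than any hard computation, is justifying the integration by parts in the ambient formulation \eqref{eq:inner-product-ambient}, where $\hat{P}_i$ contains ambient derivatives acting next to $\prod_A\delta(f^A)/\sqrt{\det G}$. I would sidestep this by passing to local coordinates via the measure identity already proven, and by noting that $M^{j}{}_i\del_j$ annihilates $\prod_A\delta(f^A)$ because $M^{ij}\del_j f^A=0$. This makes the boundary and measure terms consistent with the intrinsic computation.
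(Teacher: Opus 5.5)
Your proposal is correct and takes essentially the same route as the paper. You integrate by parts in local coordinates with the measure $\dd^d\bm{q}\sqrt{\det g}$, pick up the divergence $\nabla_\alpha e_i{}^\alpha$ of the tangent field, identify it with $\Delta_{\mc{M}_d}X_i$, and use the Gauss formula to get $\Delta_{\mc{M}_d}X^i=-K^A n^i{}_A$, a step the paper quotes without derivation.
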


\proof

Let us evaluate the expectation value of $\hat{P}^\dagger_i$ by using $\braket{\psi_1|\hat{P}^\dagger_i|\psi_2} = \braket{\hat{P}_i \psi_1|\psi_2}$ as:
\begin{align}
    \braket{\hat{P}_i\psi_1|\psi_2} &= \iu \int \dd^d\bm{q} \sqrt{\det g} e_i{}^\alpha (\del_\alpha \psi_1^\ast) \psi_2
    \nn \\
    &= -\iu \int \dd^d\bm{q} \sqrt{\det g}\psi_1^\ast\bigl[ e_i{}^\alpha \del_\alpha + \nabla_\alpha e_i{}^\alpha\bigr] \psi_2
    \\
    &=\braket{\psi_1| \hat{P}_i - \iu C_i| \psi_2}\,.
\end{align}
Here, the additional vector contribution is evaluated as
\begin{align}
    C_i = \nabla_\alpha e_i{}^\alpha = \Delta_{\mc{M}_d} X_i = - K^A G_{AB} n^B{}_i\,,
    \label{eq:Ci_in_P^dagger}
\end{align}
where we use that the Laplacian of the embedding coordinates is the normal vector and written by using the extrinsic curvature as
\begin{align}
    \Delta_{\mc{M}_d} X^i = - K^A n^i{}_A\,.
\end{align}
Here, we write the trace of the extrinsic curvature or mean curvature components as 
\begin{align}
    K^A = g^{\alpha\beta} K^A{}_{\alpha \beta}\,.
\end{align}

\qed

Since the additional term in $P^\dagger_i$ is in the normal direction from \eqref{eq:Ci_in_P^dagger}, $C_i$ is orthogonal to the tangent momentum $C_i P^i =0$ and we find that 
\begin{align}
    P^\dagger_i P^i = P_i P^i = - \Delta_{\mc{M}_d}\,.
\end{align}
However, we note that this tangent momentum operator is not Hermitian as we show above.

\begin{thm}
The geometric momentum operator defined by
\begin{align}
    \hat{\Pi}_i \coloneqq \hat{P}_i - \frac{\iu}{2} C_i\,,
\end{align}
is formally Hermitian with respect to the induced inner product on $\mathcal{M}_d$.
Its square differs from the negative Laplace--Beltrami operator by an extrinsic-curvature term:
\begin{align}
    \hat{\Pi}^2 = \hat{P}^2 + \frac{1}{4} C_i C^i = -\Delta_{\mc{M}_d} + \frac{1}{4} K^A G_{AB} K^B\,.
    \label{eq:Pi^2-app}
\end{align}
\end{thm}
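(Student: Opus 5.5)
Hermiticity comes first and follows directly from the preceding theorem, which gives $\hat{P}^\dagger_i = \hat{P}_i - \iu C_i$. Since $C_i(\bm{x})$ is a real multiplication operator, $(\frac{\iu}{2}C_i)^\dagger = -\frac{\iu}{2}C_i$. Therefore
\begin{align}
    \hat{\Pi}_i^\dagger = \hat{P}_i^\dagger + \frac{\iu}{2} C_i = \hat{P}_i - \iu C_i + \frac{\iu}{2}C_i = \hat{\Pi}_i\,,
\end{align}
so $\hat{\Pi}_i = \frac{1}{2}(\hat{P}_i + \hat{P}_i^\dagger)$ is formally Hermitian with respect to the induced inner product \eqref{eq:inner-product-ambient}.

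For the square, I would expand
\begin{align}
    \delta^{ij}\hat{\Pi}_i\hat{\Pi}_j = \hat{P}^2 - \frac{\iu}{2}\delta^{ij}\bigl(\hat{P}_i C_j + C_i \hat{P}_j\bigr) - \frac{1}{4}C_iC^i
\end{align}
as an operator identity acting on a test function $\psi$. The term $C^i\hat{P}_i\psi = -\iu C^i M^j{}_i\del_j\psi$ vanishes because $C_i$ is normal, $M^{ij}C_j=0$. For the other cross term, $\hat{P}_i(C^i\psi) = -\iu M^{ji}\del_j(C_i\psi)$; the piece with $\del_j\psi$ again vanishes by normality. What remains is $-\iu\,\psi\,M^{ji}\del_j C_i$.

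Differentiating $M^{ji}C_i = 0$ tangentially gives $M^{ji}\del_j C_i = -C_i\,M^{jk}\del_j M^{i}{}_k$. This is the main computational step, which I would carry out in intrinsic coordinates. Write $M^{ji}\del_j C_i = g^{\alpha\beta}e^i{}_\beta\del_\alpha C_i = -g^{\alpha\beta}C_i\del_\alpha e^i{}_\beta$, using $C_ie^i{}_\beta=0$. Then apply the Gauss formula $\del_\alpha e^i{}_\beta = \Gamma^\gamma_{\alpha\beta}e^i{}_\gamma - K^A{}_{\alpha\beta}n^i{}_A$. Only the normal part survives the contraction with $C_i$, giving $g^{\alpha\beta}K^A{}_{\alpha\beta}\,n^i{}_A C_i = K^A n^i{}_A C_i$. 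Inserting $C_i = -K^B G_{BC}n^C{}_i$ from \eqref{eq:Ci_in_P^dagger} and using $n^i{}_A n^C{}_i = \delta_A^C$ yields $n^i{}_AC_i = -G_{AB}K^B$. Since $C_iC^i = K^A G_{AB}K^B$, this gives
\begin{align}
    M^{ji}\del_jC_i = -K^AG_{AB}K^B = -C_iC^i\,.
\end{align}

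It then only remains to collect terms. The cross term is $-\frac{\iu}{2}\bigl(-\iu\,\psi\,M^{ji}\del_jC_i\bigr) = -\frac12\psi\,M^{ji}\del_jC_i = +\frac12 C_iC^i\,\psi$. Hence $\hat{\Pi}^2 = \hat{P}^2 + \frac12C_iC^i - \frac14C_iC^i = \hat{P}^2 + \frac14C_iC^i$. Finally, $\hat{P}^2 = -\Delta_{\mc{M}_d}$ by the earlier theorem, and $C_iC^i = K^AG_{AB}K^B$ by the computation above, which gives \eqref{eq:Pi^2-app}.

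The step most prone to error is the sign bookkeeping in $M^{ji}\del_jC_i$, which combines the Gauss-formula convention with the index placement $n^i{}_A = G_{AB}n^{Bi}$. I would cross-check it on a circle of radius $R$ with $f = r^2 - R^2$, where the right-hand side should be $1/(4R^2)$.
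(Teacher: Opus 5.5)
Your proposal is correct and follows essentially the paper's proof: Hermiticity from $\hat{P}_i^\dagger = \hat{P}_i - \iu C_i$, then expanding the square, discarding $C^i M^j{}_i \del_j$ by normality of $C_i$, and using $M^j{}_i \del_j C^i = -C_i C^i$, obtained by differentiating the orthogonality $C_i e^{i\alpha} = 0$. The only cosmetic difference is that you evaluate $C_i g^{\alpha\beta}\del_\alpha e^i{}_\beta$ through the Gauss formula and the mean-curvature form of $C_i$, whereas the paper reads it off directly from the definition $C_i = \nabla_\alpha e_i{}^\alpha$.
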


\proof

Taking the Hermitian conjugate of $\hat{\Pi}_i$ with \eqref{eq:Pdagger-app},
\begin{align}
    \hat{\Pi}^\dagger_i = \hat{P}_i^\dagger + \frac{\iu}{2} C_i = \hat{P}_i - \frac{\iu}{2} C_i = \hat{\Pi}_i,
\end{align}
the Hermiticity of $\hat{\Pi}_i$ is directly checked.
Using the projector, we write this geometric momentum operator as 
\begin{align}
    \hat{\Pi}_i = - \iu \left(M^j{}_i \del_j + \frac{1}{2} C_i \right)\,,
\end{align}
and its squared becomes 
\begin{align}
    \hat{\Pi}_i \hat{\Pi}^i &=- M^j{}_i \del_j (M^{ik}\del_k) - \frac{1}{2}M^j{}_i \del_j C^i - \frac{1}{2} C^i M^j{}_i \del_j - \frac{1}{4} C_i C^i
    \nn \\
    &=- M^j{}_i \del_j (M^{ik}\del_k) - \frac{1}{2}(M^j{}_i \del_j C^i) -  C^i M^j{}_i \del_j - \frac{1}{4} C_i C^i
    \nn \\
    &= - M^j{}_i \del_j (M^{ik}\del_k) + \frac{1}{4} C_i C^i\,,
\end{align}
where we use the conditions coming from $C_i$ being normal $C^i M^j{}_i \del_j = 0$ and 
\begin{align}
    M^j{}_i \del_j C^i = - C_i C^i\,.
\end{align}
The second condition is derived from the orthogonal condition $C_i e^{i\alpha} =0$ coming from $C_i = \Delta_{\mc{M}_d} X_i = - K^A n_A{}^i$ as
\begin{align}
    0 = \nabla_\alpha(C_i e^{i\alpha}) = (\del_\alpha C_i) e^{i\alpha} + C_i \nabla_\alpha e^{i\alpha}
    =(M^j{}_i \del_j C^i) + C_i C^i\,.
    \label{eq:nablaCe}
\end{align}
It is then found that
\begin{align}
    \hat{\Pi}_i \hat{\Pi}^i = \hat{P}^2 + \frac{1}{4} C_i C^i
    = -\Delta_{\mc{M}_d} + \frac{1}{4} K^A G_{AB} K^B \,.
\end{align}

\qed

\subsection{Geometric properties in tubular coordinates}
\label{app:geoprp-tubular}

\begin{thm}
The ambient Laplacian decomposes into tangential and normal parts in tubular coordinates as 
\begin{align}
    \Delta &= \Delta_{\mr{tan}} + \Delta_{\mr{nor}}\,,
    \\
    \Delta_{\mr{tan}} &= \frac{1}{\sqrt{\det g} \mc{J}} D_\alpha \left( \sqrt{\det g} \mc{J} \bar{g}^{\alpha \beta} D_\beta \right)\,,
    \label{eq:Delta-tan}
    \\
    \Delta_{\mr{nor}}&= \frac{1}{\mc{J}} \del_a(\mc{J} \del_a)\,.
    \label{eq:Delta-nor}
\end{align}
The tangential part involves the normal-bundle covariant derivative given by 
\begin{align}
    D_\alpha = \del_\alpha - \mc{B}^a_\alpha \del_a = \del_\alpha - \mc{A}_{\alpha}{}^a{}_b \xi^b \del_a\,.
\end{align}
\end{thm}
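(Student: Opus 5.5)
The plan is to apply the general coordinate formula for the Laplacian,
\begin{align}
    \Delta = \frac{1}{\sqrt{\det h}}\del_\mu\left(\sqrt{\det h}\,h^{\mu\nu}\del_\nu\right)\,,
\end{align}
in the tubular coordinates $z^\mu=(q^\alpha,\xi^a)$. This formula holds because $\delta_{ij}\dd x^i\dd x^j=h_{\mu\nu}\dd z^\mu\dd z^\nu$. I would then insert the inverse metric \eqref{eq:metric-hmunu-decompose} and the volume factor $\sqrt{\det h}=\sqrt{\det g}\,\mc{J}$ from \eqref{eq:jacobian-turblar}.

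The key observation is the structure of $h^{\mu\nu}$. It is the inverse of the triangular factorization \eqref{eq:hmunu-decomposition}, so it can be written as
\begin{align}
    h^{\mu\nu}\del_\mu\otimes\del_\nu = \bar g^{\alpha\beta}\,(\del_\alpha-\mc{B}^a_\alpha\del_a)\otimes(\del_\beta-\mc{B}^b_\beta\del_b) + \delta^{ab}\del_a\otimes\del_b\,.
\end{align}
This identity can be checked by expanding the right-hand side and comparing with the blocks $\bar g^{\alpha\beta}$, $-\bar g^{\alpha\gamma}\mc{B}^b_\gamma$ and $\delta^{ab}+\mc{B}^a_\alpha\bar g^{\alpha\beta}\mc{B}^b_\beta$. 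Equivalently, the dual coframe is $\dd q^\alpha$, $\dd\xi^a+\mc{B}^a_\alpha\dd q^\alpha$, and its dual frame is $D_\alpha=\del_\alpha-\mc{B}^a_\alpha\del_a$, $\del_a$. Writing $\rho=\sqrt{\det g}\,\mc{J}$, the divergence expression becomes
\begin{align}
    \Delta=\frac1\rho\del_\mu\bigl(\rho\,\bar g^{\alpha\beta}D^\mu_\alpha D_\beta\bigr)+\frac1\rho\del_a(\rho\,\del_a)\,,
\end{align}
where $D^\mu_\alpha$ denotes the components of $D_\alpha$. The second term gives \eqref{eq:Delta-nor} directly, because $\sqrt{\det g}$ is independent of $\xi$ and cancels.

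For the first term I need $\frac1\rho\del_\mu(\rho\,D^\mu_\alpha W^\alpha)=\frac1\rho D_\alpha(\rho W^\alpha)$ for any $W^\alpha$. This amounts to the vector field $D_\alpha$ being divergence-free with respect to the flat measure $\dd^d q\,\dd^M\xi$ once the coefficient $W^\alpha$ is pulled out. Expanding, the discrepancy is the extra term $-\rho W^\alpha\,\del_a\mc{B}^a_\alpha$, and
\begin{align}
    \del_a\mc{B}^a_\alpha=\mc{A}_\alpha{}^a{}_a=0
\end{align}
by antisymmetry of the normal-bundle connection \eqref{eq:def-connection-A}. With $W^\alpha=\bar g^{\alpha\beta}D_\beta$ this yields \eqref{eq:Delta-tan}.

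The step that needs the most care is justifying that the divergence formula can be rearranged into the $D_\alpha$ form. This requires being careful that $\del_\mu$ acting on the mixed coefficient $\rho\,\bar g^{\alpha\beta}\mc{B}^a_\alpha$ is exactly accounted for by $D_\alpha$ acting on $\rho W^\alpha$ plus the trace term. I would verify this by a short explicit index expansion rather than by abstract argument. Everything else is bookkeeping already supplied by the preceding theorem on $h_{\mu\nu}$.
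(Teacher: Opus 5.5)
Your proposal is correct and takes essentially the same route as the paper. Both arguments start from $\Delta=(\det h)^{-1/2}\partial_\mu\bigl(\sqrt{\det h}\,h^{\mu\nu}\partial_\nu\bigr)$ and regroup $h^{\mu\nu}\partial_\nu$ so that $D_\alpha=\partial_\alpha-\mathcal{B}^a_\alpha\partial_a$ appears; your dual-frame form $\bar g^{\alpha\beta}D_\alpha\otimes D_\beta+\delta^{ab}\partial_a\otimes\partial_b$ is a compact version of the paper's two bracket identities. Both then use $\partial_a\mathcal{B}^a_\alpha=\mathcal{A}_\alpha{}^a{}_a=0$ to pull $D_\alpha$ outside the divergence, and finally insert $\sqrt{\det h}=\sqrt{\det g}\,\mathcal{J}$.
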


\proof 

In the $z^\mu$ coordinates, the Laplace operator is expressed as
\begin{align}
    \Delta = \frac{1}{\sqrt{\det h}} \del_\mu \left( \sqrt{\det h} h^{\mu\nu} \del_\nu \right)\,.
\end{align}
If we expand this as
\begin{align}
    \Delta = \frac{1}{\sqrt{\det h}} \left[
        \del_\alpha \left( \sqrt{\det h} h^{\alpha \beta} \del_\beta + \sqrt{\det h} h^{\alpha b} \del_b \right)
        + \del_a \left( \sqrt{\det h} h^{a \beta} \del_\beta + \sqrt{\det h} h^{ab} \del_b \right)
    \right]\,,
\end{align}
the first bracket terms are summarized as 
\begin{align}
    h^{\alpha \beta} \del_\beta + h^{\alpha b} \del_b =\bar{g}^{\alpha \beta} (\del_\beta - \mc{B}^b_\beta \del_b)
    \eqqcolon \bar{g}^{\alpha\beta} D_\beta\,,
\end{align}
which is regarded as a covariant derivative with the (non-abelian) gauge field $\mc{B}^a_\alpha$.
Since the terms in the second bracket become
\begin{align}
    h^{a\beta} \del_\beta + h^{ab} \del_b
    &= - \mc{B}^a{}_\alpha \bar{g}^{\alpha\beta} \del_\beta + (\delta^{ab} + \mc{B}^a_\alpha \bar{g}^{\alpha\beta} \mc{B}^b_\beta) \del_b
    \nn \\
    &=\delta^{ab} \del_b - \mc{B}^a_\alpha \bar{g}^{\alpha\beta}(\del_\beta - \mc{B}^b_\beta \del_b)
    =\delta^{ab} \del_b - \mc{B}^a_\alpha \bar{g}^{\alpha\beta} D_\beta\,,
\end{align}
the Laplace operator is
\begin{align}
    \Delta = \frac{1}{\sqrt{\det h}} \left[
        \del_\alpha (\sqrt{\det h} \bar{g}^{\alpha\beta} D_\beta) + \del_a(\sqrt{\det h} \del_a) - \del_a(\sqrt{\det h} \mc{B}^a{}_\alpha \bar{g}^{\alpha\beta} D_\beta)
    \right]\,.
    \label{eq:Laplacian-totyuu}
\end{align}
The last term in \eqref{eq:Laplacian-totyuu} is simplified as
\begin{align}
    \del_a \left( \sqrt{\det h} \mc{B}^a_\alpha \bar{g}^{\alpha\beta} D_\beta \right)
    &= \mc{B}^a_\alpha \del_a \left(\sqrt{\det h} \bar{g}^{\alpha\beta} D_\beta \right) + (\del_a \mc{B}^a_\alpha) \sqrt{\det h} \bar{g}^{\alpha\beta} D_\beta
    \nn \\
    &=\mc{B}^a_\alpha \del_a \left(\sqrt{\det h} \bar{g}^{\alpha\beta} D_\beta \right)\,,
\end{align}
due to the condition of $\del_a \mc{B}^a_\alpha = \mc{A}_{\alpha ab} \delta^{ab} =0$.
It is then found that 
\begin{align}
    \Delta &= \frac{1}{\sqrt{\det h}} \left[
        (\del_\alpha - \mc{B}^a_\alpha \del_a) \left( \sqrt{\det h} \bar{g}^{\alpha\beta} D_\beta \right) + \del_a \left( \sqrt{\det h} \del_a \right)
    \right]
    \\
    &= \frac{1}{\sqrt{\det g}\mc{J}} D_\alpha \left( \sqrt{\det g} \mc{J} \bar{g}^{\alpha\beta} D_\beta \right) + \frac{1}{\mc{J}} \del_a (\mc{J} \del_a)\,,
\end{align}
where \eqref{eq:jacobian-turblar} is used.

\qed

The covariant derivative introduced in the above proof is also written as
\begin{align}
    D_\alpha = \del_\alpha -\mc{B}^a_\alpha \del_a
    = \del_\alpha - \mc{A}_{\alpha}{}^a{}_b \xi^b \del_a
    = \del_\alpha + \frac{1}{2} \mc{A}_{\alpha}{}^{ab} \ms{L}_{ab}\,,
    \qquad 
    \ms{L}_{ab} = \xi_a \del_b - \xi_b \del_a\,,
\end{align}
and $\ms{L}_{ab}$ is regarded as the generator of $\mr{SO}(M)$ or the $M$-dimensional angular momentum operator.
This generator satisfies the following algebra:
\begin{align}
    [\ms{L}_{ab}, \ms{L}_{cd}] = - (\delta_{ac} \ms{L}_{bd} - \delta_{bc} \ms{L}_{ad} - \delta_{ad} \ms{L}_{bc} + \delta_{bd} \ms{L}_{ac})\,.
\end{align}
On the constraint manifold, the tangent Laplacian reduces to the covariant Laplace operator with the gauge field:
\begin{align}
    \Delta_{\mc{M}_d}^{(\mc{A})} \coloneqq \Delta_{\mr{tan}} \big|_{\xi=0} = \frac{1}{\sqrt{\det g}} D_\alpha \sqrt{\det g} g^{\alpha \beta} D_\beta\,.
\end{align}
Here $\xi=0$ is taken in the coefficients $\bar{g}^{\alpha\beta}$ and $\mc{J}$, whereas the operator $\xi^b \del_a$ in $D_\alpha$ is kept: it acts on the normal oscillator states as a rotation generator and is of order one.

\begin{thm}
Rescaling the wave function by $\mc{J}^{1/2}$ removes the first-order normal derivatives from $\Delta_{\mathrm{nor}}$,
leaving the flat normal Laplacian and a scalar term $W$:
\begin{align}
    \mc{J}^{1/2} \Delta_{\mr{nor}} \mc{J}^{-1/2} = \del_a \del_a + W\,,
    \qquad 
    W = \frac{1}{4}\left( \frac{\del_a \mc{J}}{\mc{J}} \right)^2  - \frac{1}{2} \frac{\del_a \del_a \mc{J}}{\mc{J}}\,.
\end{align}
\end{thm}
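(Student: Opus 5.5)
The identity is a pointwise conjugation of a differential operator, so the plan is to compute $\mc{J}^{1/2}\Delta_{\mr{nor}}(\mc{J}^{-1/2}\phi)$ directly on an arbitrary smooth test function $\phi$ and collect terms by derivative order. We use $\Delta_{\mr{nor}}=\mc{J}^{-1}\del_a(\mc{J}\del_a)$ from Eq.~\eqref{eq:Delta-nor}. Here $\mc{J}>0$ in the tube, $\mc{J}$ depends on $(\bm{q},\bm{\xi})$, and only $\del_a=\del/\del\xi^a$ appears. Hence $\bm{q}$ is a spectator, and no geometric input beyond positivity and smoothness of $\mc{J}$ is needed. The derivation is the standard one used to symmetrize a Laplacian with respect to a weighted measure.

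First, write $\Delta_{\mr{nor}}=\del_a\del_a+(\del_a\log\mc{J})\,\del_a$. Set $u=\log\mc{J}$ and write $\mc{J}^{-1/2}=\ee^{-u/2}$. Then
\begin{align}
    \del_a(\ee^{-u/2}\phi)=\ee^{-u/2}\bigl(\del_a\phi-\tfrac12(\del_a u)\phi\bigr)\,,
\end{align}
and
\begin{align}
    \del_a\del_a(\ee^{-u/2}\phi)=\ee^{-u/2}\bigl(\del_a\del_a\phi-(\del_a u)\del_a\phi-\tfrac12(\del_a\del_a u)\phi+\tfrac14(\del_a u)^2\phi\bigr)\,.
\end{align}
Adding the first-order piece $(\del_a u)\del_a(\ee^{-u/2}\phi)$ and multiplying by $\ee^{u/2}$ gives
\begin{align}
    \mc{J}^{1/2}\Delta_{\mr{nor}}\mc{J}^{-1/2}\phi=\del_a\del_a\phi-\tfrac12(\del_a\del_a u)\phi-\tfrac14(\del_a u)^2\phi\,.
\end{align}
The first-order terms cancel, since $-(\del_a u)\del_a\phi+(\del_a u)\del_a\phi=0$. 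This cancellation is exactly the statement that the rescaling removes first-order normal derivatives.

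Finally, convert back to $\mc{J}$. We have $\del_a u=\del_a\mc{J}/\mc{J}$ and $\del_a\del_a u=\del_a\del_a\mc{J}/\mc{J}-(\del_a\mc{J}/\mc{J})^2$. Substituting gives
\begin{align}
    W=-\tfrac12\frac{\del_a\del_a\mc{J}}{\mc{J}}+\tfrac12\Bigl(\frac{\del_a\mc{J}}{\mc{J}}\Bigr)^2-\tfrac14\Bigl(\frac{\del_a\mc{J}}{\mc{J}}\Bigr)^2=\tfrac14\Bigl(\frac{\del_a\mc{J}}{\mc{J}}\Bigr)^2-\tfrac12\frac{\del_a\del_a\mc{J}}{\mc{J}}\,,
\end{align}
which is the claimed $W$.

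The only real obstacle is bookkeeping of signs and the factors of $\tfrac12$ and $\tfrac14$. Two points need care. The cross term $-(\del_a u)\del_a\phi$ comes from the second-derivative part, not from the first-order part. The $(\del_a u)^2$ contributions must also be tallied correctly: $+\tfrac14$ from the second derivative and $-\tfrac12$ from the first-order term, giving $-\tfrac14$ in total. As a consistency check, the rescaled operator should be symmetric in the flat measure $\dd^M\bm{\xi}$, because $\Delta_{\mr{nor}}$ is symmetric in $\mc{J}\,\dd^M\bm{\xi}$. This is consistent with the result being a flat Laplacian plus a real potential.
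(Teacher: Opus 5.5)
Your computation is correct and is essentially the paper's proof. Both write $\Delta_{\mr{nor}}=\del_a\del_a+(\del_a\log\mc{J})\del_a$, commute $\mc{J}^{-1/2}$ through $\del_a$, and then convert $\del_a\del_a\log\mc{J}$ back to $\mc{J}$; the only cosmetic difference is that the paper keeps the result in the factorized form $\bigl(\del_a+\tfrac12\del_a\log\mc{J}\bigr)\bigl(\del_a-\tfrac12\del_a\log\mc{J}\bigr)$, whereas you expand on a test function.
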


\proof

The similarity transformation of $\Delta_{\mr{nor}}$ in \eqref{eq:Delta-nor} is 
\begin{align}
    \mc{J}^{1/2} \Delta_{\mr{nor}} \mc{J}^{-1/2}
    = \mc{J}^{1/2} \bigl[ \del_a \del_a + (\del_ a \log \mc{J}) \del_a \bigr] \mc{J}^{-1/2}\,,
\end{align}
and we find
\begin{align}
    \del_a \mc{J}^{-1/2} = \mc{J}^{-1/2} \left[ \del_a - \frac{1}{2} (\del_a \log \mc{J}) \right]\,.
\end{align}
The similarity transformation then becomes
\begin{align}
    \mc{J}^{1/2} \Delta_{\mr{nor}} \mc{J}^{-1/2}
    &=\mc{J}^{1/2}[\del_a + (\del_a \log \mc{J})] \mc{J}^{-1/2} \left[ \del_a - \frac{1}{2} (\del_a \log \mc{J}) \right]
    \nn \\
    &= \del_a \del_a + W\,,
\end{align}
with
\begin{align}
    W = \frac{1}{4} \left(\frac{\del_a \mc{J}}{ \mc{J}} \right)^2 - \frac{1}{2} \frac{\del_a \del_a \mc{J}}{\mc{J}}\,.
\end{align}

\qed

\begin{thm}
On the constraint manifold, the scalar term $W$ is determined by the extrinsic curvature and yields the geometric potential $V_{\geo}$:
\begin{align}
    W\big|_{\xi=0} = - \frac{1}{4} \delta^{ab} \bigl[ K_a K_b - 2 K_{a\alpha\beta} K_b{}^{\alpha\beta} \bigr] = -2 m V_\geo\,.
\end{align}
\end{thm}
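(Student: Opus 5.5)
The plan is to compute $W$ directly from the explicit Jacobian $\mc{J}=\det(\delta_\alpha^\beta+\xi^a K_{a\alpha}{}^\beta)$ in Eq.~\eqref{eq:jacobian-turblar}. Only the first two normal derivatives of $\mc{J}$ at $\xi=0$ enter $W|_{\xi=0}$. The key observation is that, at fixed $q^\alpha$, the matrices $K_a \equiv (K_{a\alpha}{}^\beta)$ do not depend on $\xi$. Hence $\mc{J}$ is a polynomial in $\xi^a$ with $q$-dependent coefficients, and $\del_a$ acts only on the explicit $\xi^a$.

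First, I will rewrite $W$ in terms of $\log\mc{J}$. Using $\del_a\mc{J}/\mc{J}=\del_a\log\mc{J}$ and
\begin{align}
\frac{\del_a\del_a\mc{J}}{\mc{J}}=\del_a\del_a\log\mc{J}+(\del_a\log\mc{J})^2,
\end{align}
we get $W=-\tfrac14(\del_a\log\mc{J})^2-\tfrac12\del_a\del_a\log\mc{J}$.

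Second, I will expand $\log\mc{J}=\tr\log(1+\xi^aK_a)$ to second order. This gives
\begin{align}
\log\mc{J}=\xi^a\,\tr K_a-\tfrac12\,\xi^a\xi^b\,\tr(K_aK_b)+\mc{O}(\xi^3).
\end{align}
Evaluating at $\xi=0$ then yields $\del_a\log\mc{J}|_{0}=K_a$ and $\del_a\del_a\log\mc{J}|_0=-\delta^{ab}\tr(K_aK_b)$.

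Third, I will identify the trace of the product with the full contraction. Since $\tr(K_aK_b)=K_{a}{}^\alpha{}_\beta K_b{}^\beta{}_\alpha$, the symmetry $K_{a\alpha\beta}=K_{a\beta\alpha}$ of the second fundamental form (which follows from the Gauss formula, $\del_\alpha e^i{}_\beta$ being symmetric) turns this into $K_{a\alpha\beta}K_b{}^{\alpha\beta}$. Substituting these three pieces gives
\begin{align}
W|_{\xi=0}=-\tfrac14\delta^{ab}K_aK_b+\tfrac12\delta^{ab}K_{a\alpha\beta}K_b{}^{\alpha\beta},
\end{align}
which is the claimed expression. Comparison with Eq.~\eqref{eq:Vgeo} then shows directly that this equals $-2mV_\geo$.

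The main point needing care is the index and sign bookkeeping. Specifically, I must confirm that $\mc{J}$ uses the mixed tensor $K_{a\alpha}{}^\beta=g^{\beta\gamma}K_{a\alpha\gamma}$ with the same sign convention as in the Weingarten equation. I must also confirm that no $\xi$-dependence hides in $K$, $g$, or the frame at fixed $q$; this holds because $\nu^i{}_a$ and $e^i{}_\alpha$ depend only on $q$ in the tubular parametrization~\eqref{eq:coordinate-decomposition-x-tubular}. The result is insensitive to the overall sign convention of $K$, since only quadratic terms survive.
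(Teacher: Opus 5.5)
Your proposal is correct and follows essentially the same route as the paper: a direct second-order expansion of $\mc{J}$ in $\xi^a$, substituted into $W$ and evaluated at $\xi=0$. The only difference is cosmetic. You expand $\log\mc{J}=\tr\log(1+\xi^aK_a)$ rather than the determinant itself, which avoids the partial cancellation of the $\delta^{ab}K_aK_b$ terms but produces the same result. You also state explicitly the symmetry of $K_{a\alpha\beta}$ needed to identify $\tr(K_aK_b)$ with $K_{a\alpha\beta}K_b{}^{\alpha\beta}$, which the paper uses without comment.
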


\proof

The expansion of $\mc{J}$ around $\xi =0$ is 
\begin{align}
    \mc{J} = 1 + \xi^a K_a + \frac{1}{2} \xi^a \xi^b \bigl( K_a K_b - K_{a\alpha\beta}  K_b{}^{\alpha\beta}) \bigr) + \mc{O}(\xi^3)\,,
\end{align}
and we then obtain
\begin{align}
    W\big|_{\xi=0} = \frac{1}{4} \delta^{ab} K_a K_b - \frac{1}{2} \delta^{ab} \bigl[ K_a K_b - K_{a\alpha\beta} K_b{}^{\alpha\beta} \bigr]
    = - \frac{1}{4} \delta^{ab} \bigl[
        K_a K_b - 2 K_{a\alpha\beta} K_b{}^{\alpha\beta}
    \bigr]\,.
\end{align}

\qed

\begin{thm}
At leading order in the tubular expansion, the rescaled tangential Laplacian reduces to the Laplace--Beltrami operator coupled to the normal-bundle connection:
\begin{align}
    \mc{J}^{1/2} \Delta_{\mr{tan}} \mc{J}^{-1/2} = \Delta_{\mc{M}_d}^{(\mc{A})} + \mc{O}(\xi)
    = \frac{1}{\sqrt{\det g}} D_\alpha(\sqrt{\det g} g^{\alpha \beta} D_\beta) + \mc{O}(\xi)\,.
\end{align}
\end{thm}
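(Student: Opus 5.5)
We will start from the expression \eqref{eq:Delta-tan} for $\Delta_{\mr{tan}}$ and conjugate it by $\mc{J}^{1/2}$. Write $\rho \coloneqq \sqrt{\det g}\,\mc{J}$, so that $\Delta_{\mr{tan}} = \rho^{-1} D_\alpha(\rho\,\bar{g}^{\alpha\beta} D_\beta)$. The first step is to use the Leibniz rule for $D_\alpha$ to commute $\mc{J}^{\pm1/2}$ through the covariant derivatives. Since $D_\alpha = \del_\alpha - \mc{A}_\alpha{}^a{}_b\xi^b\del_a$ is a first-order derivation, we have
\begin{align}
    D_\beta(\mc{J}^{-1/2}\tilde\psi) = \mc{J}^{-1/2}\Bigl(D_\beta - \tfrac12 D_\beta\log\mc{J}\Bigr)\tilde\psi\,.
\end{align}
This gives the exact identity
\begin{align}
    \mc{J}^{1/2}\Delta_{\mr{tan}}\mc{J}^{-1/2}
    = \frac{1}{\sqrt{\det g}\,\mc{J}^{1/2}}\Bigl(D_\alpha - \tfrac12 \cdot\Bigr)\ldots
    = \frac{1}{\sqrt{\det g}} \tilde D_\alpha\bigl(\sqrt{\det g}\,\bar{g}^{\alpha\beta}\tilde D_\beta\bigr)
      + \text{(terms with } D\log\mc{J})\,,
\end{align}
where the extra pieces are of the form $(D_\alpha\log\mc{J})\bar g^{\alpha\beta}D_\beta$, $D_\alpha(\bar g^{\alpha\beta}D_\beta\log\mc{J})$ and $(D\log\mc{J})^2$. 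We will organize the result as $\frac{1}{\sqrt{\det g}}D_\alpha(\sqrt{\det g}\,\bar g^{\alpha\beta}D_\beta)$ plus first-order and zeroth-order corrections, all of which are built from $D_\alpha\log\mc{J}$ and its derivatives.

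The second step is to show that every correction is $\mc{O}(\xi)$. From the expansion $\mc{J} = 1+\xi^aK_a+\mc{O}(\xi^2)$ used in the previous theorem, $\log\mc{J} = \xi^aK_a + \mc{O}(\xi^2)$. The part $\del_\alpha\log\mc{J} = \xi^a\del_\alpha K_a + \dots$ is manifestly $\mc{O}(\xi)$. The delicate part is the connection term, since $-\mc{A}_\alpha{}^a{}_b\xi^b\del_a\log\mc{J} = -\mc{A}_\alpha{}^a{}_b\xi^b K_a + \mc{O}(\xi^2)$, which is also $\mc{O}(\xi)$ because of the explicit $\xi^b$. Hence $D_\alpha\log\mc{J} = \mc{O}(\xi)$, and so are its $D$-derivatives, because $D$ preserves the order in $\xi$: $\del_\alpha$ does not lower it, and $\xi^b\del_a$ preserves homogeneity. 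In the same way, $\bar g^{\alpha\beta} = g^{\alpha\beta} + \mc{O}(\xi)$ and $\sqrt{\det g}\,\mc{J}/\sqrt{\det g} = 1+\mc{O}(\xi)$. Replacing $\bar g^{\alpha\beta}$ by $g^{\alpha\beta}$ inside the operator therefore produces only $\mc{O}(\xi)$ coefficient corrections multiplying $D_\alpha D_\beta$ or $D_\beta$.

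The final step identifies the leading term with $\Delta^{(\mc{A})}_{\mc{M}_d}$ as defined just above, where $\xi=0$ is set in the coefficients but the operator $\xi^b\del_a$ inside $D_\alpha$ is kept. We will state explicitly that $\mc{O}(\xi)$ refers to the coefficient functions, in the same sense as $\hat R=\mc{O}(\xi)$ in \eqref{eq:H-R}.

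The main obstacle is bookkeeping: $D_\alpha$ contains $\xi^b\del_a$, which acts nontrivially on $\xi$-dependent coefficients, so it does not commute with the $\xi$-expansion in the naive way. I would handle this by noting that $[\xi^b\del_a,\xi^c] = \delta^c_a\xi^b$ is degree-preserving. Consequently, $\xi^b\del_a$ applied to an $\mc{O}(\xi^n)$ coefficient remains $\mc{O}(\xi^n)$, and the order counting goes through.
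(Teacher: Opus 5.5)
Your proposal is correct and follows the paper's proof. Both conjugate using the Leibniz rule $D_\alpha\mc{J}^{-1/2}=\mc{J}^{-1/2}\bigl(D_\alpha-\tfrac12 D_\alpha\log\mc{J}\bigr)$, observe that every correction is built from $D_\alpha\log\mc{J}=\mc{O}(\xi)$, and replace $\bar{g}^{\alpha\beta}$ by $g^{\alpha\beta}$. The first-order cross terms you list in fact cancel exactly by the symmetry of $\bar{g}^{\alpha\beta}$, leaving only the scalar remainder $W'=-\frac{1}{2\sqrt{\det g}}D_\alpha\bigl(\sqrt{\det g}\,\bar{g}^{\alpha\beta}D_\beta\log\mc{J}\bigr)-\frac14\bar{g}^{\alpha\beta}(D_\alpha\log\mc{J})(D_\beta\log\mc{J})$. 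Your remark that $\xi^b\del_a$ preserves the degree in $\xi$ supplies the step the paper leaves implicit when it concludes $W'=\mc{O}(\xi)$.
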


\proof

Since the covariant derivative acts on the Jacobian as 
\begin{align}
    D_\alpha \mc{J}^{-1/2} = \mc{J}^{-1/2} \left( D_\alpha - \frac{1}{2} (D_\alpha \log \mc{J}) \right)\,,
\end{align}
the similarity transformation of $\Delta_{\mr{tan}}$ in \eqref{eq:Delta-tan} becomes 
\begin{align}
    \mc{J}^{1/2} \Delta_{\mr{tan}} \mc{J}^{-1/2}
    = \frac{1}{\sqrt{\det g}} D_\alpha \sqrt{\det g}\bar{g}^{\alpha \beta} D_\beta + W'\,,
\end{align}
with
\begin{align}
    W' = - \frac{1}{2 \sqrt{\det g}} D_\alpha(\sqrt{\det g} \bar{g}^{\alpha \beta }D_\beta \log\mc{J}) - \frac{1}{4}\bar{g}^{\alpha \beta} (D_\alpha \log \mc{J}) (D_\beta \log \mc{J})\,.
\end{align}
The order of $\xi^a$ is
\begin{align}
    \bar{g}^{\alpha\beta} = g^{\alpha\beta} + \mc{O}(\xi)\,,
    \qquad 
    \log \mc{J} = \xi^a \tr K_a + \mc{O}(\xi^2)\,,
    \qquad 
    D_\alpha \log \mc{J} = \mc{O}(\xi)\,,
\end{align}
we thus find 
\begin{align}
    W' = \mc{O}(\xi)\,.
\end{align}
As a result, we have the following equation:
\begin{align}
    \mc{J}^{1/2} \Delta_{\mr{tan}} \mc{J}^{-1/2} = \frac{1}{\sqrt{\det g}}D_\alpha (\sqrt{\det g} g^{\alpha \beta} D_\beta) + \mc{O}(\xi)\,.
\end{align}

\qed

From the results, the similarity transformation of the ambient Laplacian becomes
\begin{align}
    \mc{J}^{1/2} \Delta \mc{J}^{-1/2} = \Delta^{(\mc{A})}_{\mc{M}_d} + \del_a \del_a - 2 m V_\geo + \mc{O}(\xi)\,,
\end{align}
where the geometric potential is defined by 
\begin{align}
    V_\geo = \frac{1}{8m} \delta^{ab} \bigl[ K_a K_b - 2 K_{a\alpha\beta} K_b{}^{\alpha\beta} \bigr]\,.
\end{align}

\section{Effective Hamiltonian for QHD+Penalty}
\label{app:effective}

We show the derivation of the instantaneous oscillator Hamiltonian in \eqref{eq:inst-osc-ham}.
The Hamiltonian in the normal direction is
\begin{align}
    \hat{H}_\xi(t) = - \frac{1}{2m a(t)} \Delta_\xi + a(t) \kappa \xi^2\,.
\end{align}
If we write the derivative operator as $\hat{p}_a = - \iu \del_a$, the Hamiltonian describes the harmonic oscillator with a time dependent mass
\begin{align}
    \hat{H}_\xi = \sum_a \biggl[
        \frac{1}{2ma(t)} \hat{p}_a^2 + \frac{1}{2} ma(t) \Omega_\perp^2 (\hat{\xi}^a)^2
    \biggr]\,,
\end{align}
where the frequency is 
\begin{align}
    \Omega^2_\perp = \frac{2\kappa}{m}\,.
\end{align}
If we introduce the creation/annihilation operator
\begin{align}
    \hat{\ms{b}}_a(t) = \sqrt{\frac{ma \Omega_\perp}{2}} \hat{\xi}^a + \frac{\iu}{\sqrt{2 m a \Omega_\perp}} \hat{p}_a\,,
    \qquad 
    \hat{\ms{b}}_a^\dagger(t) = \sqrt{\frac{ma \Omega_\perp}{2}} \hat{\xi}^a - \frac{\iu}{\sqrt{2 m a \Omega_\perp}} \hat{p}_a\,,
\end{align}
they satisfy
\begin{align}
    [\hat{\ms{b}}_a(t), \hat{\ms{b}}^\dagger_b] = \delta_{ab}\,,
\end{align}
and the Hamiltonian becomes
\begin{align}
    \hat{H}_\xi = \Omega_\perp \left( \hat{\ms{N}}_\perp(t) + \frac{M}{2} \right)\,,
    \qquad 
    \hat{\ms{N}}_\perp(t) = \sum_a \hat{\ms{b}}^\dagger_a(t) \hat{\ms{b}}_a(t)\,.
\end{align}

In this system, the energy eigenvalues are time independent, but the eigenstates depend on time through the time-dependent mass.
This dependence is generated by a so-called dilatation operator
\begin{align}
    \hat{\ms{D}}_\perp = \frac{1}{2} (\hat{\xi}^a \hat{p}_a + \hat{p}_a \hat{\xi}^a)
    = -\iu \left( \xi^a \del_a + \frac{M}{2} \right)\,.
\end{align}
If we introduce the reference time $t_0$
\begin{align}
    a_0\coloneqq a(t_0)\,,
    \qquad 
    s(t) \coloneqq \sqrt{\frac{a(t)}{a_0}}\,,
\end{align}
and the unitary operator
\begin{align}
    \ms{U}_s(t) = \exp \left[ \iu \log s(t) \hat{\ms{D}}_\perp\right]\,,
\end{align}
we map the time dependent Hamiltonian to that at the reference frame by the unitary transformation and remove the time dependence:
\begin{align}
    \hat{\ms{U}}^\dagger_s \hat{H}_\xi(t) \hat{\ms{U}}_s
    = \sum_a \left[\frac{s^2}{2ma(t)} \hat{p}_a^2 + \frac{a(t) \kappa}{s^2} (\hat{\xi}^a)^2\right]
    = \sum_a \left[ \frac{1}{2ma_0} \hat{p}_a^2 + a_0 \kappa (\hat{\xi}^a)^2 \right] = \hat{H}_\xi(t_0)\,.
\end{align}
However, since $\hat{\ms{U}}_s$ depends on time, the Hamiltonian in the transformed frame acquires an additional term,
\begin{align}
    \hat{H}_\xi \mapsto \hat{\ms{U}}_s^\dagger \hat{H}_\xi \hat{\ms{U}}_s - \iu \hat{\ms{U}}_s^\dagger \frac{\del \hat{\ms{U}}_s}{\del t} \eqqcolon \hat{H}_{\perp,\mr{inst.}}(t),
\end{align}
where the additional term is
\begin{align}
    -\iu \hat{\ms{U}}_s^\dagger \frac{\del \hat{\ms{U}}_s}{\del t} = \frac{\dot{a}}{2a} \hat{\ms{D}}_\perp\,.
\end{align}
We then find that the instantaneous oscillator Hamiltonian becomes
\begin{align}
    \hat{H}_{\perp,\mr{inst.}}(t) = \hat{H}_\xi(t_0) + \frac{\dot{a}(t)}{2a(t)} \hat{\ms{D}}_\perp
    = \Omega_\perp \left( \hat{\ms{N}}_\perp + \frac{M}{2} \right) + \frac{\iu \dot{a}}{4a} \sum_a \left[ (\hat{\ms{b}}^\dagger_a)^2 - \hat{\ms{b}}_a^2 \right]\,.
\end{align}

\section{Details of the numerical calculations}
\label{app:numerics}

This appendix summarizes the discretization, the initial state, the indicators, the accuracy checks, and the parameters (Tab.~\ref{tab:num-params}) of the numerical calculations in Sec.~\ref{sec:numerics}.

\paragraph{Spatial discretization.}
To prevent differences in the discretization from being mistaken for differences in performance, QHD+Penalty was solved on the same tangential grid as QCHD.
For $N=2$ (Secs.~\ref{sec:num-sphere}--\ref{sec:num-ellipse}) we used polar coordinates $(r,\theta)$ with a uniform periodic grid in $\theta$ ($N_\theta$ points, FFT) and a uniform grid on $[0,r_{\max}]$ with a Dirichlet boundary at $r_{\max}$ ($N_r$ points).
With $\psi=\varphi/\sqrt{r}$, the kinetic operator of the angular mode $n$ is 
\begin{align}
    \frac{1}{2m}\left[-\del_r^2 + \frac{n^2-\tfrac14}{r^2}\right]\,,
\end{align}
which we diagonalized once in the sine basis, so that the kinetic propagation $e^{-\iu K_{\rm kin}\Delta t/a(t)}$ is exact and unitary.
The $n=0$ mode was expanded in a Fourier--Bessel basis regular at the origin, because an inner Dirichlet node at a small $r_{\min}$ acts on this mode as a hard disk and distorts the wave passing through the origin.
Since the wave function of QHD+Penalty bursts out to $r\simeq4$ during the initial transient, the interval $t\le3$ was computed on the enlarged domain $r\le5.6$ with the same grid spacing, after which the wave function was restricted to $r\le2.8$ (the mass outside $r=2.8$ at $t=3$ is at most $3\times10^{-3}$, for $\kappa=1$).
QCHD is a one-dimensional problem on the same $\theta$ grid, with 
\begin{align}
    \Delta_{\mc{M}_d}=R^{-2}\partial_\theta^2\,,
\end{align}
diagonalized by the FFT on the circle and 
\begin{align}
    \Delta_{\mc{M}_d}=h^{-1}\partial_\theta(h^{-1}\partial_\theta)\,,
    \qquad 
    h = \frac{\dd s}{\dd \theta}\,,
\end{align}
diagonalized once as an $N_\theta\times N_\theta$ matrix on the ellipse.
For $N=3$ (Sec.~\ref{sec:num-twocons}) we used spherical coordinates $(r,\vartheta,\varphi)$ with the polar axis along $(1,1,1)/\sqrt3$, a uniform grid in $\varphi$ ($N_\varphi$ points, FFT), Gauss--Legendre nodes in $\vartheta$ ($N_\vartheta$ points, Legendre expansion with $\ell\le N_\vartheta-1$), and a Dirichlet interval in $r$ with $u=r\psi$, where the radial operator 
\begin{align}
    \frac{1}{2m}\left[-\del_r^2 + \frac{\ell(\ell+1)}{r^2} \right]\,,
\end{align}
of each $\ell$ was diagonalized once.
QCHD uses the same $\varphi$ grid on the circle.
In all cases the time evolution is the Strang splitting 
\begin{align}
    \ee^{-\iu aW\Delta t/2}\, \ee^{-\iu K_{\rm kin}\Delta t/a}\, \ee^{-\iu aW\Delta t/2}
\end{align}
with $W=V|_{\mc{M}_d}$ for QCHD and $W=V+\kappa\mathcal{P}$ for QHD+Penalty, with $a$ evaluated at the midpoint of the time step.
The norm is conserved to round-off ($10^{-10}$), and no mask operation that rewrites the wave function was used.

\paragraph{Initial state.}

The initial state \eqref{eq:initial-tube} is uniform along $\mc{M}_d$ with respect to the arc length and Gaussian with width $\sigma_r=0.05$ in the normal directions, so that the initial constraint violation is $\langle d^2\rangle_0=\sigma_r^2/2=1.25\times10^{-3}$ for $M=1$ and $\sigma_r^2=2.5\times10^{-3}$ for $M=2$.
For QCHD the initial state is the uniform wave function on $\mc{M}_d$, and the two-dimensional pictures of QCHD in Figs.~\ref{fig:num-sphere-snap}--\ref{fig:num-ellipse-snap} are obtained by multiplying it by the frozen normal profile.

\paragraph{Indicators.}

The projection $\mathsf{P}$ in Eq.~\eqref{eq:def-Eproj} is the radial projection $r\to R$ on the circle and $r\to r_{\mc{M}_d}(\theta)$ on the ellipse, and the projection onto the nearest point of the circle in Sec.~\ref{sec:num-twocons}.
The values of $E_{\rm proj}$ and $\langle d^2\rangle$ at $t=t_{\rm end}$ depend on the grid, because at late times the tangential width of the distribution becomes smaller than the grid spacing (about 0.5 cell for QCHD and 5 cells for QHD+Penalty) and the values decrease as $(\Delta\theta)^2$.
We therefore use $t_\varepsilon$, $\langle d^2\rangle(t_{\rm ref})$, and $P_{\rm succ}$, the probability that $\theta$ (or $\varphi$) lies in the basin of the global minimum bounded by the adjacent maxima.
The reference time $t_{\rm ref}$ is the last time up to which $\langle d^2\rangle(t)$ and the density are converged under the grid refinements.
Halving the radial grid spacing changes $\langle d^2\rangle(t)$ for $\kappa=1$ by at most $0.6\%$ up to $t=10$ in Secs.~\ref{sec:num-sphere} and \ref{sec:num-ellipse} and up to $t=8$ in Sec.~\ref{sec:num-ackley}, but by $+10\%$ at $t=11$ and $+27\%$ at $t=12$ (sphere), $+5\%$ at $t=12$ (ellipse), and $+3$--$4\%$ at $t=9$ (Ackley), as the normal width approaches one radial cell.
Hence $t_{\rm ref}=10$, $10$, and $8$, while in Sec.~\ref{sec:num-twocons} the checks extend to $t_{\rm ref}=t_{\rm end}=40$.
The angular grid sets a similar limit.
The effective mass $ma(t)$ in Eq.~\eqref{eq:H-pen-num} makes a tangential velocity $\dot\theta$ correspond to the angular mode $n\simeq ma(t)R^2\dot\theta$, which for $a\simeq10^3$ ($t\simeq11.5$ with $\gamma=0.3$) and $\dot\theta\sim1$ requires $n\sim250$, beyond the Nyquist limit of $N_\theta=256$.
The thresholds in Secs.~\ref{sec:num-sphere} and \ref{sec:num-ellipse} are reached before this time, and $N_\theta=512$ was used in Sec.~\ref{sec:num-sphere} because $t_\varepsilon$ of $\kappa=1$ is close to the limit.
In Sec.~\ref{sec:num-ackley} the density flowing toward the origin for $\kappa\le2$ is no longer resolved by the radial grid after $t\simeq8$, which is why the indicators are evaluated at $t=8$.

\begin{table}[tbp]
  \centering
  {\footnotesize
  \begin{tabularx}{\textwidth}{lYYYY}
    \hline
    & $\Delta t/2$ & $N_\theta\times2$ & $\Delta r/2$ or domain & independent code \\
    \hline
    (1) sphere, quad. & $<10^{-9}$ ($\kappa=10$) & $t_\varepsilon$: 0.5\% & $t_\varepsilon$: $-0.9\%$, $\langle d^2\rangle(t_{\rm ref})$: $-0.6\%$ ($\kappa=1$) & $t_\varepsilon$: 0.5\% \\
    (2) sphere, Ackley & $P_{\rm succ}$, $t(E_{\rm proj}<0.3)$ unchanged ($\kappa=10$) & $P_{\rm succ}(8)$: $\le0.006$ ($\kappa=1,5$) & $P_{\rm succ}(8)$: $\le0.011$, $\langle d^2\rangle(8)$: $\le0.6\%$ ($\kappa=1,2,5$) & $P_{\rm succ}(8)$: 0.01, $\langle d^2\rangle(8)$: 1.5\%, $E_{\rm proj}(8)$: 1.2\% \\
    (3) ellipse, quad. & $<10^{-9}$ ($\kappa=10$) & --- & $t_\varepsilon$: $-0.3\%$, $\langle d^2\rangle(t_{\rm ref})$: $-0.5\%$ ($\kappa=1$) & $t_\varepsilon$: 0.5\% \\
    (4) two constr., quartic & $t_\varepsilon$, $\langle d^2\rangle(t_\varepsilon)$, $P_{\rm succ}$ unchanged ($\kappa=1$) & --- & domain: $t_\varepsilon$: $+1.2$, $+2.4$; $\langle d^2\rangle(t_\varepsilon)$: $+0.04$, $+0.01$ ($\kappa=1,2$) & --- \\
    \hline
  \end{tabularx}}
  \caption{Accuracy checks. Changes of the indicators under halving the time step ($\Delta t/2$), doubling the angular grid ($N_\theta$: $256\to512$), halving the radial grid spacing ($\Delta r/2$), and enlarging the computational domain (Sec.~\ref{sec:num-twocons} only), and the differences from an independent implementation. The values of $\kappa$ for which the check was made are given in parentheses.}
  \label{tab:num-checks}
\end{table}

\paragraph{Accuracy checks.}

Tab.~\ref{tab:num-checks} summarizes the changes of the indicators under the refinements of the time step and the grid and the differences from an independent implementation.
The time step was chosen such that an a posteriori estimate $\varepsilon_{\rm post}(t)$ of the local error of the Strang splitting, obtained from $|\nabla W|^2$ and $a(t)$, satisfies $\varepsilon_{\rm post}(t_{\rm end})\lesssim10^{-4}$ (the exceptions are $1.1$--$1.5\times10^{-4}$ in Sec.~\ref{sec:num-ackley}, where $\varepsilon_{\rm post}(8)\le2\times10^{-6}$ at the evaluation time, and $1.3\times10^{-4}$ for $\kappa=1,2$ in Sec.~\ref{sec:num-twocons}).
The independent implementation, written without access to our code, uses fourth-order finite differences on a nonuniform radial grid and a Pad\'e time stepping.
Both implementations find that the values at $t\gtrsim11$ with $N_\theta=256$ are resolution limited.
The $\vartheta$ resolution ($N_\vartheta=128$) and the half-time-step recomputation for $\kappa=2,5,10$ in Sec.~\ref{sec:num-twocons} have not been checked.

The computational domain requires care because the wave function of QHD+Penalty that reaches the boundary $r_{\max}$ is reflected, which accelerates the recapture and biases the comparison in favor of QHD+Penalty.
In Secs.~\ref{sec:num-sphere}--\ref{sec:num-ellipse} the two-stage domain described above is converged with respect to the domain, whereas a single domain $r\le2.8$ shortens $t_\varepsilon$ of QHD+Penalty by 4--8\%.
In Sec.~\ref{sec:num-twocons} the burst reaches the radial boundary at $t\simeq0.05$--$0.1$ for all $\kappa$, and a domain-converged calculation in three dimensions was beyond our resources.
We therefore estimated the systematic error from the change between the original domain $r\in[0.6,1.4]$ and the enlarged domain $r\in[0.4,1.8]$, computed for $\kappa=1,2$ (Tab.~\ref{tab:num-checks}).
The values quoted in the main text for $\kappa=1,2$ are those of the enlarged domain, for which 13\% of the mass still lies at $r>1.4$ at $t=t_{\rm end}$ for $\kappa=1$.
These changes are shown as the error bars in Fig.~\ref{fig:num-twocons-snap}, where for $\kappa=5,10$ the relative change of $\kappa=2$ ($+6\%$ in $t_\varepsilon$ and $+10\%$ in $\langle d^2\rangle$) is assumed.
The converged values are expected to lie in the direction of the bars, i.e., further in favor of QCHD.

\paragraph{Parameters.}
The parameters of each example are listed in Tab.~\ref{tab:num-params}.

\begin{table}[tbp]
  \centering
  {\footnotesize
  \begin{tabular}{lcccc}
    \hline
    & (1) sphere, quad. & (2) sphere, Ackley & (3) ellipse, quad. & (4) two constr., quartic \\
    \hline
    $N$, $M$ & 2, 1 & 2, 1 & 2, 1 & 3, 2 \\
    $\gamma$, $t_{\rm end}$ & 0.3, 20 & 0.3, 20 & 0.3, 20 & 0.05, 40 \\
    $\kappa$ & \multicolumn{4}{c}{1, 2, 5, 10} \\
    $\sigma_r$ & \multicolumn{4}{c}{0.05} \\
    $t_{\rm ref}$ & 10 & 8 & 10 & 40 \\
    angular grid & $N_\theta=512$ & $N_\theta=256$ & $N_\theta=256$ & $N_\varphi=192$, $N_\vartheta=128$ \\
    $N_r$ & \multicolumn{3}{c}{224 (448 for $t\le3$)} & 112 / 64 \\
    $r$ interval & \multicolumn{3}{c}{$[0,2.8]$ ($[0,5.6]$ for $t\le3$)} & $[0.4,1.8]$ / $[0.6,1.4]$ \\
    $N_t$ (QCHD) & $10^5$ & $2\times10^5$ & $10^5$ & $10^5$ \\
    $N_t$ (Penalty) & $3\times10^5$ & $6\times10^5$ & $3\times10^5$ & $(1.1, 1.5, 2.7, 4)\times10^5$ \\
    \hline
  \end{tabular}}
  \caption{Parameters of the numerical calculations. $N_t$ is the number of time steps ($\Delta t=t_{\rm end}/N_t$).
    For (4), $N_t$ is listed in the order $\kappa=1,2,5,10$, and the radial grid is given for $\kappa=1,2$ / $\kappa=5,10$.}
  \label{tab:num-params}
\end{table}

\newcommand{\arxivfont}{\rmfamily}
\bibliographystyle{yautphalpm}
\bibliography{QCHD-ref}

\end{document}